\theoremstyle{plain}
\newtheorem{theorem}{Theorem}[section]
\newtheorem{proposition}[theorem]{Proposition}
\theoremstyle{definition}
\theoremstyle{remark}
\newcommand\numberthis{\addtocounter{equation}{1}\tag{\theequation}}
\newcommand{\rev}{\color{black}}
\newcommand{\uni}{\mathcal U}
\newcommand{\reals}{\mathbb{R}_{\ge 0}}
\newcommand{\st}{\textit{s.t.}\xspace}
\newcommand{\ff}[1]{ f \left( #1 \right) }
\newcommand{\marge}[2]{\Delta \left( #1 \, \middle| \, #2 \right) }
\newcommand{\ex}[1]{\mathbb{E}\left[ #1 \right]}
\newcommand{\exc}[2]{ \mathbb{E}\left[\left. #1 \, \right| \; #2 \right] }
\newcommand{\oh}[1]{ \mathcal O \left( #1 \right) }
\newcommand{\epsi}[0]{ \varepsilon }
\newcommand{\opt}{\text{OPT}}
\newcommand{\prob}[1]{ \mathbb{P} \left[ #1 \right] }
\newcommand{\sm}{\textsc{SMCC}\xspace}
\renewcommand{\restriction}{\mathord{\upharpoonright}}
\DeclareMathOperator*{\argmax}{arg\,max}
\DeclareMathOperator*{\argmin}{arg\,min}
\newcommand{\nmon}{\textsc{SM-Gen}\xspace}
\newcommand{\mon}{\textsc{SM-Mon}\xspace}
\newcommand{\maxcut}{$\texttt{maxcut}$\xspace}
\newcommand{\revmax}{$\texttt{revmax}$\xspace}
\declaretheorem[style=definition,sibling=theorem]{lemma}
\declaretheorem[style=definition,numberwithin=section]{claim}
\newcommand{\rg}{\textsc{RandomGreedy}}
\newcommand{\ig}{\textsc{InterlaceGreedy}\xspace}
\newcommand{\itg}{\textsc{InterpolatedGreedy}\xspace}
\newcommand{\ptgone}{\textsc{ParallelInterlaceGreedy}\xspace}
\newcommand{\ptgtwo}{\textsc{ParallelInterpolatedGreedy}\xspace}
\newcommand{\ptgoneshort}{\textsc{PIG}\xspace}
\newcommand{\ptgtwoshort}{\textsc{PItG}\xspace}
\newcommand{\ts}{\textsc{ThreshSeq}\xspace}
\newcommand{\dist}{\textsc{Distribute}\xspace}
\newcommand{\prefix}{\textsc{Prefix-Selection}\xspace}
\newcommand{\update}{\textsc{Update}\xspace}
\newcommand{\randomgreedy}{\textsc{RandomGreedy}\xspace}
\newcommand{\fast}{\textsc{Fast}\xspace}
\newcommand{\lspgb}{\textsc{LS+PGB}\xspace}
\newcommand{\parskp}{\textsc{ParSKP}\xspace}
\newcommand{\parssp}{\textsc{ParSSP}\xspace}
\newcommand{\frg}{\textsc{FastRandomGreedy}\xspace}
\newcommand{\anm}{\textsc{AdaptiveNonmonotoneMax}\xspace}
\newcommand{\unc}{\textsc{UnconstrainedMax}\xspace}
\icmltitlerunning{Sublinear Adaptive Algorithm for Submodular Maximization}
\begin{document}

\twocolumn[
\icmltitle{Breaking Barriers: Combinatorial Algorithms for Non-Monotone Submodular \\
 Maximization with Sublinear Adaptivity and $1/e$ Approximation}




\icmlsetsymbol{equal}{*}

\begin{icmlauthorlist}
\icmlauthor{Yixin Chen}{yyy}  
\icmlauthor{Wenjing Chen}{yyy}
\icmlauthor{Alan Kuhnle}{yyy}
\end{icmlauthorlist}

\icmlaffiliation{yyy}{Department of Computer Science \& Engineering, Texas A\&M University, College Station, Texas, USA}

\icmlcorrespondingauthor{Alan Kuhnle}{kuhnle@tamu.edu}

\icmlkeywords{Machine Learning, ICML}

\vskip 0.3in
]



\printAffiliationsAndNotice{}  

\begin{abstract}
With the rapid growth of data in modern applications, parallel algorithms 
for maximizing non-monotone submodular functions have gained significant attention.
In the parallel computation setting,
\color{black}
the state-of-the-art approximation ratio of $1/e$ is 
achieved by a continuous algorithm~\citep{Ene2020a} 
with adaptivity $\oh{\log(n)}$.
In this work, we focus on size constraints 
and present the first combinatorial algorithm matching this bound
-- a randomized parallel approach achieving $1/e-\epsi$ approximation ratio.
This result bridges the gap between continuous and combinatorial approaches for this problem.
As a byproduct, we also develop a simpler
$(1/4-\epsi)$-approximation algorithm  
with high probability ($\ge 1-1/n$).
Both algorithms achieve $\oh{\log(n)\log(k)}$ adaptivity and 
$\oh{n\log(n)\log(k)}$ query complexity.
Empirical results show our algorithms achieve competitive objective values, 
with the $(1/4-\epsi)$-approximation algorithm particularly efficient in queries.
\end{abstract}

\section{Introduction}
\textbf{Submodular Optimization.} Submodular optimization is a powerful framework for solving combinatorial optimization problems that exhibit diminishing returns~\citep{Nemhauser1978,feige1995approximating,cornuejols1977exceptional}. In a monotone setting, adding more elements to a solution always increases its utility, but at a decreasing rate. On the other hand, non-monotone objectives may have elements that, when added, can reduce the utility of a solution. This versatility makes submodular optimization widely applicable across various domains. For instance, in data summarization~\citep{DBLP:conf/aaai/MirzasoleimanJ018,DBLP:conf/nips/TschiatschekIWB14} and feature selection~\citep{DBLP:journals/corr/abs-2202-00132}, it helps identify the most informative subsets efficiently. In social network analysis~\citep{DBLP:conf/kdd/KempeKT03}, it aids in influence maximization by selecting a subset of individuals to maximize information spread. Additionally, in machine learning~\citep{DBLP:conf/acl/BairiIRB15,DBLP:conf/nips/ElenbergDFK17,DBLP:conf/iclr/PrajapatMZ024}, submodular functions are used for diverse tasks like active learning, sensor placement, and diverse set selection in recommendation systems. These applications demonstrate the flexibility and effectiveness of submodular optimization in addressing real-world problems with both monotone and non-monotone objectives. 

\textbf{Problem Definition and Greedy Algorithms.} In this work, we consider the size-constrained maximization of a submodular function: given a submodular function $f$ on ground set of size $n$, and given an integer $k$, find $\argmax_{S \subseteq \uni, |S| \le k} f(S)$. If additionally $f$ is assumed to be monotone, we refer to this problem as \mon{}; otherwise, we call the problem \nmon{}. 
For \mon{}, a standard greedy algorithm gives the optimal\footnote{Optimal in polynomially many queries to $f$ in the value query model \citep{Nemhauser1978a}.} approximation ratio
of $1 - 1/e \approx 0.63$ \citep{Nemhauser1978} in
at most $kn$ queries\footnote{Typically, queries to $f$ dominate other parts of the computation, so in this field time complexity of an algorithm is usually given as oracle complexity to $f$.} to $f$.
In contrast, standard greedy can't achieve any constant
approximation factor in the non-monotone case; however,
an elegant, randomized variant of greedy, the \randomgreedy{}
of \citet{Buchbinder2014a} obtains the same greedy
ratio (in expectation) of $1-1/e$ for monotone objectives, and $1/e \approx 0.367$ for the general,
non-monotone case, also in at most $kn$ queries. 
However, as modern instance sizes in applications have become very large,
$kn$ queries is too many. In the worst case, $k = \Omega(n)$, and
the time complexity of these greedy algorithms becomes quadratic.
To improve efficiency, \citet{DBLP:journals/mor/BuchbinderFS17} 
leverage sampling technique to get \frg,
reducing the query complexity to $\oh{n}$ when
$k \ge 8\epsi^{-2}\log(2\epsi^{-1})$.

\begin{table*}[ht]
  \centering \scriptsize
  \caption{Theoretical comparison of greedy algorithms and parallel algorithms with sublinear adaptivity.}
    \begin{tabular}{clll}
    \toprule
    Algorithm  & Ratio& Queries & Adaptivity \\
    \midrule
    \frg~\citep{DBLP:journals/mor/BuchbinderFS17} & \boldmath{$1/e-\epsi$} & $\oh{n}$ & $k$ \\
    \textsc{ANM}~\citep{fahrbach2018non}& $0.039 - \epsi$
     & \boldmath{$\oh{{n \log (k)}}$} & \boldmath{$\oh{\log (n)}$}\\
     \citet{Ene2020a} & \boldmath{$1/e - \epsi$} & $\oh{k^2n\log^2(n)}\ddagger$ & \boldmath{$\oh{\log (n)}$}\\
    {\textsc{ParKnapsack}\citep{amanatidis2021submodular}}& $0.172-\epsi$ 
    & $\oh{kn\log(n)\log(k)} \| \oh{n\log(n)\log^2(k)}$
    & {\boldmath{$\oh{\log(n)}$}} $\|\oh{\log(n)\log(k)}$ \\
    AST \citep{Chen2024} & $1/6 - \epsi $ & \boldmath{$\oh{n \log (k)}$} & \boldmath{$\oh{\log(n)}$} \\
    ATG \citep{Chen2024} & $0.193 - \epsi$ & \boldmath{$\oh{n \log (k)}$} & $\oh{\log(n) \log(k)}$ \\
    \parskp \citep{Cui2023}& $0.125-\epsi$ & $\oh{kn \log^2(n)} \| \oh{n\log^2(n)\log (k) }$ & $\oh{\log(n)} \| \oh{ \log(n) \log (k) } $ \\
    \parssp \citep{Cui2023}& $0.25-\epsi$ & $\oh{kn\log^2(n)} \| \oh{n\log^2(n)\log (k) }$ & $\oh{\log^2(n)} \| \oh{ \log^2(n) \log (k) } $ \\
    \midrule
    \ptgone (Alg.~\ref{alg:ptgone}) & $0.25-\epsi \dagger$  & $\oh{n\log(n)\log(k)}$& $\oh{\log(n)\log(k)}$\\
    \ptgtwo (Alg.~\ref{alg:ptgtwo}) & \boldmath{$1/e - \epsi $} & $\oh{n\log(n)\log(k)}$ & $\oh{\log(n)\log(k)}$ \\
    \bottomrule
    \end{tabular}
    \parbox{\linewidth}{\small $\ddagger$ The parallel algorithm in \citet{Ene2020a} queries to the continuous oracle.}
    \parbox{\linewidth}{\small $\dagger$ The approximation ratio is achieved with high probability (at least $1-1/n$).}
  \label{table:cmp}
  \vspace*{-1.5em}
\end{table*}  

\textbf{Parallelizable Algorithms.}
In addition to reducing the number of queries,
recently, much work has focused on developing parallelizable algorithms for submodular optimization.
One measure of
parallelizability is the
\textit{adaptive complexity} of an algorithm. That is,
the queries to $f$ are divided into adaptive rounds, where within each round the set queried may only depend
on the results of queries in previous rounds;
the queries within each round may be arbitrarily parallelized. Thus, the lower the adaptive complexity, the more
parallelizable an algorithm is. 
Although the initial algorithms with sublinear adaptivity were
impractical, for the monotone case, these works culminated in two practical algorithms:
\fast{} \citep{Breuer2020} and \lspgb{} \citep{Chen2021}, both of which achieve nearly
the optimal ratio in nearly linear time and nearly optimal adaptive rounds. 
For the nonmonotone case, 
the best approximation ratio achieved in sublinear adaptive rounds is $1/e$~\citep{Ene2020a}.
However, this algorithm queries to a continuous oracle,
which needs to be estimated through a substantial number of queries to the original set function oracle.
Although practical, sublinearly adaptive algorithms have also been
developed, the best ratio achieved in nearly linear time is nearly $1/4$ \citep{Cui2023},
significantly worse than the state-of-the-art\footnote{The best known ratio in polynomial time was very recently improved from close to $1/e$ to $0.401$ \citep{Buchbinder2023}},
this $1/4$ ratio also stands as the best even for superlinear time parallel algorithms.
Further references to parallel algorithms and their theoretical guarantees are provided in Table~\ref{table:cmp}.

\begin{algorithm}[ht]
    \KwIn{evaluation oracle $f:2^{\uni} \to \reals$, constraint $k$}
    \Init{$a_0 \gets \argmax_{x\in \uni} \ff{x}$,
    $A\gets B \gets \emptyset$, $D \gets E \gets \{a_0\}$,
    add $2k$ dummy elements to the ground set}
    \For{$i\gets 0$ to $k-1$}{
        $A\gets A+\argmax_{x\in \uni\setminus \left(A\cup B\right)} \marge{x}{A}$\;
        $B\gets B+\argmax_{x\in \uni\setminus \left(A\cup B\right)} \marge{x}{B}$\;
    }
    \For{$i\gets 1$ to $k-1$}{
        $D\gets D+\argmax_{x\in \uni\setminus \left(D\cup E\right)} \marge{x}{D}$\;
        $E\gets E+\argmax_{x\in \uni\setminus \left(D\cup E\right)} \marge{x}{E}$\;
    }
    \Return{$C\gets \argmax\{\ff{A}, \ff{B}, \ff{D}, \ff{E}\}$}
    \caption{$\ig(f,k)$: The \ig Algorithm~\citep{DBLP:conf/nips/Kuhnle19}}
    \label{alg:ig}
\end{algorithm}
\textbf{Greedy Variants for Parallelization.}
To enhance the approximation ratios for combinatorial sublinear adaptive algorithms, 
it is crucial to develop practical \color{black} parallelizable algorithms that serve as universal frameworks for such parallel approaches. 
Among existing methods, \ig~\citep{DBLP:conf/nips/Kuhnle19}, 
with an approximation ratio of $1/4$, and \itg\citep{DBLP:conf/kdd/ChenK23},
with an expected approximation ratio of $1/e$, 
have emerged as promising candidates 
due to their unique and deterministic interlacing greedy procedures. 

\rev
\ig (Alg.~\ref{alg:ig}) operates by first guessing whether the maximum singleton 
$a_{0} = \argmax_{x\in \uni} \ff{x}$ is contained within the optimal solution $O$.
Based on this hypothesis, the algorithm initializes two solution pools differently:
if $a_{0}$ is not in $O$, the pools begin empty ($A = B= \emptyset$);
otherwise, they are initialized with $a_{0}$ ($D = E = \{a_0\}$).
The algorithm then proceeds by alternately selecting elements for each pool 
in a greedy fashion, ultimately returning the best solution found.
We provide its theoretical guarantees below.
\begin{theorem}
Let $f:2^{\uni} \to \reals$ be submodular, let $k\in \uni$,
let $O = \argmax_{|S|\le k} \ff{S}$,
and let $C = \ig(f, k)$. Then
\[\ff{C} \ge \ff{O}/4,\]
and \ig makes $\oh{kn}$ queries to $f$.
\end{theorem}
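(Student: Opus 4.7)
The plan is to establish the bound $\ff{C}\ge \ff{O}/4$ by a case analysis on whether the maximum-singleton element $a_0$ lies in an optimum set $O$, using a different interlaced pair produced by \ig in each case. In either case the strategy is identical: isolate two disjoint interlaced-greedy solutions $X,Y$ among those the algorithm constructs for which the three core inequalities
\[
2\ff{X}\ge \ff{O\cup X},\qquad 2\ff{Y}\ge \ff{O\cup Y},\qquad \ff{O\cup X}+\ff{O\cup Y}\ge \ff{O}
\]
hold, then sum to conclude $\max\{\ff{X},\ff{Y}\}\ge \ff{O}/4$. Since the algorithm returns the best among all intermediate sets $\{A_i,B_i,D_i,E_i : i\in [k+1]\}$, this suffices.

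In Case 1 ($a_0\notin O$), I take $(X,Y)=(A,B)$. The chain inequality is an immediate submodular consequence of $A\cap B=\emptyset$ and $f\ge 0$ via $\ff{O\cup A}+\ff{O\cup B}\ge \ff{O\cup A\cup B}+\ff{O\cup(A\cap B)}\ge \ff{O}$. For the first two bounds, for each $o\in O\setminus A$ (respectively $O\setminus B$) I would construct an injective map into the $k$ iteration indices so that at the chosen step $o$ was still a valid candidate for the $a_j$-argmax (respectively $b_j$-argmax); the greedy rule then gives $\marge{o}{A_{j(o)}}\le \marge{a_{j(o)}}{A_{j(o)}}$, and telescoping $\sum_j \marge{a_j}{A_j}=\ff{A}$ together with the monotonicity of marginal gains under submodularity completes the argument. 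The map sends each $b_j\in O\cap B$ to step $j$; for $a_j\in O\cap A$ in the $B$-analysis it sends $a_j\mapsto j-1$, which is well-defined only for $j\ge 1$ --- this is exactly where the hypothesis $a_0\notin O$ enters.

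In Case 2 ($a_0\in O$), I switch to $(X,Y)=(D,E)$, both of which already contain $a_0$. The chain inequality now follows from $D\cap E=\{a_0\}\subseteq O$, giving $\ff{O\cup D}+\ff{O\cup E}\ge \ff{O\cup(D\cap E)}=\ff{O}$. The bound $2\ff{D}\ge \ff{O\cup D}$ goes through as in Case 1, since $|O\setminus D|\le k-1$ exactly matches the $k-1$ iterations of the $(D,E)$-loop. The genuine new difficulty is $d_1$ in the $E$-analysis: the previous-step shift fails at $j=1$, so if $d_1\in O$ it cannot be matched to any iteration. I would bound this stranded term directly by $\marge{d_1}{E}\le \marge{d_1}{\{a_0\}}=\ff{\{a_0,d_1\}}-\ff{\{a_0\}}$, and then use the max-singleton property of $a_0$ together with pairwise submodularity, $\ff{\{a_0,d_1\}}\le \ff{\{a_0\}}+\ff{\{d_1\}}\le 2\ff{\{a_0\}}$, to obtain $\marge{d_1}{E}\le \ff{\{a_0\}}$. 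This precisely absorbs the $\ff{\{a_0\}}$-deficit produced by telescoping only $k-1$ (rather than $k$) marginal gains, yielding $2\ff{E}\ge \ff{O\cup E}$ whether or not $d_1\in O$.

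I expect the hardest step to be this $d_1\in O$ subcase in Case 2: it is the single point where the raw interlaced-greedy machinery falls short and one must invoke the maximum-singleton property of $a_0$ to bound a stranded marginal by $\ff{\{a_0\}}$. Everything else reduces to careful counting for the injections (which needs only that $|O\setminus X|$ is bounded by the number of iterations) and a one-line submodular chain inequality. The query bound is immediate: each of the $2k-1$ iterations across both loops performs $\oh{n}$ oracle calls for a single argmax, giving $\oh{kn}$ queries in total.
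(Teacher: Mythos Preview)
Your proposal is correct and follows essentially the same approach the paper sketches in Section~\ref{sec:greedy-1/4} (which is the original analysis of \ig from \citet{DBLP:conf/nips/Kuhnle19}): a case split on whether $a_0\in O$, using the pair $(A,B)$ when $a_0\notin O$ and the pair $(D,E)$ when $a_0\in O$, together with the three inequalities $\ff{O\cup X}+\ff{O\cup Y}\ge\ff{O}$ and $2\ff{X}\ge\ff{O\cup X}$, $2\ff{Y}\ge\ff{O\cup Y}$. The paper itself does not give a full proof of this theorem (it is quoted as prior work in Appendix~\ref{apx:pseudocode}), and your handling of the stranded $d_1\in O$ term via $\marge{d_1}{E}\le\ff{\{a_0\}}$ is a clean way to close the gap the paper alludes to when it says ``the element $a_0$ is treated separately.''

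It is worth noting, though, that the paper's actual contribution in this section is precisely to \emph{avoid} this case analysis: Theorem~\ref{thm:gdone} proves the same $1/4$ ratio for the simplified Alg.~\ref{alg:gdone} (a single interlaced pass, no $(D,E)$ round) via the blending technique of Proposition~\ref{prop:blend}. There the key move is different: rather than case-splitting on $a_0$, one bounds $\marge{O}{A}$ using a prefix of $B$ that lies in $O$ (and symmetrically), which makes the $a_0$ issue disappear. Your argument is the ``classical'' one the paper is improving upon, not the paper's new technique---but for the stated theorem about the original \ig, it is the right proof.
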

The key to this guarantee lies in the alternating selection strategy 
between the two pools. 
For any disjoint pools $\{S,T\}$, it holds that
\begin{align*}
\ff{S} + \ff{T} \ge \ff{O\cup S} + \ff{O\cup T} \ge \ff{O},
\end{align*}
where the first inequality follows from the alternating selection strategy,
and the second inequality follows from
submodularity and monotonicity.

\color{black}
Building on this, \itg\citep{DBLP:conf/kdd/ChenK23} generalizes the approach by maintaining $\ell$ sets, 
each containing $k/\ell$ elements at each iteration
\rev
(see pseudocode and theoretical guarantees in Appendix~\ref{apx:pseudocode}).
\color{black}
However, in their original formulations, both algorithms require an initial step to guess whether the first element added to each solution belongs to the optimal set $O$.
This results in repeated for loop with different initial values in \ig
and a success probability of 
$(\ell+1)^{-\ell}$.


\subsection{Contributions}
\rev
\begin{figure}[h]
\centering
\includegraphics[width=\linewidth]{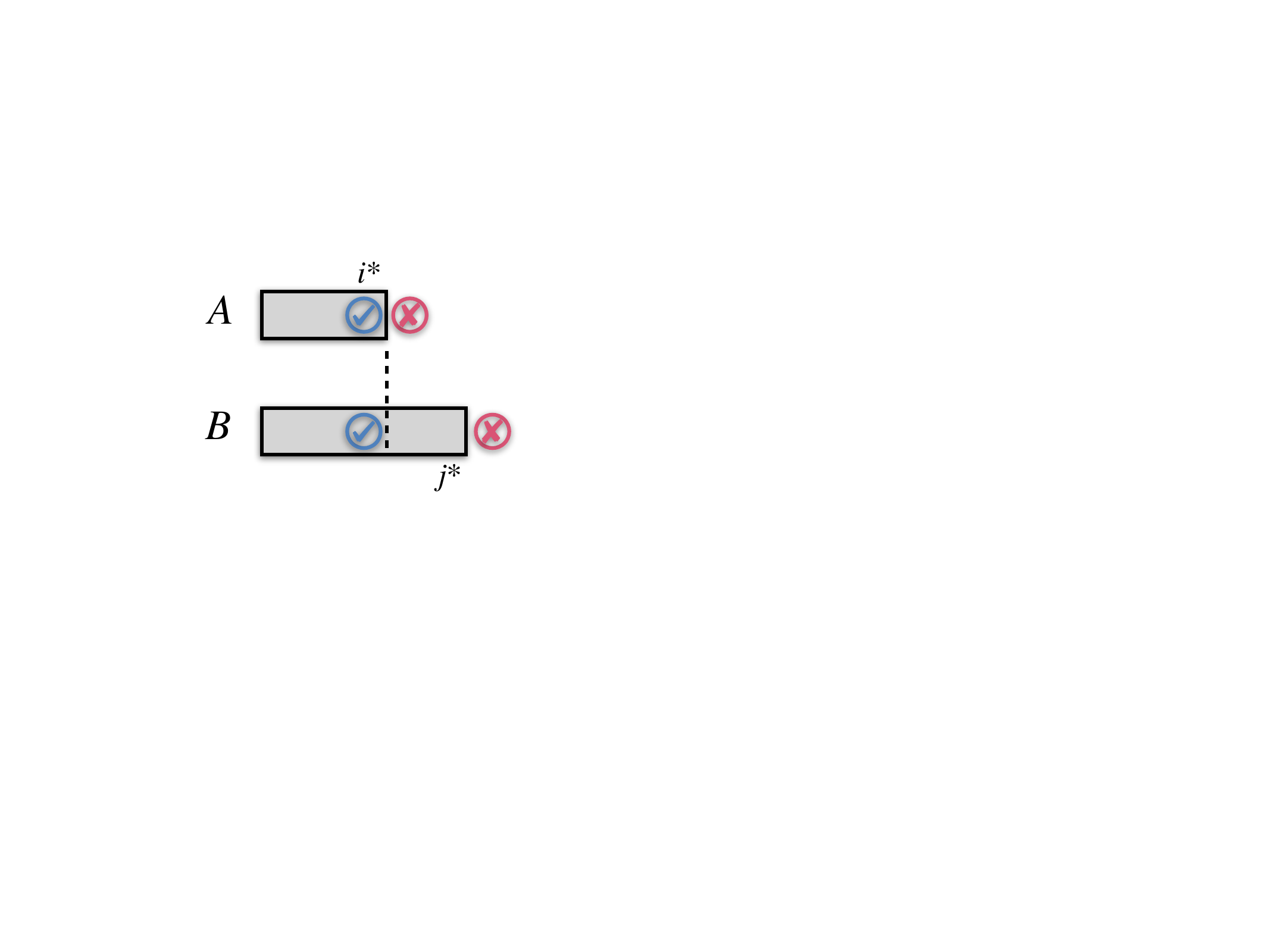}
\caption{This figure illustrates strategies employed by each algorithm. 
The three leftmost algorithms achieve an asymptotic approximation ratio of $1/4$,
while the three rightmost algorithms attain an asymptotic ratio of $1/e$.}
\label{fig:relation}
\end{figure}
\color{black}
\textbf{Technical Contributions: Blended Marginal Gains Strategy.}
Our first contribution is a novel method of analysis
for interlaced greedy-based algorithms,
such as \itg\citep{DBLP:conf/kdd/ChenK23}.
The core innovation introduces multiple upper bounds on the marginal gain 
of the same element regarding the solution, 
offering greater flexibility in the analysis.
This approach yields three notable advances:
First, we present Alg.~\ref{alg:gdtwo} (Section~\ref{sec:gd})
with $\oh{kn}$ query complexity, 
achieving an expected approximation ratios of $1/e-\epsi$,
eliminating the probabilistic guessing step that limited prior method to
$(1+\ell)^{-\ell}$ success probability.
Second, as a byproduct, we also provide a simplified version of
\ig~\citep{DBLP:conf/nips/Kuhnle19} that preserves its theoretical guarantees 
in Appendix~\ref{apx:greedy-1/4}.
Most importantly, these simplified variants establish 
the theoretical foundation for parallel algorithms.
By removing branching dependencies and probabilistic guesswork, 
our framework enables the first efficient parallelization 
of these interlaced greedy approaches while preserving their approximation guarantees.

\textbf{Parallel Algorithms with Logarithmic Adaptivity and 
Nearly-linear Query Complexity
Using a Unified Subroutine.}
In Section~\ref{sec:ptg},
we present two sublinear adaptive algorithms, 
\ptgone (\ptgoneshort) and \ptgtwo (\ptgtwoshort)
that share a unified subroutine, \ptgoneshort.
The core innovation of \ptgoneshort lies in its novel threshold sampling procedure, 
which simultaneously preserves the crucial alternating selection property 
of interlaced greedy methods
while enabling efficient parallel implementation.

Like prior parallel algorithms,
\ptgoneshort maintains descending thresholds for each solution, 
adding elements whose average marginal gain exceeds the current threshold.
However, \ptgoneshort introduces two critical modifications 
to maintain the interlaced greedy structure:
1) strict synchronization of batch sizes across all $\ell$ parallel solutions,
and 2) coordinated element selection to maintain sufficient marginal gain
for each solution.

This design achieves three fundamental properties.
First, it preserves the essential alternating selection
property of the interlaced greedy methods.
Second, through threshold sampling, it geometrically reduces
the size of candidate sets - crucial for achieving sublinear adaptivity.
Third, its efficient batch selection 
ensures each added batch provides sufficient marginal contribution to the solution.
Together, these properties allow \ptgoneshort to match
the approximation guarantees of the vanilla interlaced greedy method
while achieving parallel efficiency.

Leveraging this unified framework, a single call to \rev \ptgoneshort \color{black} achieves an approximation ratio of $(1/4-\epsi)$ with high probability.
Repeated calls to \rev \ptgoneshort \color{black} yields \ptgtwoshort,
further enhance the approximation ratio to an expected $(1/e-\epsi)$.
Both algorithms achieve $\oh{\log(n) \log(k)}$ adaptivity
and $\oh{n\log(n)\log(k)}$ query complexity,
making them efficient for large-scale applications.

\textbf{Empirical Evaluation.}
Finally, we evaluate the performance of our parallel algorithms
in Section~\ref{sec:exp} and Appendix~\ref{apx:exp}
across two applications and four datasets.
The results demonstrate that our algorithms achieve competitive objective values.
Notably, \ptgone outperforms other algorithms in terms of query efficiency,
highlighting its practical advantages.


\section{Preliminary}
\textbf{Notation.}
We denote the marginal gain of adding $A$ to $B$ by $\marge{A}{B} = \ff{A\cup B} - \ff{B}$.
For every set $S\subseteq U$ and an element $x\in \uni$,
we denote $S\cup \{x\}$ by $S+x$ and $S\setminus \{x\}$ by $S-x$.

\textbf{Submodularity.}
A set function $f:2^\uni\to \reals$ is submodular, 
if $\marge{x}{S}\ge \marge{x}{T}$ for all $S\subseteq T\subseteq \uni$
and $x\in \uni\setminus T$,
or equivalently, for all $A, B\subseteq \uni$,
it holds that $\ff{A} + \ff{B}\ge \ff{A\cup B} + \ff{A\cap B}$.
With a size constraint $k$,
let $O =\argmax_{S\subseteq \uni, |S| \le k} \ff{S}$.

In Appendix~\ref{apx:prop}, we provide several key propositions
derived from submodularity that streamline the analysis.

\textbf{Organization.}
In Section~\ref{sec:gd}, the blending technique is introduced
and applied to \ig and \itg,
with detailed analysis provided in Appendix~\ref{apx:greedy-1/4}
and~\ref{apx:greedy-1/e}.
Section 4 discusses their fast versions with
pseudocodes and comprehensive analysis in
Appendix~\ref{apx:tg}.
Subsequently, Section~\ref{sec:ptg} delves into our sublinear adaptive algorithms, 
with further technical details and proofs available in Appendix~\ref{apx:ptg}.
Finally, we provide the empirical evaluation in Section~\ref{sec:exp},
with its detailed setups and additional results
in Appendix~\ref{apx:exp}.

\section{A Parallel-Friendly Greedy Variant via Blending Technique}\label{sec:gd}
In this section, we present a simplified and practical variant
of \itg~\citep{DBLP:conf/kdd/ChenK23} that retains its theoretical guarantees 
while improving its success probability from $(\ell+1)^{-\ell}$ to $1$.
This simplification serves as a key step toward developing 
efficient parallel submodular maximization algorithms (Section~\ref{sec:ptg}). 
As a byproduct, we also show that \ig~\citep{DBLP:conf/nips/Kuhnle19} 
can be simplified (Appendix~\ref{apx:greedy-1/4})
and parallelized (Section~\ref{sec:ptg}).

\subsection{An Overview of Prior Work and Its Limitation}
\rev
\ig (Alg.~\ref{alg:ig}) employs a two-branch initialization strategy conditioned on
differently based on the guessing of whether the maximum singleton 
$a_{0} = \argmax_{x\in \uni} \ff{x}$ is contained within the optimal solution $O$.
The algorithm initializes two solution pools either as empty sets if $a_{0} \not \in O$,
or both containing $a_{0}$ otherwise.
It then alternates between two greedy procedures, 
growing solutions pools one by one until they reach size $k$.
\color{black}
The algorithm returns the best solution, achieving a $1/4-\epsi$ approximation.
This structure is necessary to handle the uncertainty about 
whether the maximum singleton $a_0$ coincides with $o_{\max} = \argmax_{o\in O}\ff{o}$.

Later on, \itg \rev (Alg.~\ref{alg:itg}, Appendix~\ref{apx:pseudocode}) \color{black} 
generalizes this idea by interlacing $\ell$ greedy procedures,
each adding $k/\ell$ elements to intermediate solutions.
Critically, the algorithm must account for uncertainty in the position of $o_{\max}$.
Since $o_{\max}$ could be one of the top-$\ell$
elements in marginal gain (or none of them), the algorithm must consider
$\ell+1$ possible cases. To handle this,
it maintains $\ell+1$ \rev \textit{solution families} (a set of multiple potential solutions), \color{black} 
each corresponding to a distinct guess about where $o_{\max}$ appears.
The algorithm then returns a random solution from the \rev solution families\color{black}, achieving a
$1/e-\epsi$ approximation with success probability $(\ell+1)^{-\ell}$.
To guarantee success probability $1$, 
all possible solution branches must be checked, 
increasing the query complexity by a prohibitive $\oh{(\ell+1)^{\ell}}$ factor.
While asymptotically efficient ($\oh{nk}$ queries),
\itg suffers from either low success probability or impractical computational overhead, 
making it unsuitable for parallelization.

\subsection{Motivation for Simplification}
The primary challenge in parallelizing \itg 
stems from its inherent branching structure - specifically, 
\rev
the need to make $\ell+1$ guesses,
each resulting in an independent construction path for solution pools.
\color{black}
This complex architecture naturally raises a fundamental question: 
\textit{Can we develop an alternative analysis framework that 
eliminates the need to explicitly guess the position of $o_{\max}$?}

To address this, we introduce a novel blended marginal gains strategy.
Our key insight demonstrates that by carefully tracking combined marginal gains 
across iterations, we can effectively work with just 
a single \rev solution family \color{black} of pairwise disjoint solutions 
while preserving all theoretical guarantees.

This simplification not only boosts the success probability to $1$ 
but also enables a more efficient parallel implementation. 
We present the pseudocode of the simplified \itg in Alg.~\ref{alg:gdtwo},
with theoretical guarantees in Theorem~\ref{thm:gdtwo}.
\begin{algorithm}[ht]
    \KwIn{evaluation oracle $f:2^{\uni} \to \reals$, constraint $k$, size of solution $\ell$, error $\epsi$}
    \Init{$G\gets \emptyset$, $V\gets \uni$, $m \gets \left\lfloor\frac{k}{\ell}\right\rfloor$, add $2k$ dummy elements to the ground set.}
    \For{$i\gets 1$ to $\ell$}{
        $A_{l}\gets G, \forall l \in [\ell]$\;
        \For{$j\gets 1$ to $m$}{\label{line:gdtwo-for-2-start}
            \For{$l\gets 1$ to $\ell$}{
                $a \gets \argmax_{x\in V}\marge{x}{A_{l}}$\;
                $A_{l}\gets A_{l}+a$, $V\gets V-a$ \hspace*{-0.7em}\;
            }
        }\label{line:gdtwo-for-2-end}
        $G\gets$ a random set in $\{A_l\}_{l\in [\ell]}$\;
    }
    \Return{$G$}
    \caption{A simplified \itg with a randomized $1/e$ approximation ratio and $\oh{ nk\ell }$ query complexity. }
    \label{alg:gdtwo}
\end{algorithm}
\begin{restatable}{theorem}{thmgdtwo}\label{thm:gdtwo}
With input instance $(f, k, \ell, \epsi)$
such that $\ell =\oh{\epsi^{-1}} \ge \frac{2}{e\epsi}$ and $k \ge \frac{2(e\ell-2)}{e\epsi-\frac{2}{\ell}}$,
Alg.~\ref{alg:gdtwo} returns a set $G$ with $\oh{kn/\epsi}$ queries
such that $\ex{\ff{G}} \ge \left(1/e-\epsi\right) \ff{O}$.
\end{restatable}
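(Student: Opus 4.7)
The plan is to mimic the classical analysis of \randomgreedy, but with per-phase (rather than per-element) randomization over $\ell$ interlaced-greedy chains, and to bound marginal gains using an $\ell$-set generalization of the blending technique of Proposition~\ref{prop:blend}. Let $G_i$ denote the value of $G$ at the end of outer iteration $i$ (so $G_0 = \emptyset$ and the algorithm returns $G_\ell$), and let $A_l^{(i)} = G_{i-1} \cup S_l^{(i)}$ denote the $l$-th inner set built during iteration $i$. Because the available set $V$ is decremented after every addition, the suffixes $S_1^{(i)}, \dots, S_\ell^{(i)}$ are pairwise disjoint with common size $m = \lfloor k/\ell \rfloor$.

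The first step is an averaged greedy improvement bound. Within a single outer iteration, apply the blending idea of Proposition~\ref{prop:blend} in a pairwise fashion across the $\ell$ chains: for each ordered pair $(l,l')$, split $O$ at the prefix where it first leaves $A_{l'}^{(i)}$ and bound $\marge{O}{A_l^{(i)}}$ by $f(\text{prefix of } A_{l'}^{(i)}) - \ff{G_{i-1}}$ plus a residual marginal that the interlaced greedy rule controls. Summing the resulting inequalities over the $\ell^2$ pairs and dividing by $\ell$ yields, after submodular algebra,
\begin{equation*}
\frac{1}{\ell}\sum_{l=1}^{\ell}\ff{A_l^{(i)}} \;\ge\; \ff{G_{i-1}} + \frac{1}{\ell}\bigl(\ff{O\cup G_{i-1}} - \ff{G_{i-1}}\bigr) \;-\; \epsi_0\,\ff{O},
\end{equation*}
where $\epsi_0$ absorbs the slack from the rounding $m = \lfloor k/\ell\rfloor$ and from the $2k$ dummy elements. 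Since $G_i$ is drawn uniformly from $\{A_l^{(i)}\}_{l \in [\ell]}$, this is exactly $\ex{\ff{G_i}\mid G_{i-1}}$.

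Next, I would establish the non-monotone residual recurrence $\ex{\ff{O\cup G_i}\mid G_{i-1}} \ge (1 - 1/\ell)\,\ff{O\cup G_{i-1}}$. Because $S_1^{(i)},\dots,S_\ell^{(i)}$ are disjoint, any fixed $o \in O\setminus G_{i-1}$ belongs to at most one of them, so $\prob{o \in G_i\setminus G_{i-1}\mid G_{i-1}} \le 1/\ell$; the inequality then follows from submodularity by the standard non-monotone argument of Buchbinder--Feldman--Naor--Schwartz. Iterating gives $\ex{\ff{O\cup G_i}} \ge (1-1/\ell)^i\ff{O}$. Combined with the first recurrence and telescoping $g_i := \ex{\ff{G_i}}$ against $o_{i-1} := \ex{\ff{O\cup G_{i-1}}}$,
\begin{equation*}
g_\ell \;\ge\; \frac{1}{\ell}\sum_{i=0}^{\ell-1}o_i - \ell\epsi_0\ff{O} \;\ge\; \left(1 - \tfrac{1}{\ell}\right)^{\ell-1}\ff{O} - \ell\epsi_0\ff{O}.
\end{equation*}
Plugging in $\ell \ge 2/(e\epsi)$ makes the first term at least $\ff{O}/e - \epsi\ff{O}/2$, and the lower bound on $k$ in the hypothesis is precisely what is needed so that the combined rounding/dummy slack $\ell\epsi_0$ is bounded by $\epsi/2$, delivering the claimed $(1/e-\epsi)$ ratio. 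The query count is $\ell$ outer iterations times $m\ell$ inner greedy steps times $\oh{n}$ per argmax, i.e., $\oh{nk\ell} = \oh{nk/\epsi}$.

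The main obstacle is the $\ell$-set generalization of the blending lemma: the two-chain case of Proposition~\ref{prop:blend} has a clean case split (Fig.~\ref{fig:gdone}), but with $\ell$ chains one must carefully choose, for each pair $(l,l')$, the prefix of $A_{l'}^{(i)}$ at which to cut $O$ so that the pairwise bounds align when averaged, and then verify that the interlacing of the inner loop (greedy over $A_l$ after $V$ has been decremented by the additions to $A_1,\dots,A_{l-1}$ at the current round) still furnishes the required control on the residual marginal. A secondary but non-trivial accounting task is pinning down $\epsi_0$ precisely enough that $\ell\epsi_0 \le \epsi$ under the stated lower bound on $k$.
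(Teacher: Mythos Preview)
Your high–level plan matches the paper's: establish a per-phase progress recurrence via blending, establish the degradation recurrence $\ex{\ff{O\cup G_i}} \ge (1-1/\ell)\ex{\ff{O\cup G_{i-1}}}$ from disjointness of the $\ell$ suffixes, and solve. The second recurrence and the query count are handled exactly as in the paper.

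The gap is in the blending step. Applying Proposition~\ref{prop:blend} with the \emph{full} set $O$ and a prefix of a single chain $A_{l'}$ leaves a residual $\marge{O\setminus (\text{prefix of }A_{l'})}{A_l}$ that still contains $O$-elements sitting in chains $A_{l''}$ with $l'' \ne l,l'$; such elements need not be available when $A_l$ makes its greedy choices, so the interlaced greedy rule gives you no control over their marginals. Summing over $\ell^2$ ordered pairs does not average this away. The paper resolves exactly this obstacle by first \emph{partitioning} $O\setminus G_{i-1}$ into $O_1,\dots,O_\ell$ with $O_{l'} \cap A_j = \emptyset$ for all $j\ne l'$ (Claim~\ref{claim:par-A}); then each piece $O_{l'}$ interacts with only one chain, so the two-chain blending argument applies cleanly to every unordered pair $\{l_1,l_2\}$ (Lemma~\ref{lemma:par-A}), yielding $\sum_l \marge{O}{A_l} \le \ell(1+1/m)\sum_l\marge{A_l}{G_{i-1}}$. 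This is the missing idea in your proposal, and you correctly flag it as the main obstacle.

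A secondary point: the per-phase inequality you write down is slightly too strong. What the partition-based blending actually gives (Lemma~\ref{lemma:gdtwo-rec}) is
\[
\ex{\ff{G_i}-\ff{G_{i-1}}\mid G_{i-1}} \ge \tfrac{1}{\ell+1}\bigl(1-\tfrac{1}{m+1}\bigr)\bigl[(1-\tfrac{1}{\ell})\ff{O\cup G_{i-1}} - \ff{G_{i-1}}\bigr],
\]
with leading factor $1/(\ell+1)$, not $1/\ell$; the extra loss comes from combining the blending bound with $\sum_l\ff{O\cup A_l}\ge (\ell-1)\ff{O\cup G_{i-1}}$ (Proposition~\ref{prop:sum-marge}). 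Consequently your $\epsi_0$ is not merely rounding/dummy slack: it must absorb this structural $O(1/\ell^2)$ gap in the coefficients. That is still harmless for the final ratio, but your accounting of where the error comes from is off.
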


\subsection{Technical Overview of Blended Marginal Gains Strategy for \itg} \label{sec:greedy-blend}
In \itg, each iteration maintains $\ell+1$ \rev solution families, \color{black} 
each containing $\ell$ nearly pairwise disjoint sets.
Among these \rev solution families\color{black}, only one is \textit{right} which
satisfies the following key inequality:
\begin{align}
    \marge{O}{A_{u, i}} \le \ell\marge{A_{u, i}}{G}, \forall i \in [\ell],\label{inq:gdtwo-itg}
\end{align}
where $G$ is the intermediate solution at the start of this iteration,
and $A_{u,i}$ is the $i$-th solution in the $u$-th 
\rev solution families at the end of this iteration\color{black}.

The right \rev solution families \color{black} guarantees that elements added 
in the first round do not belong to $O\setminus A_{u, i}$ for any $i \in [\ell]$.
This property enables a partition of the optimal solution $O$
into $k/\ell$ subsets of size $\ell$,
where each subset’s marginal gain is dominated by a corresponding element in $A_{u, i}$.
Consequently, $\marge{O}{A_{u, i}}$ depends solely on $\marge{A_{u, i}}{G}$.

In what follows,
we introduce a novel blended marginal gains approach to analyze \itg using only
a single interlaced greedy step (Alg.~\ref{alg:gdtwo}).
This approach leverages a mixture of marginal gains across solutions 
to derive tighter bounds for each  $\marge{O}{A_l}$. 
The analysis proceeds in four steps:

\textbf{Step 1: Partitioning the Optimal Solution $O$.}
Our analysis begins by establishing a correspondence 
between the algorithm's solutions and 
partitions of the optimal set $O$.
Claim~\ref{claim:par-A} provides the foundation for this pairing:
\begin{restatable}{claim}{claimParA}\label{claim:par-A}
At an iteration $i$ of the outer for loop in Alg.~\ref{alg:gdtwo},
let $G_{i-1}$ be $G$ at the start of this iteration,
and $A_{l}$ be the set at the end of this iteration,
for each $l\in [\ell]$.
The set $O\setminus G_{i-1}$ can then be split into $\ell$ pairwise disjoint sets $\{O_1, \ldots, O_\ell\}$
such that $|O_l| \le\frac{k}{\ell}$ and $\left(O\setminus G_{i-1}\right) \cap \left(A_{l}\setminus G_{i-1}\right) \subseteq O_l$, for all $l \in [\ell]$.
\end{restatable}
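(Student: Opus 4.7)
The plan is to construct the partition $\{O_1, \ldots, O_\ell\}$ by first identifying, for each $l$, a forced subset of $O \setminus G_{i-1}$ that must lie in $O_l$, then distributing the remaining elements of $O \setminus G_{i-1}$ among the $\ell$ bins while respecting the capacity $k/\ell$.

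First I would define, for each $l \in [\ell]$,
\[ R_l \,:=\, (O \setminus G_{i-1}) \cap (A_l \setminus G_{i-1}), \]
which is exactly the set that Claim~\ref{claim:par-A} requires to be contained in $O_l$. The key observation is that within iteration $i$ of the outer loop of Alg.~\ref{alg:gdtwo}, every element added to some $A_l$ is immediately removed from $V$ via the update $V \gets V - a$, so the ``new'' portions $A_l \setminus G_{i-1}$ are pairwise disjoint. Hence the $R_l$'s are pairwise disjoint, and since $|A_l \setminus G_{i-1}| = m = k/\ell$ (under the assumption $k \bmod \ell = 0$ used in this section), we obtain $|R_l| \le k/\ell$.

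Next, I would set $R = \bigcup_{l=1}^\ell R_l$ and $T = (O \setminus G_{i-1}) \setminus R$ and check that the residual bin capacity suffices for $T$. A direct count gives
\[ |T| \,=\, |O \setminus G_{i-1}| - |R| \,\le\, k - |R| \,=\, \sum_{l=1}^\ell \left( \tfrac{k}{\ell} - |R_l| \right), \]
so a greedy assignment that iterates over $T$ and places each element into any bin still below capacity produces $O_l = R_l \cup T_l$, where $\bigcup_l T_l = T$ and the $T_l$'s are pairwise disjoint. By construction the resulting $O_l$'s are pairwise disjoint, cover $O \setminus G_{i-1}$, satisfy $|O_l| \le k/\ell$, and each contains $R_l$, as required.

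The only real subtlety is the pairwise disjointness of the sets $A_l \setminus G_{i-1}$, which is where the specific bookkeeping of Alg.~\ref{alg:gdtwo} enters: it relies on $V$ being the same pool shared by all inner-loop greedy picks in a single outer iteration, and on $G_{i-1}$ being frozen throughout that iteration. Everything else is a pigeonhole / packing argument. For the case $k \bmod \ell > 0$, capacity $k/\ell$ should be read as $\lceil k/\ell \rceil$, or one can appeal to the dummy elements added at initialization; this is deferred to Appendix~\ref{apx:greedy-1/e} in line with the remark following the theorem statement.
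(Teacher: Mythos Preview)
Your argument is correct and is essentially what the paper has in mind: the paper does not spell out a separate proof of Claim~\ref{claim:par-A}, treating it as immediate from the fact that the sets $A_l\setminus G_{i-1}$ are pairwise disjoint (each chosen element is removed from the shared pool $V$) and each has size exactly $m=k/\ell$. Your explicit packing step, showing $|T|\le\sum_l(k/\ell-|R_l|)$ and then greedily filling the bins, is the natural way to make this rigorous.
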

This partition enables us to decompose the marginal gains of $O$ 
with respect to each solution $A_l$. 
Specifically, we express the total marginal gain as:
\vspace*{-1em}
{\small\begin{align*}
&\sum_{l\in [\ell]}\marge{O}{A_{l}} \le \sum_{l\in [\ell]} \sum_{i\in [\ell]} \marge{O_i}{A_{l}} \tag{Proposition~\ref{prop:sum-marge}}\\
&= \sum_{1\le l_1 < l_2 \le \ell} \left(\marge{O_{l_1}}{A_{l_2}}+\marge{O_{l_2}}{A_{l_1}}\right)\\
&\hspace*{2em}+\sum_{l\in [\ell]} \marge{O_l}{A_{l}}. \numberthis \label{inq:gdtwo-par}
\end{align*}}
The decomposition consists of two types of terms:
1) self-interaction term $\marge{O_l}{A_{l}}$ for each $l\in [\ell]$,
and 2) cross-interaction terms $\marge{O_{l_1}}{A_{l_2}}+\marge{O_{l_2}}{A_{l_1}}$
for each $1\le l_1 < l_2 \le \ell$.
Below we establish upper bounds for each term type, 
with detailed analysis to follow.
\begin{restatable}{lemma}{lemmaparA}\label{lemma:par-A}
Fix on $G_{i-1}$ for an iteration $i$ of the outer for loop in Alg.~\ref{alg:gdtwo}.
Following the definition in Claim~\ref{claim:par-A}, it holds that
{\small\begin{align*}
\text{1) }&\marge{A_{l}}{G_{i-1}} \ge \marge{O_{l}}{A_{l}}, \forall 1\le l \le \ell,\\
\text{2) }&\left(1+\frac{1}{m}\right)\left(\marge{A_{l_1}}{G_{i-1}} + \marge{A_{l_2}}{G_{i-1}}\right)\\
&\ge \marge{O_{l_2}}{A_{l_1}} + \marge{O_{l_1}}{A_{l_2}}, \forall 1\le l_1< l_2 \le \ell.
\end{align*}}
\end{restatable}

\textbf{Step 2: Self-Interaction Term Bounding.}
The partition from Claim~\ref{claim:par-A} immediately yields our first bound.
For any $l\in [\ell]$, elements in $O_l\setminus A_l$
were available but not selected by the greedy procedure, 
directly implying the first required bound.

\textbf{Step 3: Cross-Interaction Term Bounding.}
The primary technical challenge lies in effectively 
bounding the cross-interaction terms
$\marge{A_{l_1}}{G_{i-1}} + \marge{A_{l_2}}{G_{i-1}}$.
Rather than relying on original greedy selection bounds, 
we develop a more sophisticated approach through 
our blended marginal gains technique, 
formalized in the following proposition:
\begin{proposition}[Blended Marginal Gains]\label{prop:blend}
For any submodular function $f:2^{\uni}\to \reals$ and $S, T, O \in \uni$,
Let $S_i$ be a prefix of $S$ with size $i$
such that $S_i \subseteq O$.
Similarly, define $T_j$.
It satisfies that,
{\small\begin{align}
    \marge{O}{T} &\le \marge{S_i}{T} + \marge{O\setminus S_i}{T},\\
    \marge{O}{S} &\le \marge{T_j}{S} + \marge{O\setminus T_j}{S},
\end{align}}
\end{proposition}
The key insight involves strategically partitioning $O$ into two components:
1) a subset keeps the original greedy selection bound,
and 2) a residual subset where we apply submodularity only. 

Applying this proposition to each solution pair ($A_{l_1}$, $A_{l_2}$)
with carefully chosen prefixes yields
the second inequality in Lemma~\ref{lemma:par-A}.

\textbf{Step 4: Final Composition.}
By applying Inequality~\eqref{inq:gdtwo-par} and Lemma~\ref{lemma:par-A}, 
we derive a result analogous to Inequality~\eqref{inq:gdtwo-itg} achieved by \itg,
\rev
\begin{align}
    \sum_{l\in [\ell]}\marge{O}{A_{l}} \le \ell\left(1+\frac{1}{m}\right)\sum_{l\in [\ell]}\marge{A_{l}}{G_{i-1}}.
\end{align} 
\color{black}
This forms the key property necessary to establish the $1/e-\epsi$ approximation ratio
while requiring only a single pool of solutions.
The detailed analysis of the approximation ratio is provided in Appendix~\ref{apx:greedy-1/e}.

\section{Sublinear Adaptive Algorithms}\label{sec:ptg}

In this section, we present the main subroutine for our parallel algorithms, 
\ptgone (\ptgoneshort, Alg.~\ref{alg:ptgone}).
A single execution of \ptgoneshort achieves an 
approximation ratio of $1/4-\epsi$ with high probability,
while repeatedly running \ptgoneshort, as in \ptgtwo (\ptgtwoshort, Alg.~\ref{alg:ptgtwo} in 
Appendix~\ref{apx:ptgtwo}), 
guarantees a randomized approximation ratio of $1/e-\epsi$.
Below, we outline the theoretical guarantees,
with the detailed analysis provided in Appendix~\ref{apx:ptg}.
\begin{restatable}{theorem}{thmptgone}\label{thm:ptgone}
With input $(f, k, 2, \frac{\epsi M}{k}, \epsi)$,
where $M = \max_{x\in \uni} \ff{x}$,
\ptgoneshort (Alg.~\ref{alg:ptgone}) returns $\{A_1', A_2'\}$
with $\oh{\epsi^{-4}\log(n)\log(k)}$ adaptive rounds and $\oh{\epsi^{-5}n\log(n)\log(k)}$ queries with a probability of $1-1/n$.
It satisfies that $\max\{\ff{A_1'}, \ff{A_2'}\}\ge (1/4-\epsi)\ff{O}$.
\end{restatable}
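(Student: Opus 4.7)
The plan is to split the theorem into three claims: (a) the adaptivity and query complexity bounds, (b) a threshold-gain guarantee on every element retained in $A_1'$ and $A_2'$ that holds with probability at least $1-1/n$, and (c) the conversion of (b) into the $(1/4-\epsi)$ ratio via the blending technique of Proposition~\ref{prop:blend}. Parts (a) and (b) are structural and follow standard arguments for threshold-sampling parallel schemes, while (c) is where the specific blending-based analysis contributes.

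For (a), the outer while loop of \ptgoneshort terminates once every $\tau_j$ falls below $\tau_{\min}=\epsi M/k$. Since each call to \update multiplies $\tau_j$ by a factor of $(1-\epsi)$, there are at most $\oh{\epsi^{-1}\log(k/\epsi)}$ threshold levels per stream. Between two consecutive threshold drops the loop either enters the ``if'' branch, which removes at least one unit of budget $m_0$, or the ``else'' branch, which removes $i^*\ge 1$ units; in both branches the work performed by each of \update, \dist, and \prefix costs $\oh{\log n}$ adaptive rounds, together with $\oh{\epsi^{-1}}$ internal repetitions used for the prefix-length search in \prefix. Multiplying these factors and using $\ell=2$ for the parallel streams gives the stated $\oh{\epsi^{-4}\log(n)\log(k)}$ adaptivity and $\oh{\epsi^{-5}n\log(n)\log(k)}$ query complexity.

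For (b), I would argue element by element. In the ``if'' branch of Lines~\ref{line:pig-if-start}--\ref{line:pig-if-end}, every selected $x_j$ is drawn from $V_j$, whose contents were certified by the most recent \update call to have marginal gain at least $\tau_j$, so $\marge{x_j}{A_j}\ge \tau_j$ holds deterministically. In the ``else'' branch of Lines~\ref{line:pig-else-start}--\ref{line:pig-else-end}, the standard analysis of randomized threshold-sampling prefix selection guarantees that, with probability at least $1-1/n^2$ per invocation, the prefix $\mathcal V_j[1:i^*_j]$ has average marginal gain at least $(1-\epsi)\tau_j$; the filtered set $S_j'\subseteq S_j$ defined on Line~\ref{line:tgone-subset-2} retains exactly the additions that meet this threshold. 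A union bound over the $\oh{\epsi^{-2}\log(n)\log(k)}$ total invocations, which is comfortably smaller than $n$, yields the claimed overall success probability of $1-1/n$, and on this event every element ever appended to $A_j'$ contributed marginal gain at least $(1-\epsi)\tau_j$ at the moment it was added.

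For (c), I would apply Proposition~\ref{prop:blend} with $A=A_1'$ and $B=A_2'$, choosing prefix indices $i^*,j^*$ as in Figure~\ref{fig:gdone} so that the size-$i^*$ prefix of $A_1'$ and the size-$j^*$ prefix of $A_2'$ consist entirely of elements of $O$. Under the event from part (b), the residual terms $\marge{O\setminus (A_1')_{i^*}}{A_2'}$ and $\marge{O\setminus (A_2')_{j^*}}{A_1'}$ decompose into sums over the remaining elements $o\in O$, each of which was either rejected by some \update call and therefore has marginal gain at the relevant time at most $\tau/(1-\epsi)\le \marge{x}{A_j}/(1-\epsi)^2$ for the next element $x$ added to the corresponding stream, or else was never rejected because $\tau_j$ dropped below $\tau_{\min}$, in which case the total contribution of such residuals is at most $k\cdot \tau_{\min}=\epsi M\le \epsi\ff{O}$. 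Combining these bounds exactly as in the proof of Theorem~\ref{thm:gdone} gives $2\max\{\ff{A_1'},\ff{A_2'}\}\ge (1-\epsi)^2\ff{O}-\epsi\ff{O}$, which rearranges after constant absorption into $\epsi$ to the claimed $(1/4-\epsi)$ ratio. The main obstacle will be making the per-element charging rigorous across the two branches of the loop, and in particular ensuring that elements added in a single parallel batch of the ``else'' branch can be linearly ordered consistently with the alternating indexing required by Proposition~\ref{prop:blend}; this is precisely what the careful three-pass construction of $S_j$ on Line~\ref{line:tgone-subset} together with its filtered subset $S_j'$ on Line~\ref{line:tgone-subset-2} is designed to enable.
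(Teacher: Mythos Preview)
Your proposal has the roles of randomness and determinism reversed, and applies the blending to the wrong pair of sets.

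\textbf{Where the randomness actually lives.} In the paper's analysis, the marginal-gain guarantee is \emph{deterministic}: the line $i^*\gets\max\{i:\#\textbf{true}\text{s in }B[1{:}i]\ge(1-\epsi)i\}$ in \prefix{} ensures by construction that at least a $(1-\epsi)$-fraction of every added batch consists of elements with marginal gain $\ge\tau_j$, so $\ff{A_l'}\ge(1-\epsi)\sum_x\tau_l^{(x)}$ holds always (Property~4 of Lemma~\ref{lemma:tgone-iteration}). What is \emph{not} deterministic is the running time: a single iteration of the while loop is only guaranteed with probability $\ge 1/2$ to make progress, either by emptying the budget, dropping some $\tau_j$, or shrinking some $V_j$ by a constant factor (Property~1 of Lemma~\ref{lemma:tgone-iteration} via Lemma~\ref{lemma:prefix-prob}). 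The $1-1/n$ bound then comes from a Chernoff argument over these Bernoulli-$1/2$ ``successful'' iterations, not from a union bound over gain-failure events. Your argument for (a) implicitly assumes deterministic progress per iteration, which is false in the ``else'' branch when $i^*$ happens to be small, and your argument for (b) introduces a failure probability where none exists.

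\textbf{Blending on $A_1,A_2$, not $A_1',A_2'$.} The alternating/equal-batch structure required by the blending proposition is maintained for the full sets $A_1,A_2$ (they receive batches $S_j$ of identical size $i^*$), but \emph{not} for $A_1',A_2'$: the filtered subsets $S_j'\subseteq S_j$ discard the \textbf{false}-marked elements and can have different sizes across $j$. The paper therefore bounds $\marge{O}{A_1}+\marge{O}{A_2}$ (Property~4 of Lemma~\ref{lemma:ptgone}) via the blending cases on $A_1,A_2$, obtaining an upper bound in terms of $\sum_j\tau_l^j$, and only then invokes $\ff{A_l'}\ge(1-\epsi)\sum_j\tau_l^j$ and $\ff{A_l'}\ge\ff{A_l}$ to finish. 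Your plan to apply Proposition~\ref{prop:blend} directly with $A=A_1'$, $B=A_2'$ loses the alignment needed for the per-element charging you describe; the three-pass construction of $S_j$ is there to make the \emph{unfiltered} batches well-behaved (true elements first so their marginals stay $\ge\tau_j$ by submodularity), while $S_j'$ exists solely to ensure $\ff{A_l'}\ge\ff{A_l}$.
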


\begin{restatable}{theorem}{thmptgtwo}\label{thm:ptgtwo}
With input $(f, k, \epsi)$ such that
$\ell = \oh{\epsi^{-1}}\ge \frac{4}{e\epsi}$
and $k \ge \frac{(2-\epsi)^2\ell}{e\epsi\ell-4}$,
\ptgtwoshort (Alg.~\ref{alg:ptgtwo}) returns $G$
such that $\ex{\ff{G}} \ge (1/e-\epsi)\ff{O}$
with $\oh{\epsi^{-5}\log(n)\log(k)}$ adaptive rounds and $\oh{\epsi^{-6}n\log(n)\log(k)}$ queries with a probability of $1-\oh{1/(\epsi n)}$.
\end{restatable}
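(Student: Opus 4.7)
The plan is to establish Theorem~\ref{thm:ptgtwo} by mirroring the structure of the proof of Theorem~\ref{thm:gdtwo}, with \ptgoneshort taking the role that the $\ell$-way interlaced greedy plays inside Alg.~\ref{alg:gdtwo}. Structurally, \ptgtwoshort (Alg.~\ref{alg:ptgtwo}) should be viewed as running $\ell$ outer iterations, each invoking \ptgoneshort with per-solution budget $m = k/\ell$ to produce $\ell$ candidate sets $\{A_1,\ldots,A_\ell\}$ that extend the current intermediate solution $G_{i-1}$, after which one of them is chosen uniformly at random to become $G_i$. The goal is thus to reproduce, up to the additional $(1-\epsi)^{-1}$ slack inherent to threshold sampling, the same per-iteration recursion on $\ff{O}-\ex{\ff{G_i}}$ that drove Theorem~\ref{thm:gdtwo} to its $(1/e-\epsi)$ ratio, and then invoke Theorem~\ref{thm:ptgone} for resources and failure probability.

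First I would dispose of the adaptive/query complexity and the high-probability clause. Each outer call to \ptgoneshort, configured with budget $m$ and $\tau_{\min}$ a small fraction of $\ff{O}/m$, costs $\oh{\epsi^{-4}\log(n)\log(m)} = \oh{\epsi^{-4}\log(n)\log(k)}$ adaptive rounds and $\oh{\epsi^{-5}n\log(n)\log(k)}$ queries and succeeds with probability at least $1-1/n$ by Theorem~\ref{thm:ptgone}. Summing over $\ell = \oh{\epsi^{-1}}$ outer iterations and taking a union bound gives the stated $\oh{\epsi^{-5}\log(n)\log(k)}$ adaptivity, $\oh{\epsi^{-6}n\log(n)\log(k)}$ queries, and overall failure probability $\oh{1/(\epsi n)}$.

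Second, for the approximation guarantee, fix an outer iteration $i$ and let $\{A_1,\ldots,A_\ell\}$ be the sets returned by the corresponding call to \ptgoneshort on the contracted objective $f_{G_{i-1}}$. The technical heart of the argument is the parallel analog of Lemma~\ref{lemma:par-A}:
\begin{align*}
\sum_{l\in [\ell]} \marge{O}{A_l} \le \frac{\ell(1+1/m)}{1-\epsi}\sum_{l\in [\ell]}\marge{A_l}{G_{i-1}} + \epsi\,\ff{O}.
\end{align*}
This follows the same three-step template as Lemma~\ref{lemma:par-A}: (a) partition $O\setminus G_{i-1}$ into $\{O_1,\ldots,O_\ell\}$ via Claim~\ref{claim:par-A} and split $\sum_l \marge{O}{A_l}$ into diagonal and cross terms exactly as in Inequality~\eqref{inq:gdtwo-par}; (b) bound the diagonal terms $\marge{O_l}{A_l}$ using the fact that any $o\in O_l\setminus A_l$ surviving to the end of \ptgoneshort must have marginal gain below $\tau_{\min}$, so these contribute at most $\epsi\ff{O}$ in total; and (c) bound each cross pair $\marge{O_{l_1}}{A_{l_2}}+\marge{O_{l_2}}{A_{l_1}}$ via Proposition~\ref{prop:blend} where the greedy rule of Lemma~\ref{lemma:par-A} is replaced by the descending threshold maintained inside \update, picking up precisely the $(1-\epsi)^{-1}$ factor already seen in the passage from Alg.~\ref{alg:gdtwo} to Alg.~\ref{alg:tgtwo} in Section~\ref{sec:tg}. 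With this key inequality in hand, the remainder proceeds exactly as in the proofs of Theorem~\ref{thm:gdtwo} and Theorem~\ref{thm:tgtwo}: taking expectations over the uniform choice of $G_i\in\{A_1,\ldots,A_\ell\}$ and unrolling the resulting recursion over $\ell$ outer iterations produces the standard $(1-1/\ell)^\ell \to 1/e$ asymptotic, and the hypotheses $\ell\ge 4/(e\epsi)$ and $k\ge (2-\epsi)^2\ell/(e\epsi\ell-4)$ are calibrated exactly so that the compounding $(1+1/m)$, $(1-\epsi)^{-1}$, and diagonal $\epsi\ff{O}$ losses aggregate into a net $\epsi$ slack, yielding $\ex{\ff{G}}\ge (1/e-\epsi)\ff{O}$.

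The main obstacle is step (c). In Alg.~\ref{alg:gdtwo} the strictly alternating single-element additions make Proposition~\ref{prop:blend} immediate: when $a_j$ is picked for $A_{l_j}$, the greedy rule yields a pointwise bound on every $o\in O\setminus\bigcup_l A_l$. Inside \ptgoneshort, by contrast, the parallel branch (lines \ref{line:pig-else-start}--\ref{line:pig-else-end}) inserts many elements per round through \prefix, and the post-hoc subset selection at lines \ref{line:tgone-subset}--\ref{line:tgone-subset-2}, which prioritizes \textbf{true} over \textbf{none} over \textbf{false}, yields returned sets $A_l'$ whose insertion order is not a literal interlacing. The crux of the argument will be showing that this prioritized selection nevertheless realizes enough of the prefix/alternation structure that Proposition~\ref{prop:blend} requires, up to the $(1-\epsi)^{-1}$ threshold slack; once that compatibility lemma is in place, the $\tau_j$ values maintained by \update play the role of the greedy maxima and the rest of the calculations are routine, paralleling the analysis already carried out for Alg.~\ref{alg:tgone} in Appendix~\ref{apx:tg}.
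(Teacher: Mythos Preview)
Your high-level plan matches the paper's: \ptgtwoshort makes $\ell$ outer calls to \ptgoneshort, a per-iteration recursion analogous to Lemmas~\ref{lemma:gdtwo-rec} and~\ref{lemma:gdtwo-deg} is established (this is Lemma~\ref{lemma:ptgtwo-recur} in the paper), and unrolling gives $1/e-\epsi$. The complexity clause is also right in spirit, though the reference should be to the general \ptgoneshort guarantee (Lemma~\ref{lemma:ptgone}, or the caption of Alg.~\ref{alg:ptgone}), not Theorem~\ref{thm:ptgone}, which is the $\ell=2$ specialization.

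There are two concrete gaps. First, your diagonal bound (b) is wrong: the claim that every $o\in O_l\setminus A_l$ has marginal gain below $\tau_{\min}$ holds only when \ptgoneshort terminates because the threshold for solution $l$ dropped below $\tau_{\min}$. If instead the budget is exhausted ($m_0=0$), the threshold may still be far above $\tau_{\min}$, and the bound on $\marge{O_l}{A_l}$ must come from comparing each $o$ to the threshold values $\tau_l^j$ of the elements actually added. The paper's Property~2 of Lemma~\ref{lemma:ptgone} handles both cases and yields $\marge{O_l}{A_l}\le \marge{A_l'}{\emptyset}/(1-\epsi)^2 + m\tau_{\min}/(1-\epsi)$, so the diagonal terms contribute to the main sum just like in Lemma~\ref{lemma:par-A}, not merely to an $\epsi\ff{O}$ remainder.

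Second, you correctly flag the obstacle in (c) but do not supply its resolution, and the paper's resolution is structurally different from what you sketch. The paper maintains \emph{two} sets per index: $A_l$ (everything added, including \textbf{false} elements) and $A_l'\subseteq A_l$ (only \textbf{true}/\textbf{none} elements). It is the $A_l$ that have the clean interlacing structure---equal sizes, pairwise disjointness, and the threshold comparison of Property~3 in Lemma~\ref{lemma:tgone-iteration}---so the blending argument runs on them and bounds $\sum_l\marge{O}{A_l\cup G_{i-1}}$ in terms of the threshold sums $\sum_j\tau_l^j$. But it is the $A_l'$ that are actually unioned into $G$, and the bridge is Property~4 of Lemma~\ref{lemma:tgone-iteration}: $\marge{A_l'}{A_{l,0}'}\ge(1-\epsi)\sum_j\tau_l^j$ together with $\marge{A_l'}{\emptyset}\ge\marge{A_l}{\emptyset}$. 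This introduces a \emph{second} factor of $(1-\epsi)^{-1}$ beyond the one you account for, so the correct coefficient in your key inequality is $(1-\epsi)^{-2}$, not $(1-\epsi)^{-1}$; the paper absorbs this by running with $\epsi'=\epsi/2$. Without the $A_l/A_l'$ device your proposal has no mechanism to reconcile the prioritized subset selection at Lines~\ref{line:tgone-subset}--\ref{line:tgone-subset-2} with the blending bounds, and the ``compatibility lemma'' you defer is precisely this two-set bookkeeping.
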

The remainder of this section is organized as follows.
In Section~\ref{sec:subroutine}, we present the key subroutines 
that form the building blocks of \ptgoneshort.
Section~\ref{sec:ptg-overview} then provides a high-level overview of the algorithm,
and introduces several critical properties that must be preserved throughout the process.

\subsection{Subroutines for \ptgoneshort}\label{sec:subroutine}
\textbf{\dist} (Alg.~\ref{alg:dist}) constructs pairwise disjoint subsets 
$\left\{\mathcal V_l: l\in [\ell]\right\}$ 
from candidate pools $\{V_l: l\in [\ell]\}$,
preparing for the subsequent threshold sampling phase.
This crucial preprocessing step ensures that 
all elements added to the solution maintain 
disjointness across different solution sets.
We provide its theoretical guarantee in Lemma~\ref{lemma:dist}.
\begin{lemma}\label{lemma:dist}
With input $\{V_l\}_{l\in [\ell]}$, where $|V_l|\ge 2\ell$ for each $l\in [\ell]$,
\dist returns $\ell$ pairwise disjoint sets $\{\mathcal V_l\}_{l\in [\ell]}$ \st
$\mathcal V_l\subseteq V_l$ and $|\mathcal V_j| \ge \frac{|V_j|}{2\ell}$.
\end{lemma}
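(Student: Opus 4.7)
The plan is to verify each of the three claims in the conclusion separately, namely pairwise disjointness, the containment $\mathcal V_l \subseteq V_l$, and the size lower bound $|\mathcal V_j| \ge |V_j|/(2\ell)$. The first two are essentially immediate from the construction: on Line~\ref{line:dis-select}, $\mathcal V_j$ is sampled from $V_j \setminus \bigcup_l \mathcal V_l$, so $\mathcal V_j \subseteq V_j$ and $\mathcal V_j$ is disjoint from every previously constructed $\mathcal V_{l'}$; since each index is removed from $I$ after being processed, no index is revisited, and disjointness holds pairwise.

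The substantive claim is the size lower bound, and this is where the argmin rule on Line~\ref{line:dis-index} is used. The plan is to process the iterations in order $i = 1, \ldots, \ell$, and to show that whenever index $j$ is chosen at iteration $i$, the pool $V_j \setminus \bigcup_{l} \mathcal V_l$ still contains at least $\lfloor |V_j|/\ell \rfloor$ elements, so the selection on Line~\ref{line:dis-select} succeeds. For this I would invoke the argmin rule: every index $l'$ previously removed from $I$ satisfied $|V_{l'}| \le |V_j|$ at the moment it was chosen, hence $|\mathcal V_{l'}| = \lfloor |V_{l'}|/\ell \rfloor \le \lfloor |V_j|/\ell \rfloor$. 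Summing over the at most $i-1 \le \ell-1$ previously constructed sets gives
\[
\Bigl|V_j \cap \bigcup_{l'} \mathcal V_{l'}\Bigr| \le (i-1)\,\bigl\lfloor |V_j|/\ell \bigr\rfloor \le (\ell-1)\,\bigl\lfloor |V_j|/\ell \bigr\rfloor \le |V_j| - \bigl\lfloor |V_j|/\ell \bigr\rfloor,
\]
so the remaining pool has size at least $\lfloor |V_j|/\ell \rfloor$, making the sampling well defined and yielding $|\mathcal V_j| = \lfloor |V_j|/\ell \rfloor$.

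Finally, I would convert the floor into the clean bound in the statement by using the hypothesis $|V_j| \ge 2\ell$. Since $|V_j|/\ell \ge 2$, we have $\lfloor |V_j|/\ell \rfloor \ge |V_j|/\ell - 1 \ge |V_j|/\ell - |V_j|/(2\ell) = |V_j|/(2\ell)$, completing the proof. The main obstacle is a conceptual one rather than a calculational one: one must notice that the argmin ordering is precisely what makes the accounting go through, since without it a previously chosen $\mathcal V_{l'}$ with $|V_{l'}|$ much larger than $|V_j|$ could in principle exhaust $V_j$. Once this observation is in hand, the argument reduces to the telescoping inequality above together with the floor estimate using $|V_j| \ge 2\ell$.
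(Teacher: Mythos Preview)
Your argument is correct and is precisely the natural proof: the argmin rule guarantees that every previously chosen index $l'$ satisfies $|V_{l'}|\le |V_j|$, so the total overlap removed from $V_j$ before it is processed is at most $(\ell-1)\lfloor |V_j|/\ell\rfloor$, leaving enough room for the sample; then the hypothesis $|V_j|\ge 2\ell$ converts the floor into the clean $|V_j|/(2\ell)$ bound. The paper states this lemma without proof, so there is nothing further to compare.
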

\begin{algorithm}[ht]
\Fn{\dist($\{V_l: l\in [\ell]\}$)}{
    \KwIn{$V_1, V_2, \ldots, V_\ell \subseteq \uni$}
    \Init{$\mathcal V_1, \mathcal V_2, \ldots, \mathcal V_\ell \gets \emptyset$, $I\gets [\ell]$}
    \For{$i\gets 1$ to $\ell$}{
        $j\gets\argmin_{j\in I}|V_j|$\label{line:dis-index}\;
        $\mathcal V_j \gets$ randomly select $\left\lfloor\frac{|V_j|}{\ell}\right\rfloor$ elements in $V_j\setminus \left(\bigcup_{l \in [\ell]}\mathcal V_j\right)$ \label{line:dis-select}\;
        $I\gets I-j$\;
    }
    \Return{$\left\{\mathcal V_l: l\in [\ell]\right\}$}
    }
\caption{Return $\ell$ pairwise disjoint subsets where $|\mathcal V_j| \ge \frac{|V_j|}{2\ell}$ for any $j \in [\ell]$ if $|V_j|\ge 2\ell$}
\label{alg:dist}
\end{algorithm}
\begin{algorithm}[h]
\Fn{\prefix($f, \mathcal V, s, \tau, \epsi$)}{
    \KwIn{evaluation oracle $f:2^{\uni} \to \reals$, maximum size $s$, threshold $\tau$, error $\epsi$, candidate pool $\mathcal V$ where $\marge{x}{\emptyset} \ge \tau $ for any $ x\in \mathcal V$}
    \Init{$B[1:s]\gets [\textbf{none}, \ldots, \textbf{none}]$}
    $\mathcal V \gets\left\{v_1, v_2, \ldots\right\} \gets \textbf{random-permutation}(\mathcal V)$\label{line:prefix-permute}\;
    \For{$i\gets 1$ to $s$ in parallel}{
        $T_{i-1} \gets \left\{v_1, \ldots, v_{i-1}\right\}$\;
        \lIf{$\marge{v_i}{T_{i-1}} \ge \tau$}{$B[i]\gets \textbf{true}$}\label{line:prefix-B-true}
        \lElseIf{$\marge{v_i}{T_{i-1}} < 0$}{$B[i]\gets \textbf{false}$}\label{line:prefix-B-false}
    }
    $i^*\gets \max \{i: \#\text{\textbf{true}s in }B[1:i] \ge (1-\epsi) i\}$\label{line:prefix-istar}\;
    \Return{$i^*$, $B$}
}
\caption{Select a prefix of $\mathcal V$ \st its average marginal gain is greater than $(1-\epsi)\tau$, and with a probability of $1/2$, more than an $\epsi/2$-fraction of $\mathcal V$ has a marginal gain less than $\tau$ relative to the prefix.}
\label{alg:prefix}
\end{algorithm}
\textbf{\prefix} (Alg.~\ref{alg:prefix}) serves as the fundamental building block 
for threshold sampling procedure.
It achieves two critical objectives simultaneously:
1) it identifies a prefix $T_{i^*}$ that provides 
sufficient marginal gain to the solution, 
and 2) with probability at least $1/2$, it guarantees that 
a constant fraction of candidate elements have low marginal gain ($<\tau$)
relative to the augmented solution.

Our implementation of \prefix follows Lines 8-15 of \ts~\citep{Chen2024},
and consequently inherits similar theoretical guarantees (Lemma 4 and 5) as follows.
\begin{lemma}\label{lemma:prefix-filter}
In \prefix, given $\mathcal V$ after \textbf{random-permutation} in Line~\ref{line:prefix-permute},
let $D_i = \left\{x\in \mathcal V: \marge{x}{T_i} < \tau\right\}$.
It holds that $|D_0|=0$, $|D_{|\mathcal V|}| = |\mathcal V|$, and $|D_{i-1}|\le |D_i|$.
\end{lemma}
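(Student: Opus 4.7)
The plan is to prove the three claims separately, each being a short consequence of either the input precondition on $\mathcal V$ or of submodularity. I will not need any probabilistic reasoning here, since the lemma is a purely deterministic statement about the prefixes of the permuted list.

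For the first claim $|D_0|=0$, I would invoke the explicit precondition stated in the \KwIn of \prefix (Alg.~\ref{alg:prefix}): every $x\in\mathcal V$ satisfies $\marge{x}{\emptyset}\ge \tau$. Since $T_0=\emptyset$, this says no element of $\mathcal V$ can lie in $D_0$, so $D_0=\emptyset$. For the second claim $|D_{|\mathcal V|}|=|\mathcal V|$, I would note that after the permutation in Line~\ref{line:prefix-permute} we have $T_{|\mathcal V|}=\mathcal V$, so every $x\in\mathcal V$ already belongs to $T_{|\mathcal V|}$ and therefore $\marge{x}{T_{|\mathcal V|}}=0$. Since $\tau>0$ (as the threshold is always positive in the calling contexts of \prefix), this forces $x\in D_{|\mathcal V|}$, hence $D_{|\mathcal V|}=\mathcal V$.

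For the monotonicity claim $|D_{i-1}|\le |D_i|$, the cleanest approach is to show the stronger set containment $D_{i-1}\subseteq D_i$, from which the cardinality inequality follows immediately. Fix any $x\in D_{i-1}$, so $\marge{x}{T_{i-1}}<\tau$. Since $T_{i-1}\subseteq T_i$, submodularity of $f$ yields $\marge{x}{T_i}\le \marge{x}{T_{i-1}}<\tau$, and hence $x\in D_i$.

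The only mildly delicate point is the second claim: it implicitly requires $\tau>0$ so that an element already contained in the conditioning set, which contributes zero marginal gain, is counted in $D_{|\mathcal V|}$. This holds in every use of \prefix inside \ptgoneshort, where $\tau$ is bounded below by $\tau_{\min}>0$. With that observation the lemma is immediate, and no further machinery is needed.
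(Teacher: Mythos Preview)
Your argument is correct and is the standard one; the paper itself does not give a proof but simply notes that \prefix coincides with part of \ts and defers to Lemmas~4--5 of \citet{Chen2024}, which are proved exactly along these lines. One minor technical point: the diminishing-returns inequality $\marge{x}{T_i}\le\marge{x}{T_{i-1}}$ as stated in the paper's preliminaries requires $x\notin T_i$, so for the single element $x=v_i\in T_i\setminus T_{i-1}$ you should instead argue directly that $\marge{v_i}{T_i}=0<\tau$ (the same $\tau>0$ observation you already invoke for the second claim), after which the containment $D_{i-1}\subseteq D_i$ goes through unchanged.
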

\begin{lemma}\label{lemma:prefix-prob}
In \prefix, following the definition of $D_i$ in Lemma~\ref{lemma:prefix-filter},
let $t = \min\{i: |D_i| \ge \epsi |\mathcal V|/2\}$.
It holds that $\prob{i^* < \min\{s,t\}} \le 1/2$.
\end{lemma}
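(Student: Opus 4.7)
The plan is to show this lemma via an argument structurally identical to that of Lemma 5 in \citet{Chen2024}, since \prefix is stated to replicate Lines 8--15 of \textsc{ThreshSeq}. The proof reduces $\prob{i^* < \min\{s,t\}}$ to a tail bound on a sum of conditionally sparse Bernoulli indicators and concludes via Markov's inequality combined with a coupling to an i.i.d.~Bernoulli sequence.

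First, I would reduce the event. Define $W_i = \mathbb{1}[B[i]\ne\textbf{true}]$, $N_r = \sum_{i=1}^r W_i$, and $T = \min\{s,t\}$. By the assignment rules in Lines~\ref{line:prefix-B-true}--\ref{line:prefix-B-false}, $W_i = 1$ iff $\marge{v_i}{T_{i-1}} < \tau$, equivalently $v_i \in D_{i-1}$. If $N_T \le \epsi T$ then $B[1{:}T]$ contains at least $(1-\epsi)T$ \textbf{true}s, forcing $i^* \ge T$ by the defining $\max$ in Line~\ref{line:prefix-istar}. Contrapositively, $\{i^* < T\} \subseteq \{N_T > \epsi T\}$, so it suffices to prove $\prob{N_T > \epsi T} \le 1/2$. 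Next, I would establish the key per-step bound. Conditioned on $\mathcal F_{i-1} = \sigma(v_1,\ldots,v_{i-1})$, $v_i$ is uniform over $\mathcal V \setminus T_{i-1}$, a set of size $|\mathcal V|-(i-1)$. Because $T_{i-1} \subseteq D_{i-1}$ (its elements have zero marginal gain, which is less than $\tau$),
\[
\prob{W_i = 1 \mid \mathcal F_{i-1}} = \frac{|D_{i-1}|-(i-1)}{|\mathcal V|-(i-1)}.
\]
On $\{i \le t\}$ we have $|D_{i-1}| < \epsi|\mathcal V|/2$; since $(a-c)/(b-c) \le a/b$ whenever $0 \le c < b$ and $a \le b$, this ratio is at most $\epsi/2$.

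To conclude, I would apply coupling and Markov's inequality. The modified indicators $\tilde W_i := W_i\,\mathbb{1}[i \le t]$ satisfy $\prob{\tilde W_i = 1 \mid \mathcal F_{i-1}} \le \epsi/2$ unconditionally, and $N_T = \sum_{i=1}^s \tilde W_i$. Applying Lemma~\ref{lemma:indep} to $1-\tilde W_i$ stochastically dominates $N_T$ by a $\mathrm{Bin}(s,\epsi/2)$ variable. Since $\{i \le t\}$ is $\mathcal F_{i-1}$-measurable, the same per-step bound lifts to the conditional expectation $\ex{N_T \mid T = r} \le \epsi r/2$, so pointwise Markov gives $\prob{N_T > \epsi T \mid T = r} \le (\epsi r/2)/(\epsi r) = 1/2$, and averaging over $T$ yields $\prob{N_T > \epsi T} \le 1/2$.

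The principal obstacle is the random stopping time $T=\min\{s,t\}$, which is entangled with the $W_i$'s via the monotone dead-set process $D_i$; a naive Markov bound against the random threshold $\epsi T$ therefore does not apply directly. The resolution, which mirrors Lemma 5 of \citet{Chen2024}, is to exploit the $\mathcal F_{i-1}$-measurability of $\{i \le t\}$ and apply optional stopping to the supermartingale $\sum_{j\le i\wedge t}(W_j-\epsi/2)$. This propagates the per-step bound to the conditional expectation $\ex{N_T \mid T = r}$ needed for the concluding pointwise Markov step.
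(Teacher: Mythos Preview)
The paper does not give its own proof of this lemma; it simply observes that \prefix replicates Lines 8--15 of \textsc{ThreshSeq} and defers to Lemma~5 of \citet{Chen2024}. Your reduction $\{i^*<T\}\subseteq\{N_T>\epsi T\}$ and the per-step bound $\prob{W_i=1\mid\mathcal F_{i-1}}\le\epsi/2$ on $\{i\le t\}$ are correct and align with the intended structure.

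The gap is in your closing step. Optional stopping applied to $\sum_{j\le i\wedge T}(W_j-\epsi/2)$ yields only the \emph{unconditional} inequality $\ex{N_T-(\epsi/2)T}\le 0$; it does not deliver $\ex{N_T\mid T=r}\le\epsi r/2$. The event $\{T=r\}$ lies in $\mathcal F_r$ (it depends on $|D_r|$, hence on $v_r$), not in $\mathcal F_{r-1}$, so conditioning on it can bias the law of $W_1,\ldots,W_r$ upward. In fact the conditional bound is false in general: take $|\mathcal V|=10$, $\epsi=0.8$ (so the threshold is $4$), one distinguished element $e^*$ with $D(\{e^*\})=\mathcal V$, and all other elements $a$ satisfying $D(\{a\})=\{a,e^*\}$ and $D(\{a,a'\})=\{a,a',e^*\}$ (realizable by a submodular $f$). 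Then $\{T=2\}=\{v_1\ne e^*,\,v_2=e^*\}$, and on this event $W_2=\mathbb{1}[e^*\in D_1]=1$ deterministically, so $\ex{N_T\mid T=2}=1>0.8=\epsi\cdot 2/2$. Your pointwise-Markov-then-average argument therefore does not close as written; you will need to revisit how \citet{Chen2024} actually handles the random threshold $\epsi T$ rather than routing through this conditional expectation.
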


\begin{algorithm}[ht]
\Fn{\update($f, V, \tau, \epsi$)}{
\KwIn{evaluation oracle $f:2^{\uni} \to \reals$, candidate set $V$, threshold value $\tau$, error $\epsi$}
    \For{$j\gets 1$ to $\ell$ in parallel}{
        $V \gets \left\{x\in V : \marge{x}{\emptyset} \ge \tau\right\}$ \label{line:update-filter}\;
        \While{$|V| = 0$}{
            $\tau \gets (1-\epsi)\tau$\;
            $V \gets \left\{x\in \uni : \marge{x}{\emptyset} \ge \tau\right\}$\;
        }
    }
    \Return{$V, \tau$}
}
\caption{Update candidate set $V$ with threshold value $\tau$}
\label{alg:update}
\end{algorithm}
\textbf{\update} maintains the candidate set $V$ for a solution using threshold $\tau$.
If $V$ becomes empty, it decrease the threshold and regenerate $V$ until
it is not empty.
This is a common component of threshold sampling algorithms.

\subsection{Algorithm Overview}\label{sec:ptg-overview}
\ptgoneshort synthesizes threshold sampling techniques 
with interlaced solution construction 
to achieve both parallel efficiency and strong approximation guarantees. 
It begins by initializing $\ell$ empty solutions $\{A_j: j\in [\ell]\}$,
with corresponding threshold values set to the maximum singleton marginal gain
$M = \max_{x\in \uni}\marge{x}{\emptyset}$. 

At each iteration, \ptgoneshort dynamically switches 
between two distinct operational modes based on candidate set sizes.
When any candidate set $V_j$ (maintained by \update, Alg.~\ref{alg:update})
contains fewer than $2\ell$ elements,
the algorithm enters an alternating addition phase 
where single elements are sequentially distributed across solutions.
This preserves the alternating selection property crucial for 
maintaining approximation guarantees 
while ensuring progress when candidate pools are limited.

For cases where all candidate sets contain sufficient number of elements 
($|V_j| \ge 2\ell$ for each $j \in [\ell]$),
\ptgoneshort employs an efficient batch processing approach.
First, \dist constructs pairwise disjoint candidate subsets, 
enabling parallel processing while maintaining solution quality.
\prefix then simultaneously looks for blocks with high-marginal-gain elements.
At last, the block size and the elements added to each solution are 
carefully chosen to maintain both solution quality parity 
and the equivalent alternating selection effect.

This dual-mode architecture combines the strengths of threshold sampling 
and interlaced greedy methods. 
The threshold sampling components (\update and \prefix) 
ensure efficient element filtering and geometric threshold reduction, 
while the interlaced construction maintains 
the crucial alternating marginal gain properties.
This allows \ptgoneshort to achieve sublinear adaptivity
without compromising its approximation guarantee.

Below, we further analyze three fundamental properties 
that the algorithm must maintain throughout execution.

\begin{algorithm*}[h]
\Fn{\ptgone($f, m, \ell, \tau_{\min}, \epsi$)}{
    \KwIn{evaluation oracle $f:2^{\uni} \to \reals$, 
    constraint $m$, constant $\ell$,
    minimum threshold value $\tau_{\min}$, error $\epsi$}
    \Init{$M\gets \max_{x\in \uni} \marge{\{x\}}{\emptyset}$, $I = [\ell]$, $m_0 \gets m$, $A_j\gets A_j'\gets \emptyset$, 
    $\tau_j\gets M$, $V_j \gets \uni, \forall j \in [\ell]$ }
    \While{$I \neq \emptyset$ and $m_0 > 0$}{
        \For(\textcolor{blue}{\tcc*[h]{Update candidate sets with high-quality elements}}){$j\in I$ in parallel\label{line:tgone-update-for-begin}}{
            $\{V_j, \tau_j\} \gets \update(f_{A_j}\restriction_{\uni\setminus\left(\bigcup_{l\in [\ell]}A_l\right)}, V_j\setminus\left(\bigcup_{l\in [\ell]}\right)A_l, \tau_j, \epsi)$\label{line:tgone-update}\;
            \lIf{$\tau_j < \tau_{\min}$}{ $I\gets I-j$}\label{line:tgone-update-for-end}
        }
        \If(\textcolor{blue}{\tcc*[h]{Add 1 element to each solution alternately}}){$\exists i\in I$ \st $|V_i|< 2\ell$}{
            \For{$j\in I$ in sequence\label{line:tgone-for-begin}\label{line:pig-if-start}}{
                \lIf{$|V_j| = 0$}{
                    $\{V_j, \tau_j\} \gets \update(f_{A_j}\restriction_{\uni\setminus\left(\bigcup_{l\in [\ell]}A_l\right)}, V_j, \tau_j, \epsi)$\label{line:tgone-update-2}
                }
                \lIf{$\tau_j < \tau_{\min}$}{ $I\gets I-j$}
                \Else{
                    $x_j\gets $ randomly select one element from $V_j$ \label{line:tgone-select}\;
                    $A_j\gets A_j+x_j, A_j'\gets A_j'+x_j$\label{line:tgone-update-A}\;
                    $V_l\gets V_l-x_j, \forall l\in [\ell]$\;
                }
            }\label{line:tgone-for-end}
            $m_0\gets m_0-1$\;\label{line:pig-if-end}
        }
        \Else(\textcolor{blue}{\tcc*[h]{Add an equal number of elements to each solution}}){
            $\{\mathcal V_l: l\in I\} \gets \dist(\{V_l: l\in I\})$\label{line:tgone-dist}\label{line:pig-else-start} \tcp*[h]{Create pairwise disjoint candidate sets}\;
            $s \gets \min \{m_0, \min\{|\mathcal V_l|: l\in I\}\}$\;
            \lFor{$j\in I$ in parallel}{
                $i^*_j, B_j \gets \prefix(f_{A_j}, \mathcal V_j, s, \tau_j, \epsi)$\label{line:tgone-prefix}
            }
            $i^*\gets \min\{i^*_1, \ldots, i^*_\ell\}$ \label{line:tgone-index}\;
            \For(\textcolor{blue}{\tcc*[h]{Add $i^*$ high-quality elements to each set}}){$j\gets 1$ to $\ell$ in parallel \label{line:tgone-add-begin}}{
                $S_j\gets$ select $i^*$ elements from $\mathcal V_j[1:i^*_l]$ in three passes, prioritizing $B_j[i] = \textbf{true}$, then $B_j[i] = \textbf{none}$, and finally $B_j[i] = \textbf{false}$
                \label{line:tgone-subset}\;
                $S_j'\gets S_j\cap \left\{v_i \in \mathcal V_j: B_j[i]\neq \textbf{false}\right\}$\label{line:tgone-subset-2}\;
                $A_j\gets A_j\cup S_j, A_j'\gets A_j'\cup S_j'$\label{line:tgone-update-A-2}\;
            }
            $m_0\gets m_0-i*$ \label{line:tgone-update-size}\;\label{line:pig-else-end}
        }
    }
    \Return{$\{A_l': l\in [\ell]\}$}
}
\caption{A highly parallelized algorithm with $\oh{\ell^2\epsi^{-2} \log(n)\log\left(\frac{M}{\tau_{\min}}\right)}$ adaptivity
and $\oh{\ell^3\epsi^{-2} n \log(n)\log\left(\frac{M}{\tau_{\min}}\right)}$ query complexity.}
\label{alg:ptgone}
\end{algorithm*}

\subsubsection{Maintaining Alternating Additions during Parallel Algorithms}
\label{sec:ptg-alter}
This property is crucial to interlaced greedy variants 
introduced in prior sections. 
Below, we demonstrate that \ptgoneshort preserves this property.

During an iteration of the while loop in Alg.~\ref{alg:ptgone},
after updating the candidate sets in Lines~\ref{line:tgone-update-for-begin}-\ref{line:tgone-update-for-end},
two scenarios arise. 
In the first scenario, there exists a candidate set satisfies $|V_j| < 2\ell$,
Lines~\ref{line:pig-if-start}-\ref{line:pig-if-end} are executed,
and elements are appended to solutions one at a time in turn.
In this case, the alternating property is maintained immediately.

In the second scenario, Lines~\ref{line:pig-else-start}-\ref{line:pig-else-end} are executed.
Here, a block of elements with average marginal gain approximately exceeding $\tau_j$
is added to each solution $A_j$.
These blocks $S_j$ are of the same size $i^*$ (Line~\ref{line:tgone-subset})
selected from $\mathcal V_j$,
and guaranteed to be pairwise disjoint by 
Lemma~\ref{lemma:dist} (for \dist, Alg.~\ref{alg:dist}).
Crucially, threshold values $\tau_j$ 
remain unchanged during this step.
While a small fraction of elements in the blocks may have marginal gains below $\tau_j$,
the process retains the alternating property at a structural level: 
the uniform block sizes, and disjoint selection mimic the alternating addition of elements, even when processing multiple elements in parallel.

\subsubsection{Ensuring Sublinear Adaptivity Through Threshold Sampling}
The core mechanism for achieving sublinear adaptivity 
lies in iteratively reducing the pool of high-quality candidate elements 
(those with marginal gains above the threshold) 
by a constant factor within a constant number of adaptive rounds. 
This progressive reduction ensures efficient convergence.

At every iteration of the while loop in Alg.~\ref{alg:ptgone}, 
after updating the candidate sets,
if there exists $V_j$ such that $|V_j| < 2\ell$,
the following occurs after the for loop
(Lines~\ref{line:tgone-for-begin}-\ref{line:tgone-for-end}):
If the threshold $\tau_j$ remains unchanged,
one element from $V_j$ is added to the solution.
If $\tau_j$ is reduced,
$V_j$ is repopulated with high-quality elements.
This implies that a $1/(2\ell)$-fraction of $V_j$ is filtered out
after per iteration,
or even further, it becomes empty and
the threshold value is updated.

In the second case, where Lines~\ref{line:pig-else-start}-\ref{line:pig-else-end} are executed,
the algorithm employs \prefix (Alg.~\ref{alg:prefix}) in Line~\ref{line:tgone-prefix},
inspired by \ts~\citep{Chen2024}.
%
Then, the smallest prefix size $i^*$ is selected in Line~\ref{line:tgone-index}.
For the solution where its corresponding call to \prefix returns $i^*$,
the entire prefix with size $i^*$ is added to it.
This ensures that a constant fraction of elements in $\mathcal V_j$
can be filtered out by Lemma~\ref{lemma:prefix-prob} with probability at least $1/2$.
Moreover, Lemma~\ref{lemma:dist}
guarantees that $|\mathcal V_j| \ge \frac{1}{\ell}|V_j|$
for each candidate set.
As a result, with constant probability,
at least one candidate set will filter out a constant fraction
of the elements.

\subsubsection{Ensuring Most Added Elements Significantly Contribute to the Solutions}
\label{sec:ptg-gain}
In \ts, the selection of a \textit{good prefix} inherently ensures this property immediately.
However, when interlacing $\ell$ threshold sampling processes,
prefix sizes selected in Line~\ref{line:tgone-prefix} by each solution may vary.
To preserve the alternating addition property introduced in Section~\ref{sec:ptg-alter},
subsets of equal size are selected instead of variable-length good prefixes.
This raises the question: \textit{How can a good subset be derived from a good prefix?}
The solution lies in Line~\ref{line:tgone-subset} of Alg.~\ref{alg:ptgone}.

For any $j \in I$, if $i_j^* = i^*$, $S_j$ is directly the good prefix $\mathcal V_j[1:i_j^*]$.
Otherwise, if $i_j^* \ge i^*$,
$i^*$ elements are selected from $\mathcal V_j[1:i_j^*]$ in three sequential passes until the size limit is reached:

\textbf{First pass}: Iterate through the prefix,
selecting those with marginal gains strictly greater than
$\tau_j$ (marked as \text{true} in $B_j$).

\textbf{Second pass}: 
From the remaining elements in the prefix, select those 
with marginal gains between $0$ and $\tau_j$
(marked as \text{none} in $B_j$).

\textbf{Third pass}: Fill any remaining slots with remaining elements from the prefix (marked as \text{false} in $B_j$).

This approach, combined with submodularity, 
ensures that any element marked as \textbf{true} in the selected subset has a marginal gain greater than $\tau_j$.
By prioritizing the addition of these \textbf{true} elements,
the selected subset remain high-quality while adhering to the alternating
addition framework.


\section{Empirical Evaluation}\label{sec:exp}
\begin{figure*}[ht]
    \centering
    \subfigure[er, solution value]{\label{fig:er-val}
    \includegraphics[width=0.31\linewidth]{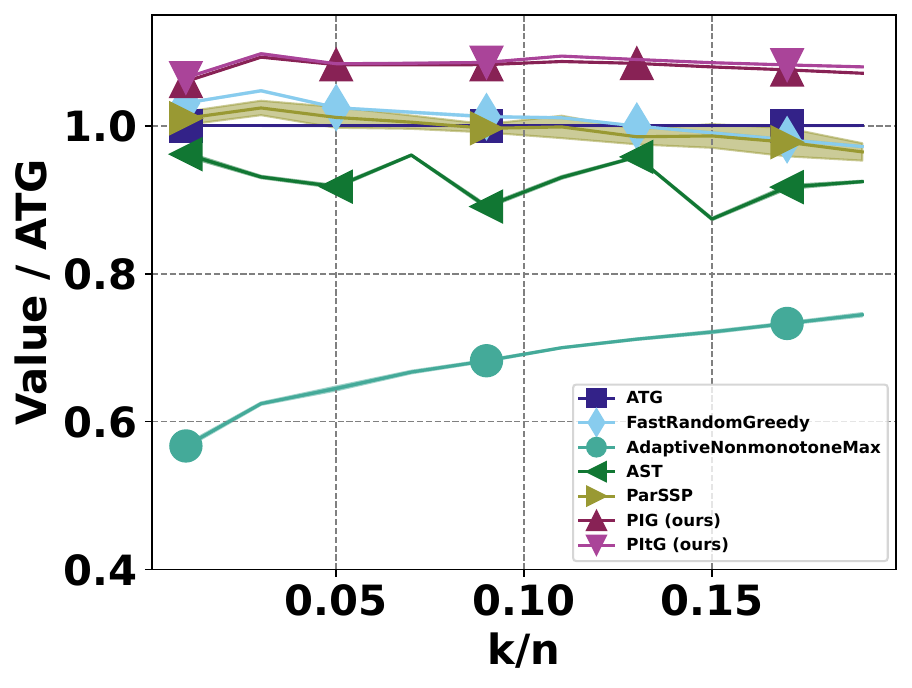}}
    \subfigure[er, query]{\label{fig:er-query}
    \includegraphics[width=0.31\linewidth]{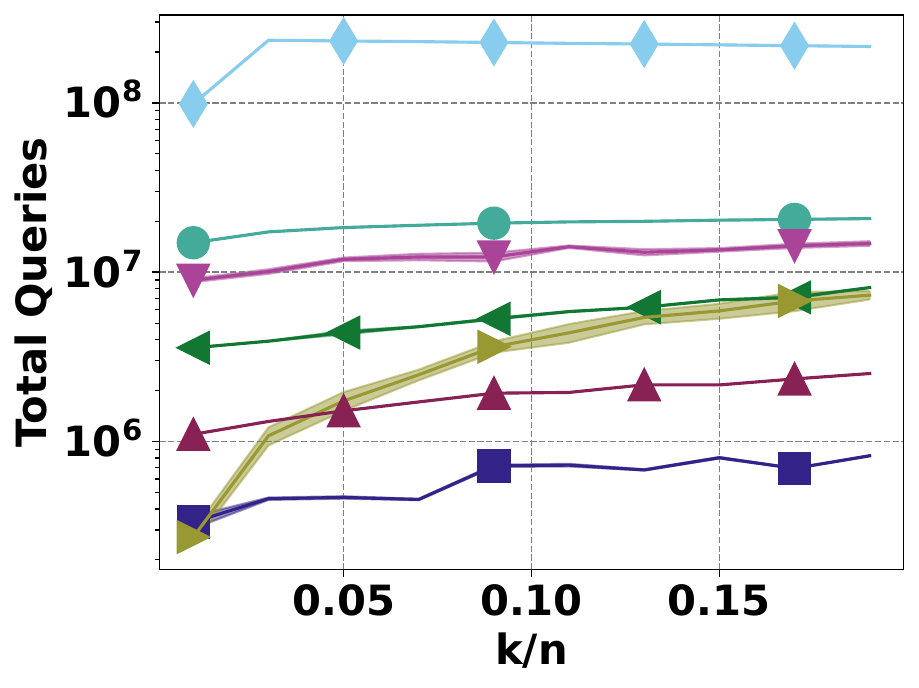}}
    \subfigure[er, round]{\label{fig:er-round}
    \includegraphics[width=0.31\linewidth]{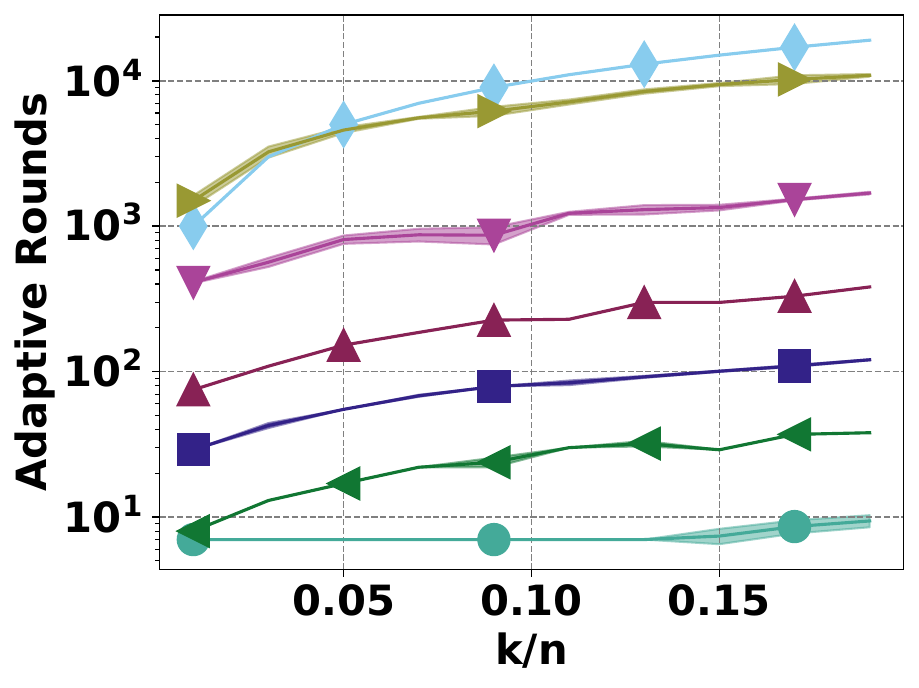}}
    \subfigure[twitch-gamers, solution value]{\label{fig:twitch-val}
    \includegraphics[width=0.31\linewidth]{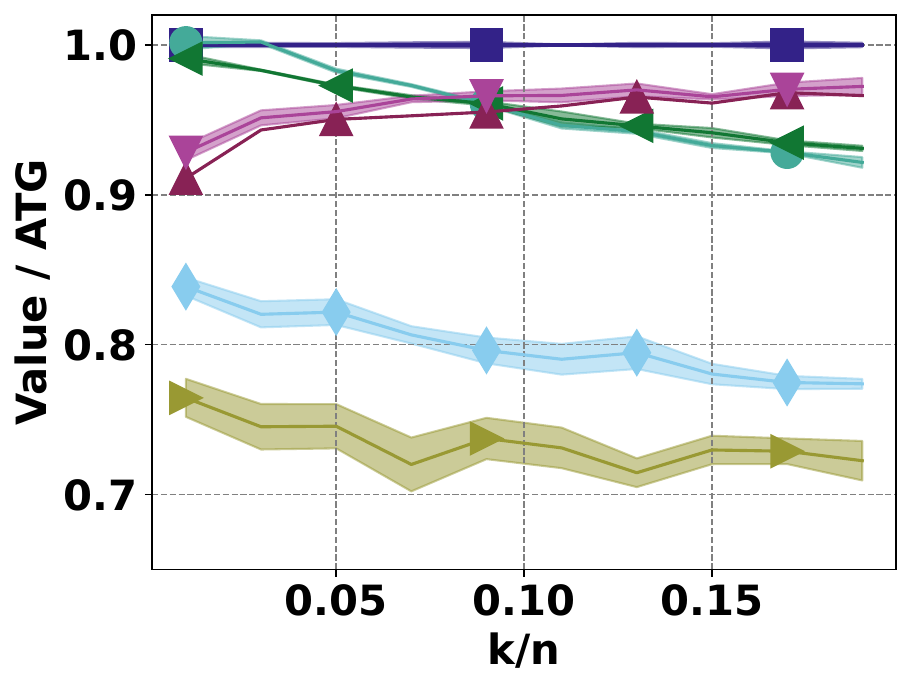}}
    \subfigure[twitch-gamers, query]{\label{fig:twitch-query}
    \includegraphics[width=0.31\linewidth]{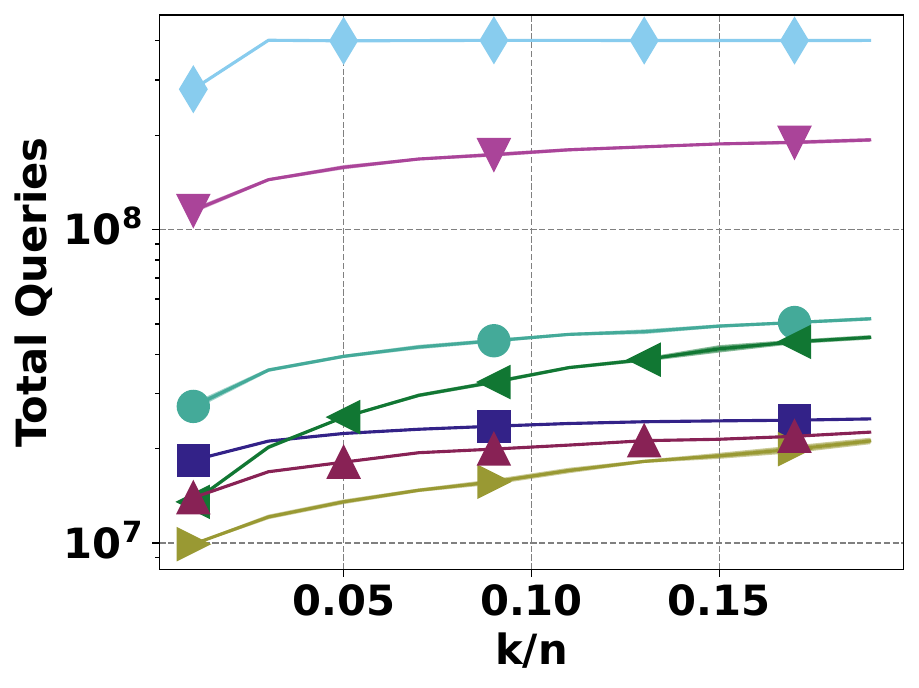}}
    \subfigure[twitch-gamers, round]{\label{fig:twitch-round}
    \includegraphics[width=0.29\linewidth]{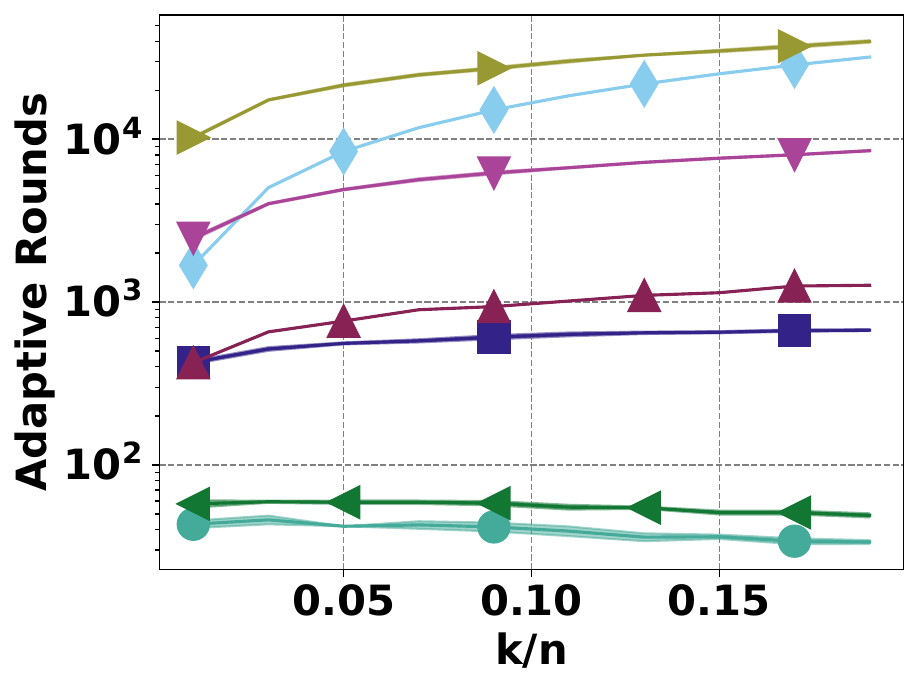}}
    \caption{Results for \maxcut on er with $n=99,997$,
    and \revmax on twitch-gamers with $n=168,114$.}
    \label{fig:er}
\end{figure*}
To evaluate the effectiveness of our algorithms,
we conducted experiments on $2$ applications of \nmon,
comparing its performance to $5$ baseline algorithms.
We measured the objective value (normalized by ATG~\citep{Chen2024}) achieved by each algorithm,
the number of queries made, and the number of adaptive rounds required.
The results showed that our algorithm achieved competitive objective value,
number of queries and adaptive rounds 
compared to nearly linear time algorithms.
\footnote{Our code is available at https://gitlab.com/luciacyx/size-constraints-parallel-algorithms.git.}

\textbf{Applications and Datasets.}
The algorithms were evaluated on $2$ applications: 
Maximum Cut (\maxcut) and Revenue Maximization (\revmax),
with er ($n=99,997$), a synthetic random graph,
web-Google ($n=875,713$),
musae-github ($n=37,700$) and twitch-gamers ($n=168,114$) datasets, the rest of which are
real-world social network datasets from Stanford Large Network Dataset Collection~\citep{snapnets}.
See Appendix~\ref{apx:app} and~\ref{apx:data} for more details.
We provide the result of er and twitch-gamers in the main paper,
while the other results can be found in Appendix~\ref{apx:nmon}.

\textbf{Baselines and their Setups.}
We compare our algorithms with \frg~\citep{DBLP:journals/mor/BuchbinderFS17}, 
\anm~\citep{fahrbach2018non},
\textsc{AST}~\citep{Chen2024}, \textsc{ATG}~\citep{Chen2024},
\rev
and \parssp~\citep{Cui2023} using binary search.
\color{black}
For those algorithms who required an \unc algorithm,
a random subset was employed instead.
For all algorithms, the accuracy parameter $\epsi$ and
the failure probability parameter $\delta$ were both set to $0.1$.
Whenever smaller $\epsi$ and $\delta$ values were specified by the algorithm, 
we substituted them with the input values.
\frg samples each element in the ground set $\uni$ with
a probability $p = 8k^{-1}\epsi^{-2}\log(2\epsi^{-1})$.
If $p > 1$, \randomgreedy in \citet{Buchbinder2014a} was implemented instead.
As for \textsc{Threshold-Sampling} in \anm, $100$ samples were used 
to estimate an indicator.
In the implementation of \ptgtwoshort, $\ell$ was set to $5$.
All randomized algorithms were repeated with $5$ runs,
and the mean is reported. 
The standard deviation is represented by a shaded region in the plots.

\textbf{Overview of Results.}
On the er dataset (Fig.~\ref{fig:er-val}), \ptgoneshort and \ptgtwoshort achieve highest objective values, 
followed by \textsc{ATG}, \frg \rev and \parssp. \color{black}
For the twitch-gamers dataset (Fig.~\ref{fig:twitch-val}), 
\ptgoneshort and \ptgtwoshort 
demonstrate greater robustness, particularly for larger values of $k$.

In terms of query complexity, 
\rev
\parssp exhibits the highest query complexity ($\oh{n\log^2(n) \log(k)}$),
followed by our proposed algorithms ($\oh{n\log(n) \log(k)}$), 
\color{black}
compared to other algorithms ($\oh{n\log(k)}$).
\rev
Nevertheless, \ptgoneshort achieves second-best performance across both datasets. 
The top three performing algorithms consistently include \ptgoneshort, \textsc{ATG} and \parssp.
\color{black}
\ptgtwoshort is slightly better than \anm on er,
and both are more efficient than \frg on the two datasets.
\frg's high query count arises from its requirement for a large number of samples (specifically, $8nk^{-1}\epsi^{-1}\log(2\epsi^{-1})$) at every iteration.
When $8k^{-1}\epsi^{-1}\log(2\epsi^{-1}) \ge 1$, it defaults to executing \rg instead,
which incurs $\oh{nk}$ queries.

Regarding adaptive rounds (Fig.~\ref{fig:er-round} and~\ref{fig:twitch-query}),
the results align with theoretical guarantees.
\anm and \textsc{AST} operate with
$\oh{\log(n)}$ adaptivity
and achieve the best performance.
They are followed by \textsc{ATG}, \ptgoneshort, and \ptgtwoshort
which all achieve $\oh{\log(n)\log(k)}$ adaptivity.

\rev
\parssp exhibits an interesting trade-off: 
while its binary search procedure benefits query complexity, 
it significantly increases adaptive rounds, 
performing worse than even \frg (which requires $k$ adaptive rounds) on the twitch-gamers dataset. 
\color{black}


\clearpage
\section*{Impact Statement}
This paper presents work whose goal is to advance the field of 
Machine Learning. There are many potential societal consequences 
of our work, none which we feel must be specifically highlighted here.
\bibliography{yixin-refs}
\bibliographystyle{icml2025}

\newpage
\appendix
\onecolumn
\section{Technical Lemmata}\label{apx:tech}

\begin{lemma}\label{lemma:val-inq}
    \begin{align*}
        & 1-\frac{1}{x}\le \log(x) \le x-1, &\forall x>0\\
        & 1-\frac{1}{x+1}\ge e^{-\frac{1}{x}} , &\forall x\in \mathbb{R}\\
        & (1-x)^{y-1}\ge e^{-xy}, &\forall xy \le 1
    \end{align*}
\end{lemma}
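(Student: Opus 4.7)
The plan is to derive all three inequalities from the single master inequality $\log(1+t) \le t$ (valid for $t > -1$), which in turn comes from the convexity of $e^t$ at $t=0$ via $e^t \ge 1+t$. Each part of the lemma is then obtained by an algebraic substitution and, where needed, a rearrangement; no fresh analytic idea is required.

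For the first inequality, I would set $t = x-1$ in $\log(1+t) \le t$ to obtain the upper bound $\log(x) \le x-1$ directly for $x > 0$. The lower bound then follows from a self-dual substitution: apply the upper bound to $1/x$ to get $\log(1/x) \le 1/x - 1$, and multiply by $-1$ to conclude $\log(x) \ge 1 - 1/x$. Since this first inequality is the ``master'' form used downstream, I would present it first and keep the proof tight.

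For the second inequality, I would rewrite $1 - \frac{1}{x+1} = \frac{x}{x+1}$ and reduce to showing $\log\!\bigl(\tfrac{x+1}{x}\bigr) \le \tfrac{1}{x}$, equivalently $\log(1 + 1/x) \le 1/x$. This is the upper bound of the first inequality applied at $1 + 1/x$, valid whenever $1 + 1/x > 0$. (I would note that the hypothesis ``$\forall x \in \mathbb{R}$'' should really be read with that implicit positivity, matching the way the lemma is used in the paper.)

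For the third inequality, I would take logarithms to reduce the claim to $(y-1)\log(1-x) \ge -xy$ for $x < 1$. From the first inequality applied to $1-x$, one has $\log(1-x) \ge 1 - \frac{1}{1-x} = -\frac{x}{1-x}$ and $\log(1-x) \le -x$. The main obstacle — really the only nontrivial point — is correctly splitting on the sign of $y-1$ so that the right bound is multiplied in the right direction: when $y \ge 1$ I would use $\log(1-x) \ge -x/(1-x)$ to get $(y-1)\log(1-x) \ge -x(y-1)/(1-x)$, and then verify $-x(y-1)/(1-x) \ge -xy$ reduces exactly to $xy \le 1$ (which is the hypothesis). When $y < 1$, I would instead use the easier bound $\log(1-x) \le -x$, so that $(y-1)\log(1-x) \ge -x(y-1) = x(1-y) \ge -xy$ whenever $x \ge 0$, and handle $x < 0$ symmetrically via $\log(1-x) \ge -x$. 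Combining these cases exhausts the region $\{xy \le 1, x < 1\}$, completing the proof.
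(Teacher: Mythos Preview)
The paper states this lemma without proof (it appears in the ``Technical Lemmata'' appendix as a list of standard inequalities), so there is no proof to compare against. Your approach of deriving everything from $\log(1+t)\le t$ is the standard one and is correct on the domain where the inequalities actually hold.

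One small issue: in your third-inequality case analysis, the phrase ``handle $x<0$ symmetrically via $\log(1-x)\ge -x$'' is garbled, since $\log(1-x)\le -x$ always (with equality only at $x=0$). More importantly, the inequality $(1-x)^{y-1}\ge e^{-xy}$ is simply \emph{false} for some $x<0$ even with $xy\le 1$: take $x=-1$, $y=2$, so $xy=-2\le 1$, but $(1-x)^{y-1}=2<e^{2}=e^{-xy}$. Likewise the second inequality fails for $-1<x<0$. So the lemma's hypotheses are stated loosely, and the paper only ever invokes these bounds with $x=1/\ell\in(0,1)$ and $y=\ell\ge 1$ (to get $(1-1/\ell)^{\ell-1}\ge e^{-1}$). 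Your argument is complete and correct on the region $0\le x<1$, which covers every use in the paper; you were right to flag the implicit positivity, and you can drop the $x<0$ sub-case entirely rather than try to patch it.
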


\begin{lemma}[Chernoff bounds \citep{mitzenmacher2017probability}]\label{lemma:chernoff}
    Suppose $X_1$, ... , $X_n$ are independent binary random variables such that 
    $\prob{X_i = 1} = p_i$. Let $\mu = \sum_{i=1}^n p_i$, and 
    $X = \sum_{i=1}^n X_i$. Then for any $\delta \geq 0$, we have
    \begin{align}
        \prob{X \ge (1+\delta)\mu} \le e^{-\frac{\delta^2 \mu}{2+\delta}}.
    \end{align}
    Moreover, for any $0 \leq \delta \leq 1$, we have
    \begin{align}
        \prob{X \le (1-\delta)\mu} \le e^{-\frac{\delta^2 \mu}{2}}.
    \end{align}
\end{lemma}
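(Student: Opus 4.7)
The plan is to follow the standard Chernoff argument via the exponential (Markov) method applied to the moment generating function of $X$, then to optimize the free parameter $t$, and finally to bound the resulting expression by the two clean exponential forms in the statement.

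For the upper tail, I would fix any $t > 0$ and write $\prob{X \ge (1+\delta)\mu} = \prob{e^{tX} \ge e^{t(1+\delta)\mu}}$. Markov's inequality then gives $\prob{X \ge (1+\delta)\mu} \le \ex{e^{tX}} / e^{t(1+\delta)\mu}$. Because the $X_i$ are independent, $\ex{e^{tX}} = \prod_{i=1}^n \ex{e^{tX_i}}$, and a direct computation gives $\ex{e^{tX_i}} = 1 + p_i(e^t-1) \le e^{p_i(e^t-1)}$, so $\ex{e^{tX}} \le e^{\mu(e^t-1)}$. Choosing $t = \log(1+\delta)$, which is nonnegative since $\delta \ge 0$, yields the classical intermediate bound
\begin{equation*}
\prob{X \ge (1+\delta)\mu} \le \left(\frac{e^{\delta}}{(1+\delta)^{1+\delta}}\right)^{\mu} = \exp\bigl(-\mu\bigl[(1+\delta)\log(1+\delta) - \delta\bigr]\bigr).
\end{equation*}
It then suffices to establish the analytic inequality $(1+\delta)\log(1+\delta) - \delta \ge \delta^2/(2+\delta)$ for all $\delta \ge 0$, which I would verify by defining $g(\delta) = (1+\delta)\log(1+\delta) - \delta - \delta^2/(2+\delta)$, checking $g(0) = 0$, and showing $g'(\delta) \ge 0$ by a short calculation.

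For the lower tail, the argument is symmetric but uses $t < 0$. I would set $s = -t > 0$ and apply Markov to $e^{-sX}$, obtaining $\prob{X \le (1-\delta)\mu} \le \ex{e^{-sX}}/e^{-s(1-\delta)\mu} \le \exp\bigl(\mu(e^{-s}-1) + s(1-\delta)\mu\bigr)$. Optimizing over $s$ with the choice $s = -\log(1-\delta)$, valid for $0 \le \delta < 1$ (with the boundary case $\delta = 1$ handled separately by a limiting argument or direct computation), gives
\begin{equation*}
\prob{X \le (1-\delta)\mu} \le \left(\frac{e^{-\delta}}{(1-\delta)^{1-\delta}}\right)^{\mu}.
\end{equation*}
The remaining step is the inequality $(1-\delta)\log(1-\delta) + \delta \ge \delta^2/2$ for $\delta \in [0,1]$, which I would prove by expanding $\log(1-\delta) = -\sum_{k\ge 1}\delta^k/k$, leading to a series with nonnegative terms that dominates $\delta^2/2$.

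The main obstacle is really just these two analytic inequalities about $(1\pm\delta)\log(1\pm\delta)$; the probabilistic part (Markov plus independence plus the bound $1+x \le e^x$) is routine and identical to the textbook argument. Once the two elementary inequalities are in hand, the two displayed bounds follow immediately. Since this lemma is cited from \citet{mitzenmacher2017probability}, I would expect the paper to simply cite it rather than reproduce the derivation, but the plan above is the self-contained route.
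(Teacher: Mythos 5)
Your derivation is the standard textbook argument (Markov applied to the moment generating function, the bound $1+x\le e^{x}$, optimizing at $t=\log(1\pm\delta)$, then the two elementary inequalities $(1+\delta)\log(1+\delta)-\delta\ge\delta^{2}/(2+\delta)$ and $(1-\delta)\log(1-\delta)+\delta\ge\delta^{2}/2$), and it is correct; the $\delta=1$ edge case for the lower tail is indeed settled directly by $\prob{X=0}=\prod_i(1-p_i)\le e^{-\mu}\le e^{-\mu/2}$. As you anticipated, the paper does not prove this lemma at all---it cites \citet{mitzenmacher2017probability} and states the bounds as a black-box tool---so there is no in-paper proof to compare against; your plan is a valid self-contained route to the same result.
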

\begin{lemma}[\citet{Chen2021}] \label{lemma:indep}
    Suppose there is a sequence of $n$ Bernoulli trials:
    $X_1, X_2, \ldots, X_n,$
    where the success probability of $X_i$
    depends on the results of
    the preceding trials $X_1, \ldots, X_{i-1}$.
    Suppose it holds that $$\prob{X_i = 1 | X_1 = x_1, X_2 = x_2, \ldots, X_{i-1} = x_{i-1} } \ge \eta,$$ where $\eta > 0$ is a constant and $x_1,\ldots,x_{i-1}$ are arbitrary.
  
    Then, if $Y_1,\ldots, Y_n$ are independent Bernoulli trials, each with probability $\eta$ of
    success, then $$\prob {\sum_{i = 1}^n X_i \le b } \le \prob{\sum_{i = 1}^n Y_i \le b }, $$
    where $b$ is an arbitrary integer.
  
    Moreover, let $A$ be the first occurrence of success in sequence $X_i$.
    Then, $$\ex{A} \le 1/\eta.$$
\end{lemma}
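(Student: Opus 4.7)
The plan is to prove both parts by a standard coupling argument that realises $X_i$ and $Y_i$ jointly so that $Y_i \le X_i$ pointwise. First, I would introduce auxiliary independent uniform random variables $U_1, \ldots, U_n \sim \text{Unif}[0,1]$, independent of everything else. Then, conditionally on $X_1 = x_1, \ldots, X_{i-1} = x_{i-1}$, write $p_i(x_1,\ldots,x_{i-1}) := \prob{X_i = 1 \mid X_1 = x_1, \ldots, X_{i-1} = x_{i-1}} \ge \eta$. Define the coupled pair
\begin{equation*}
X_i = \mathbf{1}\bigl[U_i \le p_i(X_1,\ldots,X_{i-1})\bigr], \qquad Y_i = \mathbf{1}[U_i \le \eta].
\end{equation*}
A routine check shows $X_i$ has the correct conditional distribution and $Y_1, \ldots, Y_n$ are i.i.d.\ Bernoulli($\eta$). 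Since $p_i \ge \eta$, the indicator $\{U_i \le \eta\}$ implies $\{U_i \le p_i\}$, giving $Y_i \le X_i$ almost surely for every $i$.

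Given this pointwise domination, summing over $i$ yields $\sum_{i=1}^n Y_i \le \sum_{i=1}^n X_i$ almost surely, and therefore
\begin{equation*}
\Bigl\{\sum_{i=1}^n X_i \le b\Bigr\} \subseteq \Bigl\{\sum_{i=1}^n Y_i \le b\Bigr\},
\end{equation*}
from which the first inequality follows by monotonicity of probability. This handles the stochastic-dominance claim for arbitrary integer $b$.

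For the expected-first-success bound, let $A = \min\{i : X_i = 1\}$ and $A' = \min\{i : Y_i = 1\}$ (with the convention that the minimum of an empty set is $+\infty$, which occurs with probability $0$ for $A'$ in the infinite-trial regime since $\eta > 0$). Because $Y_i \le X_i$ pointwise, whenever $Y_i = 1$ we have $X_i = 1$, so $A \le A'$ almost surely. Taking expectations gives $\ex{A} \le \ex{A'}$, and $A'$ is geometrically distributed with parameter $\eta$, so $\ex{A'} = 1/\eta$, completing the proof.

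The argument is essentially mechanical once the coupling is written down; there is no genuine obstacle beyond being careful that the conditional distribution of $X_i$ is correctly reproduced by the uniform-threshold construction and that measurability across the filtration generated by $(U_1, \ldots, U_{i-1})$ is handled. If the lemma is to be stated for finite $n$ in the expected-hitting-time part, I would note that $\ex{A}$ should be interpreted on the event $\{A \le n\}$, or equivalently extended by letting $A = n+1$ (or $+\infty$) otherwise; the coupling bound $A \le A'$ still yields $\ex{\min(A,n+1)} \le \ex{\min(A',n+1)} \le 1/\eta$, so no additional work is needed.
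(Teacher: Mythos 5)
The paper does not actually prove this lemma; it is stated in Appendix~A and attributed to \citet{Chen2021}, so there is no in-paper argument to compare against. Your coupling construction is the standard and correct way to establish the result: the uniform-threshold realisation $X_i = \mathbf{1}[U_i \le p_i(X_1,\ldots,X_{i-1})]$ does reproduce the prescribed conditional law (a quick check against the filtration $\sigma(U_1,\ldots,U_{i-1})$ confirms this), $Y_i = \mathbf{1}[U_i \le \eta]$ is an i.i.d.\ Bernoulli$(\eta)$ sequence independent in the right sense, and $p_i \ge \eta$ immediately gives $Y_i \le X_i$ almost surely. The stochastic-dominance conclusion and the hitting-time bound $\ex{A} \le \ex{A'} = 1/\eta$ both follow as you state, and your remark on how to interpret $A$ when the $n$-trial sequence has no success is a reasonable way to make the second claim well-posed. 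The proof is correct and self-contained.
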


\section{Propositions on Submodularity} \label{apx:prop}

\begin{proposition}\label{prop:sum-marge}
Let $\{A_1, A_2, \ldots, A_m\}$ be $m$ pairwise disjoint subsets of $\uni$,
and $B\in \uni$.
For any submodular function $f: 2^\uni \to \reals$,
it holds that
\begin{align*}
\text{1) }&\sum\limits_{i\in [m]} \marge{A_i}{B}\ge \marge{\bigcup\limits_{i\in [m]}A_i}{B},\\
\text{2) }&\sum_{i\in [m]}\ff{B\cup A_i}\ge (m-1)\ff{B}.
\end{align*}
\end{proposition}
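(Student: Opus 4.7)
The plan is to derive both parts directly from the basic submodular inequality $\ff{X} + \ff{Y} \ge \ff{X\cup Y} + \ff{X\cap Y}$, exploiting pairwise disjointness of the $A_i$'s in a telescoping manner. Neither direction requires anything beyond submodularity and non-negativity of $f$.

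\textbf{Part 1.} I would first record the auxiliary fact that for any $S\subseteq \uni$ and any $T_1\subseteq T_2\subseteq \uni$, one has $\marge{S}{T_1}\ge \marge{S}{T_2}$. This is the set-valued analogue of the elementwise diminishing returns condition and follows immediately by writing $S=\{s_1,\ldots,s_k\}$ and telescoping:
\[
\marge{S}{T_r} \;=\; \sum_{j=1}^{k} \marge{s_j}{T_r \cup \{s_1,\ldots,s_{j-1}\}}, \qquad r\in\{1,2\},
\]
and comparing term-by-term via elementwise submodularity (since $T_1\cup\{s_1,\ldots,s_{j-1}\}\subseteq T_2\cup\{s_1,\ldots,s_{j-1}\}$). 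With this in hand, I telescope the left-hand side of Part 1 as
\[
\marge{\bigcup_{i\in [m]} A_i}{B} \;=\; \sum_{i=1}^{m}\marge{A_i}{B\cup A_1\cup\cdots\cup A_{i-1}},
\]
and then upper-bound each summand by $\marge{A_i}{B}$ using the auxiliary fact with $T_1=B$ and $T_2=B\cup A_1\cup\cdots\cup A_{i-1}$.

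\textbf{Part 2.} I would induct on $m$, with the strengthened invariant
\[
\sum_{i=1}^{m}\ff{B\cup A_i} \;\ge\; (m-1)\ff{B} + \ff{B\cup \bigcup_{i\in[m]} A_i},
\]
from which the claim follows by non-negativity of $f$. The base case $m=1$ is an equality. For the inductive step, let $U_m = B\cup \bigcup_{i=1}^{m} A_i$ and apply submodularity to the pair $X=U_m$, $Y=B\cup A_{m+1}$. The key computation, using pairwise disjointness of the $A_i$'s, is $X\cup Y = U_{m+1}$ and $X\cap Y = B$; this gives $\ff{U_m}+\ff{B\cup A_{m+1}} \ge \ff{U_{m+1}} + \ff{B}$. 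Combining with the inductive hypothesis and rearranging yields the invariant for $m+1$.

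\textbf{Anticipated obstacle.} There is no real difficulty; the only step that deserves care is the intersection calculation $X\cap Y = B$ in the inductive step, which is exactly where the pairwise disjointness hypothesis on $\{A_i\}$ is consumed. If that hypothesis is dropped, $X\cap Y$ would merely contain $B$ and the argument would need monotonicity to bridge the gap, which we do not assume.
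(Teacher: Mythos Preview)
Your proof is correct. The paper states this proposition as a standard consequence of submodularity and does not supply its own proof, so there is nothing to compare against; your telescoping argument for Part~1 and your induction with the strengthened invariant (then dropping $\ff{B\cup\bigcup_i A_i}$ via non-negativity) for Part~2 are exactly the canonical derivations one would expect, and your observation that pairwise disjointness is consumed precisely in the intersection calculation $X\cap Y=B$ is accurate.
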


\begin{proposition}\label{prop:subset}
Let $A=\{a_1, \ldots, a_m\}$ and $A_i = \{a_1, \ldots, a_i\}$ for all $i\in [m]$.
For any submodular function $f: 2^\uni \to \reals$,
let $B = \argmax\limits_{B\subseteq A, |B| = m-1}\sum\limits_{a_i \in B}\marge{a_i}{A_{i-1}}$.
It holds that
$\ff{B} \ge \left(1-\frac{1}{m}\right)\ff{A}$.
\end{proposition}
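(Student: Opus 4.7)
The plan is to combine a telescoping identity with a one-line averaging argument and a single application of submodularity. Write $\delta_i = \marge{a_i}{A_{i-1}}$ for $i\in[m]$, so that the telescoping identity gives $\ff{A}-\ff{\emptyset} = \sum_{i=1}^m \delta_i$. Because the objective $\sum_{a_i\in B}\delta_i$ is a sum over an $(m-1)$-subset, maximizing it is equivalent to dropping the single element with the smallest prefix-marginal; hence the chosen $B$ must equal $A\setminus\{a_{j^*}\}$, where $j^* = \argmin_{i\in[m]}\delta_i$. A direct averaging argument then yields $\delta_{j^*} \le \frac{1}{m}(\ff{A}-\ff{\emptyset})$, so
\[
\sum_{a_i\in B}\delta_i \;\ge\; \left(1-\frac{1}{m}\right)\bigl(\ff{A}-\ff{\emptyset}\bigr).
\]

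Next, I would lower-bound $\ff{B}$ by this retained sum using submodularity. For each $a_i\in B$, let $B_{<i} = B\cap A_{i-1}$, i.e., the elements of $B$ that appear before $a_i$ in the original ordering of $A$. Since $B_{<i}\subseteq A_{i-1}$, submodularity gives $\marge{a_i}{B_{<i}} \ge \marge{a_i}{A_{i-1}} = \delta_i$. Telescoping along the ordering of $A$ restricted to $B$ then yields
\[
\ff{B}-\ff{\emptyset} \;=\; \sum_{a_i\in B}\marge{a_i}{B_{<i}} \;\ge\; \sum_{a_i\in B}\delta_i \;\ge\; \left(1-\frac{1}{m}\right)\bigl(\ff{A}-\ff{\emptyset}\bigr).
\]
Rearranging and using the standing nonnegativity of $f$ (recall $f:2^{\uni}\to\reals$ with $\reals=\mathbb{R}_{\ge 0}$) gives $\ff{B} \ge (1-1/m)\ff{A} + \ff{\emptyset}/m \ge (1-1/m)\ff{A}$, which is exactly the claim.

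There is no real obstacle here: the argument is routine once the decomposition along the natural ordering is chosen. The only mild subtlety is the residual $\ff{\emptyset}$ term that appears from telescoping, which is absorbed into the bound thanks to $\ff{\emptyset}\ge 0$. I would expect this proposition to read as a short exercise built from the same telescope-and-swap-reference-set template used throughout the blending analysis, and in particular it slots in naturally wherever a greedy prefix has to be shown to lose only a $1/m$-fraction of its value when its weakest element is removed.
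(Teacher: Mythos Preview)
Your proof is correct. The paper states Proposition~\ref{prop:subset} in Appendix~\ref{apx:prop} without proof, treating it as a standard consequence of submodularity, so there is no paper argument to compare against; your telescoping-plus-averaging approach, followed by the single application of submodularity to pass from $\marge{a_i}{A_{i-1}}$ to $\marge{a_i}{B_{<i}}$, is exactly the natural proof one would supply.
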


\begin{proposition}\label{prop:dif-opt}
For any submodular function $f: 2^\uni \to \reals$,
let $O_1 = \argmax_{S\subseteq \uni, |S|\le k_1} \ff{S}$ and 
$O_2 = \argmax_{S\subseteq \uni, |S|\le k_2} \ff{S}$.
It holds that 
\[\ff{O_1} \ge \frac{k_1}{k_2}\ff{O_2}.\]
\end{proposition}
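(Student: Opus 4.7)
The plan is to compare the two optima via a random-subset argument. The natural regime is $k_1 \le k_2$ (for $k_1 > k_2$ the asserted ratio exceeds $1$ and the statement is implicitly restricted to this case). Any size-$k_1$ subset $T$ of $O_2$ is feasible for the $k_1$-budget problem, so $\ff{O_1} \ge \ff{T}$ for every such $T$; averaging gives
\[
\ff{O_1} \;\ge\; \exs{T}{\ff{T}},
\]
where $T$ is drawn uniformly from the size-$k_1$ subsets of $O_2$. It then suffices to show $\exs{T}{\ff{T}} \ge (k_1/k_2)\ff{O_2}$.

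To analyze this expectation, I would generate $T$ as the length-$k_1$ prefix of a uniformly random permutation $\pi$ of $O_2$. Setting $T_i=\{o_{\pi(1)},\dots,o_{\pi(i)}\}$ and $\Delta_i = \ex{\ff{T_i}-\ff{T_{i-1}}}$, a telescoping identity yields
\begin{align*}
\ff{O_2}-\ff{\emptyset} &= \sum_{i=1}^{k_2} \Delta_i,\\
\exs{T}{\ff{T}}-\ff{\emptyset} &= \sum_{i=1}^{k_1} \Delta_i.
\end{align*}

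The crux of the argument is to show the sequence $(\Delta_i)$ is non-increasing. By symmetry of the uniform permutation, $\Delta_i$ equals the expectation of $\marge{o}{S}$ over a uniformly chosen $o\in O_2$ and a uniformly chosen $(i-1)$-subset $S\subseteq O_2\setminus\{o\}$. Coupling indices $i$ and $i+1$ by additionally drawing $s\in O_2\setminus(S\cup\{o\})$ uniformly produces a set $S+s$ whose marginal distribution is a uniformly random $i$-subset of $O_2\setminus\{o\}$; submodularity gives $\marge{o}{S}\ge\marge{o}{S+s}$ pointwise, and averaging delivers $\Delta_i\ge\Delta_{i+1}$. Verifying that the coupling actually yields the correct uniform distribution on $i$-subsets — so that the expectation on the right side of the coupled inequality matches the definition of $\Delta_{i+1}$ — is the main technical obstacle.

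Once $(\Delta_i)$ is non-increasing, a standard averaging fact implies the average of the first $k_1$ terms is at least the average of all $k_2$ terms, so $\sum_{i=1}^{k_1}\Delta_i \ge (k_1/k_2)\sum_{i=1}^{k_2}\Delta_i$. Combined with non-negativity of $f$ (so $\ff{\emptyset}\ge 0$ and the leftover $f(\emptyset)$ terms only help), this gives $\exs{T}{\ff{T}}\ge (k_1/k_2)\ff{O_2}$, and chaining with the first inequality completes the proof. As a simpler alternative in the special case $k_1 \mid k_2$, one may instead partition $O_2$ into $k_2/k_1$ disjoint size-$k_1$ chunks $T_1,\dots,T_{k_2/k_1}$, apply subadditivity (which follows from non-negativity plus submodularity) to obtain $\sum_j \ff{T_j}\ge \ff{O_2}$, and use $\ff{T_j}\le \ff{O_1}$ to conclude directly.
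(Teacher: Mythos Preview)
The paper states this proposition in Appendix~\ref{apx:prop} without proof, treating it as a standard fact about nonnegative submodular functions; there is no argument in the paper to compare against.

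Your random-permutation argument is correct and is one of the standard ways to establish this inequality. The coupling you describe does produce the right distribution: conditioned on $o$, if $S$ is uniform among $(i-1)$-subsets of $O_2\setminus\{o\}$ and $s$ is uniform in $O_2\setminus(S\cup\{o\})$, then $S+s$ is uniform among $i$-subsets of $O_2\setminus\{o\}$ (a short counting check), so $\Delta_i\ge\Delta_{i+1}$ follows from submodularity. The averaging step and the use of $f(\emptyset)\ge 0$ are fine. One small point: you implicitly write $\sum_{i=1}^{k_2}\Delta_i = f(O_2)-f(\emptyset)$, which assumes $|O_2|=k_2$. If $|O_2|<k_2$, either $|O_2|\le k_1$ (so $O_2$ is itself feasible for the $k_1$-problem and the claim is immediate) or you run the same argument with $|O_2|$ in place of $k_2$ and use $k_1/|O_2|\ge k_1/k_2$. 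This is a cosmetic fix, not a gap. Your observation that the statement is only meaningful for $k_1\le k_2$ is also correct and matches how the paper applies the proposition.
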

\section{Pseudocode and Theoretical Guarantees of \itg~\citep{DBLP:conf/kdd/ChenK23}}\label{apx:pseudocode}
In this section, we provide the original greedy version of 
\itg~\citep{DBLP:conf/kdd/ChenK23}
with its theoretical guarantees.

\begin{algorithm}[H]
	\KwIn{oracle $f:2^{\uni} \to \reals$, constraint $k$, error $\epsi$}
    \Init{$\ell \gets \frac{2e}{\epsi}+1$, $G_0\gets \emptyset$}
    \For{$m\gets 1$ to $\ell$}{
    	$\{a_1, \ldots, a_\ell\}\gets$ top $\ell$ elements in $\uni\setminus G_{m-1}$
		with respect to marginal gains on $G_{m-1}$\;
		\For{$u\gets 0$ to $\ell$ in parallel}{
			\lIf{$u = 0$}{$A_{u, l}\gets G\cup \{a_l\}$, for all $1\le l\le \ell$}
			\lElse{$A_{u, l}\gets G\cup \{a_u\}$, for all $1\le l\le \ell$}
			\For{$j\gets 1$ to $k/\ell-1$}{
				\For{$i\gets 1$ to $\ell$}{
					$x_{j, i} \gets \argmax_{x\in \uni\setminus\left(\bigcup_{l=1}^{\ell}A_{u, l}\right)}\marge{x}{A_{u, i}}$\;
					$A_{u, i}\gets A_{u, i}\cup \{x_{j, i}\}$\;
				}
			}
		}
		$G_m\gets$ a random set in $\{A_{u, i}:1\le i\le \ell, 0\le u\le \ell\}$
    }
    \Return{$G_\ell$}\;
    \caption{$\itg(f,k,\epsi)$: An $1/(e+\epsi)$-approximation algorithm for \sm}
    \label{alg:itg}
\end{algorithm}
\begin{theorem}
Let $\epsi \ge 0$, and $(f, k)$ be an instance of \sm, 
with optimal solution value \opt.
Algorithm \itg outputs a set $G_\ell$ with $\oh{\epsi^{-2}kn}$ queries
such that $\opt \le (e+\epsi)\ex{\ff{G_\ell}}$ with 
probability $(\ell+1)^{-\ell}$, where $\ell = \frac{2e}{\epsi}+1$.
\end{theorem}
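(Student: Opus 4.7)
My plan is to decompose the analysis into two components: (i) a deterministic approximation guarantee conditional on the random selection of $G_m$ landing in a specifically chosen ``good'' pool at every outer iteration $m \in [\ell]$, and (ii) a probability bound for this event, which together with the query-count accounting yields the theorem. The blended analysis of Alg.~\ref{alg:gdtwo} is not available here since Alg.~\ref{alg:itg} does rely on the guessing step, so I must instead argue that \emph{some} pool index $u^*_m \in \{0, 1, \ldots, \ell\}$ exists at every iteration for which the interlaced greedy rule alone certifies the key marginal-gain inequality.

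\textbf{Identifying a good pool.} For each outer iteration $m$, I would define an index $u^*_m \in \{0, 1, \ldots, \ell\}$ as follows: if none of the top-$\ell$ seeds $\{a_1, \ldots, a_\ell\}$ lies in $O \setminus G_{m-1}$, set $u^*_m = 0$; otherwise let $u^*_m$ be the largest $l$ such that $a_l \in O \setminus G_{m-1}$, ensuring that the seed $a_{u^*_m}$ used to initialize every $A_{u^*_m, l}$ already belongs to $O$. With this choice, the interlaced greedy rule---which at each inner step picks the element maximizing marginal gain over the complement of $\bigcup_l A_{u^*_m, l}$---together with submodularity (Proposition~\ref{prop:sum-marge}) should yield the pool's key structural inequality
\begin{align}
\marge{O}{A_{u^*_m, i}} \le \ell \cdot \marge{A_{u^*_m, i}}{G_{m-1}}, \qquad \forall i \in [\ell]. \label{eq:itg-key}
\end{align}
The factor $\ell$ is the price of interlacing $\ell$ greedy threads: every element of $O \setminus \bigcup_l A_{u^*_m, l}$ could have been chosen by any of the $\ell$ threads, so its marginal gain is dominated $\ell$ times when summed across threads.

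\textbf{Recursion, probability, and queries.} Conditioning on the event $\mathcal{E}_m$ that $G_m$ is drawn from pool $u^*_m$, the selected set is uniform over $\{A_{u^*_m, i} : i \in [\ell]\}$. Averaging~\eqref{eq:itg-key} over $i$ and invoking a nonmonotone-style submodularity argument---using the fact that, because each prior $G_{j}$ was itself sampled uniformly within a pool of size $\ell$, each element of $O$ lies in $G_{m-1}$ with probability at most $O(1/\ell)$---I would lower-bound $\sum_i \marge{O}{A_{u^*_m, i}}$ by a term proportional to $\ff{O} - \ff{G_{m-1}}$. This yields a recursion of the form
\begin{align*}
\exc{\ff{G_m}}{\mathcal{E}_1, \ldots, \mathcal{E}_m} \ge \left(1 - \tfrac{1}{\ell}\right)\exc{\ff{G_{m-1}}}{\mathcal{E}_1, \ldots, \mathcal{E}_{m-1}} + \tfrac{1}{\ell}\ff{O},
\end{align*}
which unrolls to the familiar $1 - (1 - 1/\ell)^\ell \to 1 - 1/e$ bound; choosing $\ell = 2e/\epsi + 1$ then gives the $1/(e+\epsi)$ ratio after absorbing lower-order terms. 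The probability $\prob{\bigcap_m \mathcal{E}_m}$ equals $(1/(\ell+1))^\ell = (\ell+1)^{-\ell}$ since each good pool contributes $\ell$ of the $(\ell+1)\ell$ candidate sets and the $\ell$ per-iteration choices are independent. The query count follows from the $(\ell+1)$ parallel interlaced procedures per outer iteration, each running $k/\ell$ inner steps across $\ell$ threads with $\oh{n}$ work per step, totaling $\oh{\ell \cdot (\ell+1) \cdot k \cdot n} = \oh{\epsi^{-2} kn}$. The main obstacle is the transition from~\eqref{eq:itg-key} to the closed-form recursion: the nonmonotone step requires carefully tracking how the random uniform selection at prior iterations keeps each optimal element's inclusion probability in $G_{m-1}$ bounded by $O(1/\ell)$, which is what permits replacing $\sum_i \marge{O}{A_{u^*_m, i}}$ with a clean multiple of $\ff{O} - \ff{G_{m-1}}$ without losing additional factors.
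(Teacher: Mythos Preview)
The paper does not actually prove this theorem: it is stated in Appendix~\ref{apx:pseudocode} purely as a citation of the guarantee from \citet{DBLP:conf/kdd/ChenK23}, so there is no in-paper proof to compare against. That said, your outline has a real gap in the recursion step that would prevent it from reaching the claimed ratio.

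Your structure up to inequality~\eqref{eq:itg-key} is the right one, and the probability and query accounting are fine. The problem is the recursion you derive from it. You write
\[
\exc{\ff{G_m}}{\mathcal{E}_1,\ldots,\mathcal{E}_m} \ge \left(1-\tfrac{1}{\ell}\right)\exc{\ff{G_{m-1}}}{\cdots} + \tfrac{1}{\ell}\ff{O},
\]
and claim this unrolls to $1-(1-1/\ell)^\ell \to 1-1/e$, which you then identify with the $1/(e+\epsi)$ ratio. But $1-1/e \approx 0.632$ while $1/(e+\epsi) \approx 0.368$; these are different numbers, and in fact a $(1-1/e)$ ratio is impossible for non-monotone \sm. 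The recursion you wrote is the \emph{monotone} greedy recursion, and it cannot be what~\eqref{eq:itg-key} yields here.

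What~\eqref{eq:itg-key} actually gives, after summing over $i$ and using $\sum_i \ff{O\cup A_{u^*_m,i}} \ge (\ell-1)\ff{O\cup G_{m-1}}$ (Proposition~\ref{prop:sum-marge}), is a gain bound of the form
\[
\exc{\ff{G_m}-\ff{G_{m-1}}}{\cdots} \;\ge\; \tfrac{1}{\ell+1}\Bigl(\bigl(1-\tfrac{1}{\ell}\bigr)\ff{O\cup G_{m-1}} - \ff{G_{m-1}}\Bigr),
\]
which involves $\ff{O\cup G_{m-1}}$, not $\ff{O}$. To close the argument you need a \emph{second, separate} recursion tracking the degradation $\ex{\ff{O\cup G_m}} \ge (1-1/\ell)\,\ex{\ff{O\cup G_{m-1}}}$, which holds because $G_m$ is uniform over $\ell$ sets whose augmentations over $G_{m-1}$ are (nearly) disjoint. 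Solving the two coupled recursions together is what produces the $\approx (1-1/\ell)^\ell \to 1/e$ factor; see Lemmas~\ref{lemma:gdtwo-rec} and~\ref{lemma:gdtwo-deg} in the paper for the analogous step in the simplified algorithm. Your Buchbinder--Feldman-style remark (``each element of $O$ lies in $G_{m-1}$ with probability $O(1/\ell)$'') is gesturing at the degradation lemma, but you are trying to use it to replace $\ff{O\cup G_{m-1}}$ by $\ff{O}$ inside a single recursion, which loses exactly the coupling that makes the $1/e$ ratio come out.
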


\section{Analysis of Simplified \ig(Alg.~\ref{alg:gdone})}
\label{apx:greedy-1/4}
\rev
In this section, we present a detailed approximation analysis of the simplified
\ig (Alg.~\ref{alg:gdone}), 
demonstrating that while the algorithm removes the initial guessing step 
from its original formulation, it maintains the same theoretical approximation guarantees.
\color{black}
\begin{algorithm}[ht]
    \KwIn{evaluation oracle $f:2^{\uni} 
    \to \reals$, constraint $k$}
    \Init{$A\gets B\gets \emptyset$, add $2k$ dummy elements to the ground set}
    \For{$i\gets 1$ to $k$ \label{line:gdone-for-begin}}{
        $a\gets \argmax_{x\in \uni\setminus \left(A\cup B\right)} \marge{x}{A}$\; \label{line:gdone-greedy-A}
        $A\gets A+a$\;
        $b\gets \argmax_{x\in \uni\setminus \left(A\cup B\right)} \marge{x}{B}$\; \label{line:gdone-greedy-B}
        $B\gets B+b$\;\label{line:gdone-for-end}
    }
    \Return{$S\gets \argmax\{\ff{A}, \ff{B}\}$}
    \caption{A deterministic $1/4$-approximation algorithm with $\oh{ nk }$ queries.}
    \label{alg:gdone}
\end{algorithm}
\begin{restatable}{theorem}{thmgdone}\label{thm:gdone}
With input instance $(f, k)$, Alg.~\ref{alg:gdone} returns a set $S$ with $\oh{kn}$ queries
such that $\ff{S} \ge 1/4 \ff{O}$.
\end{restatable}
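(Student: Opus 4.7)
The plan is to show that $\ff{O}\le 4\max\{\ff{A},\ff{B}\}$, which together with the return step gives the $1/4$ ratio. The query bound is immediate: each of the $k$ iterations of the for loop performs two $\argmax$ scans over $\uni$ (lines~\ref{line:gdone-greedy-A} and~\ref{line:gdone-greedy-B}), for a total of $2kn=\oh{kn}$ oracle calls.

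Write $A=\{a_1,\ldots,a_k\}$ and $B=\{b_1,\ldots,b_k\}$ in insertion order, with prefixes $A_i,B_j$, and let $i^*$ be the largest index with $A_{i^*}\subseteq O$ and $j^*$ the analogue for $B$. By the symmetric roles of $A$ and $B$ assume WLOG that $i^*\le j^*$. Since lines~\ref{line:gdone-greedy-A}--\ref{line:gdone-greedy-B} always draw from $\uni\setminus(A\cup B)$, the sets $A$ and $B$ are disjoint throughout; applying submodularity to $X=O\cup A$ and $Y=O\cup B$, using $X\cap Y=O$, together with non-negativity of $f$, gives
\[
\ff{O}\ \le\ \ff{O\cup A}+\ff{O\cup B}-\ff{O\cup A\cup B}\ \le\ \ff{O\cup A}+\ff{O\cup B}.
\]

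Next I would invoke Proposition~\ref{prop:blend} with $i=i^*$ and (swapping the roles of $A$ and $B$) with $j=j^*$, splitting each of $\ff{O\cup A}$ and $\ff{O\cup B}$ into a prefix piece ($\ff{A_{i^*}}$ or $\ff{B_{j^*}}$) and a residual marginal piece ($\marge{O\setminus B_{j^*}}{A}$ or $\marge{O\setminus A_{i^*}}{B}$). The residual pieces are then controlled by the greedy rule: for any $o\in O\setminus(A\cup B)$, $o$ is a valid candidate in every round, so $\marge{a_t}{A_{t-1}}\ge\marge{o}{A_{t-1}}\ge\marge{o}{A}$ and the analogous bound holds for $B$; matching each such $o$ to a distinct round and telescoping these bounds yields an expression of the form $c_A\ff{A}+c_B\ff{B}$ with total weight $4$, which will give $\ff{O}\le 4\max\{\ff{A},\ff{B}\}$ after using $2\max\{\ff{A},\ff{B}\}\ge\ff{A}+\ff{B}$. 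The case split depicted in Fig.~\ref{fig:gdone} governs exactly how many elements of $O$ lie in each ``tail'' $A\setminus A_{i^*}$ or $B\setminus B_{j^*}$ and hence dictates which choice of $i,j$ is fed into Proposition~\ref{prop:blend}.

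The main obstacle is the handling of the prefix contributions $\ff{A_{i^*}}$ and $\ff{B_{j^*}}$: because $f$ is non-monotone one cannot simply bound them by $\ff{A}$ or $\ff{B}$, and the prefix values are precisely the terms for which the original \ig needed to guess whether the first greedy pick lies in $O$. The argument must instead exploit the interlaced ordering to charge the prefix $A_{i^*}$ against the greedy marginal gains that built $B$ (and symmetrically for $B_{j^*}$), using that at round $t\le i^*$ the element $b_t$ was selected when $a_t\in O$ was still available, so $\marge{b_t}{B_{t-1}}\ge\marge{a_t}{B_{t-1}}$ by greediness. Making this substitution rigorously in both sub-cases $i^*\le j^*$ and $i^*>j^*$ of Fig.~\ref{fig:gdone} is what lets the blended decomposition close the proof without the auxiliary sets $D,E$ used in \ig.
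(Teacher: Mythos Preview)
Your proposal identifies the right ingredients but misses the key trick, and the fix you sketch for the ``main obstacle'' does not work.

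First, the WLOG assumption $i^*\le j^*$ is not valid: in Alg.~\ref{alg:gdone} the element $a_t$ is chosen \emph{before} $b_t$ in each round, so $A$ and $B$ are not symmetric. This is exactly why the paper splits into the two cases of Fig.~\ref{fig:gdone}.

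More importantly, you apply Proposition~\ref{prop:blend} with two \emph{different} cutoff indices, $i=i^*$ and $j=j^*$, producing prefix terms $\ff{A_{i^*}}$ and $\ff{B_{j^*}}$. The paper instead uses the \emph{same} index in both applications: in Case~1 ($i^*\le j^*$) it splits $\marge{O}{A}$ using $B_{i^*}$ (not $B_{j^*}$) and $\marge{O}{B}$ using $A_{i^*}$. The residual bounds then give
\[
\ff{O\cup A}-\ff{A}\le \ff{B_{i^*}}+\ff{A}-\ff{A_{i^*}},\qquad
\ff{O\cup B}-\ff{B}\le \ff{A_{i^*}}+\ff{B}-\ff{B_{i^*}},
\]
and the prefix terms $\ff{A_{i^*}},\ff{B_{i^*}}$ \emph{cancel exactly} upon summation, yielding $\ff{O}\le 2\ff{A}+2\ff{B}$ with no leftover. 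Case~2 is analogous with the pair $(A_{j^*+1},B_{j^*})$, the $+1$ offset compensating for the asymmetry in selection order. No separate bound on any prefix value is ever needed.

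Your proposed repair---charging $\ff{A_{i^*}}$ to $B$ via $\marge{b_t}{B_{t-1}}\ge\marge{a_t}{B_{t-1}}$---does not control the right quantity: it bounds $\sum_{t\le i^*}\marge{a_t}{B_{t-1}}$, whereas $\ff{A_{i^*}}=\sum_{t\le i^*}\marge{a_t}{A_{t-1}}$, and since $A_{t-1}$ and $B_{t-1}$ are disjoint (neither contained in the other) submodularity gives no relation between $\marge{a_t}{A_{t-1}}$ and $\marge{a_t}{B_{t-1}}$.
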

\begin{proof}[Proof of Theorem~\ref{thm:gdone}]

\textbf{Notation.}
Let $a_i$ be the $i$-th element added to $A$,
and $A_i$ be the set containing the first $i$ elements of $A$.
Similarly, define $b_i$ and $B_i$ for the solution $B$.

Since the two solutions $A_k$ and $B_k$ are disjoint, by submodularity and non-negativity,
\begin{equation*}
\ff{O} \le \ff{O\cup A_k} + \ff{O\cup B_k}.
\end{equation*}
Let $i^* = \max\{i \in [k]: A_i\subseteq O\}$
and $j^* = \max\{j \in [k]: B_j\subseteq O\}$.
If either $i^* = k$ or $j^* = k$,
then $\ff{S} = \ff{O}$.
In the following, we consider $i^* < k$ and $j^* < k$
and discuss two cases of the relationship between $i^*$ and $j^*$ (Fig.~\ref{fig:gdone}).

\textbf{Case 1: $0\le i^*\le j^* < k$; Fig.~\ref{fig:gdone-1}.}
First, we bound $\ff{O\cup A_k}$. 
Consider the set $\tilde{O} = O\setminus \left(A_k \cup B_{i^*}\right)$.
Obviously, it holds that $|\tilde{O}|\le k-i^*$.
Then, order $\tilde{O}$ as $\{o_1, o_2, \ldots\}$ such that $o_i \not \in B_{i+i^*-1}$,
for all $1\le i \le |\tilde{O}|$.
Thus, by the greedy selection step in Line~\ref{line:gdone-greedy-A},
it holds that $\marge{a_{i+i^*}}{A_{i+i^*-1}}\ge \marge{o_i}{A_{i+i^*-1}}$
for all $1\le i \le |\tilde{O}|$.
Then,
\begin{align*}
\ff{O\cup A_k} - \ff{A_k} &\le \marge{B_{i^*}}{A_k} + \marge{\tilde{O}}{A_k}\\
&\le \ff{B_{i^*}} + \sum\limits_{i=1}^{|\tilde{O}|} \marge{o_i}{A_k}\\
&\le \ff{B_{i^*}} + \sum\limits_{i=1}^{|\tilde{O}|} \marge{o_i}{A_{i+i^*-1}}\\
&\le \ff{B_{i^*}} + \sum\limits_{i=i^*+1}^{k} \marge{a_i}{A_{i-1}}
= \ff{B_{i^*}} + \ff{A_k} - \ff{A_{i^*}},
\end{align*}
where the first three inequalities follow from submodularity;
and the last inequality follows from 
$\marge{a_{i+i^*}}{A_{i+i^*-1}}\ge \marge{o_i}{A_{i+i^*-1}}$
for all $1\le i \le |\tilde{O}|$,
and $\marge{a_i}{A_{i-1}}\ge 0$ for all $i \in [k]$.

\begin{figure}[ht]
\centering
\subfigure[$i^* \le j^*$]{\label{fig:gdone-1}\includegraphics[width=0.2\textwidth]{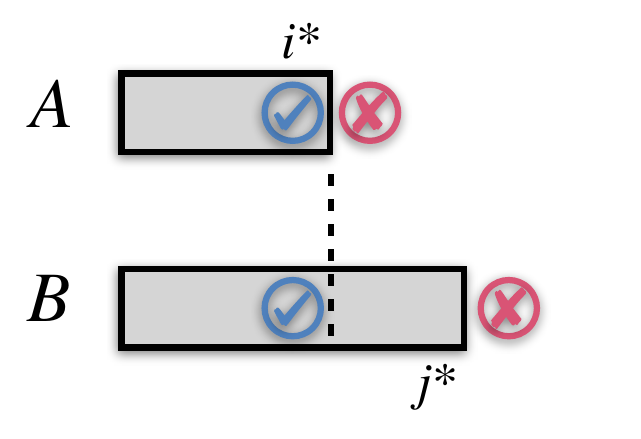}}
\subfigure[$i^* > j^*$]{\label{fig:gdone-2}\includegraphics[width=0.2\textwidth]{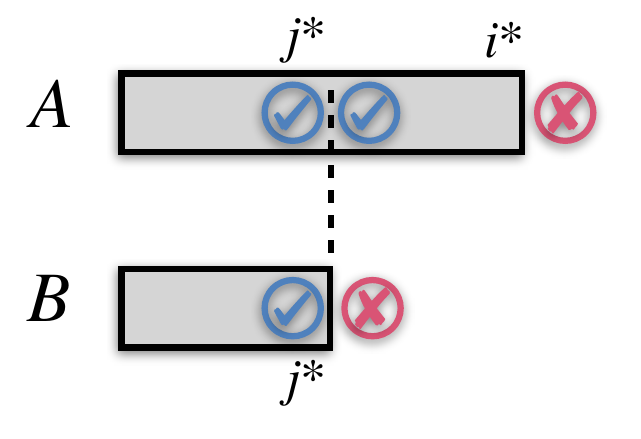}}
    \caption{This figure depicts the components of solution sets $A$ and $B$ in Alg.~\ref{alg:gdone}.
    The black rectangle highlights a sequence of consecutive elements from $O$
    that were added to the solution at the initial.
    Red circles with a cross mark signifies the first element in $A$ or $B$ that is outside $O$.
    }
\label{fig:gdone}
\end{figure}

Next, we bound $\ff{O\cup B_k}$.
Consider the set $\tilde{O} = O\setminus \left(A_{i^*} \cup B_{k}\right)$.
Obviously, it holds that $|\tilde{O}| \le k-i^*$.
Since $i^* = \max\{i \in [k]: A_i\subseteq O\}$,
we know that $a_{i^*+1} \not \in O$.
Thus, we can order $|\tilde{O}|$ as $\{o_1, o_2, \ldots\}$
such that $o_i \not \in A_{i+i^*}$ for all $1\le i \le |\tilde{O}|$.
Then, by the greedy selection step in Line~\ref{line:gdone-greedy-B},
it holds that $\marge{b_{i+i^*}}{B_{i+i^*-1}}\ge \marge{o_i}{B_{i+i^*-1}}$
for all $1\le i \le |\tilde{O}|$.
Following the analysis for $\ff{O\cup A}$,
we get
\begin{align*}
\ff{O\cup B_k} - \ff{B_k} &\le \marge{A_{i^*}}{B_k} + \marge{\tilde{O}}{B_k}\\
&\le \ff{A_{i^*}} + \sum\limits_{i=1}^{|\tilde{O}|} \marge{o_i}{B_k}\\
&\le \ff{A_{i^*}} + \sum\limits_{i=1}^{|\tilde{O}|} \marge{o_i}{B_{i+i^*-1}}\\
&\le \ff{A_{i^*}} + \sum\limits_{i=i^*+1}^{k} \marge{b_i}{B_{i-1}}
= \ff{A_{i^*}} + \ff{B_k} - \ff{B_{i^*}}.
\end{align*}

\textbf{Case 2: $0\le j^* < i^* < k$; Fig.~\ref{fig:gdone-2}.}
First, we bound $\ff{O\cup A_k}$.
Consider the set $\tilde{O} = O\setminus \left(A_{k}\cup B_{j^*}\right)$,
where $|\tilde{O}| \le k-j^*-1$.
By the definition of $j^*$,
we know that $b_{j^*+1}\not\in O$.
Thus, we can order $\tilde{O}$ as $\{o_1, o_2, \ldots\}$
such that $o_i\not \in B_{i+j^*}$ for all $1\le i\le |\tilde{O}|$.
Then, by the greedy selection step in Line~\ref{line:gdone-greedy-A},
it holds that $\marge{a_{i+j^*+1}}{A_{i+j^*}}\ge \marge{o_i}{A_{i+j^*}}$
for all $1\le i\le |\tilde{O}|$.
Following the above analysis, we get
\begin{align*}
\ff{O\cup A_k}-\ff{A_k} &\le \marge{B_{j^*}}{A_k}+\marge{\tilde{O}}{A_k}\\
&\le \ff{B_{j^*}} + \sum\limits_{i=1}^{|\tilde{O}|} \marge{o_i}{A_k}\\
&\le \ff{B_{j^*}} + \sum\limits_{i=1}^{|\tilde{O}|} \marge{o_i}{A_{i+j^*}}\\
&\le \ff{B_{j^*}} + \sum\limits_{i=j^*+2}^{k} \marge{a_i}{A_{i-1}}
= \ff{B_{j^*}} + \ff{A_k}-\ff{A_{j^*+1}}.
\end{align*}

Next, we bound $\ff{O\cup B_k}$.
Consider the set $\tilde{O} = O\setminus \left(A_{j^*+1} \cup B_k\right)$,
where $|\tilde{O}| \le k-j^*-1$.
Then, order $\tilde{O}$ as $\{o_1, o_2, \ldots\}$
such that $o_i\not \in A_{i+j^*}$ for all $1\le i\le |\tilde{O}|$.
By the greedy selection step in Line~\ref{line:gdone-greedy-B},
it holds that $\marge{b_{i+j^*}}{B_{i+j^*-1}}\ge \marge{o_i}{B_{i+j^*-1}}$.
Then,
\begin{align*}
\ff{O\cup B_k} - \ff{B_k}&\le \marge{A_{j^*+1}}{B_k} + \marge{\tilde{O}}{B_k}\\
&\le \ff{A_{j^*+1}}+\sum\limits_{i=1}^{|\tilde{O}|} \marge{o_i}{B_k}\\
&\le \ff{A_{j^*+1}}+\sum\limits_{i=1}^{|\tilde{O}|} \marge{o_i}{B_{i+j^*-1}}\\
&\le \ff{A_{j^*+1}}+\sum\limits_{i=j^*+1}^{k} \marge{b_i}{B_{i-1}} 
= \ff{A_{j^*+1}} + \ff{B_k}-\ff{B_{j^*}}.
\end{align*}

Therefore, in both cases, it holds that
\[\ff{O}\le \ff{O\cup A_k}+\ff{O\cup B_k}\le 2\left(\ff{A_k}+\ff{B_k}\right)\le 4\ff{S}.\]

\end{proof}

\section{\rev Analysis of Simplified \itg (Alg.~\ref{alg:gdtwo}, Section~\ref{sec:gd})}
\label{apx:greedy-1/e}
\rev
In the section, we first provide key Lemmata and their analysis in Appendix~\ref{apx:gdtwo-lemma}
for the case when $k\, \text{mod}\,\ell = 0$.
Then, we conclude with an analysis of approximation ratio in Appendix~\ref{apx:gdtwo-approx}.
\color{black}

\subsection{Proofs of Lemmata for Theorem~\ref{thm:gdtwo}} \label{apx:gdtwo-lemma}
\rev
In what follows, we address the scenario where $k\, \text{mod}\,\ell = 0$
and Alg.~\ref{alg:gdtwo} returns a solution with size exactly $k$.
\color{black}

\textbf{Notation.} 
Let $G_{i-1}$ be $G$ at the start of $i$-th iteration in Alg.~\ref{alg:gdtwo},
$A_l$ be the set at the end of this iteration,
and $a_{l, j}$ be the $j$-th element added to $A_l$ during this iteration.

\lemmaparA*
\begin{proof}[Proof of Lemma~\ref{lemma:par-A}]
\begin{figure}
\centering
\includegraphics[width=0.45\linewidth]{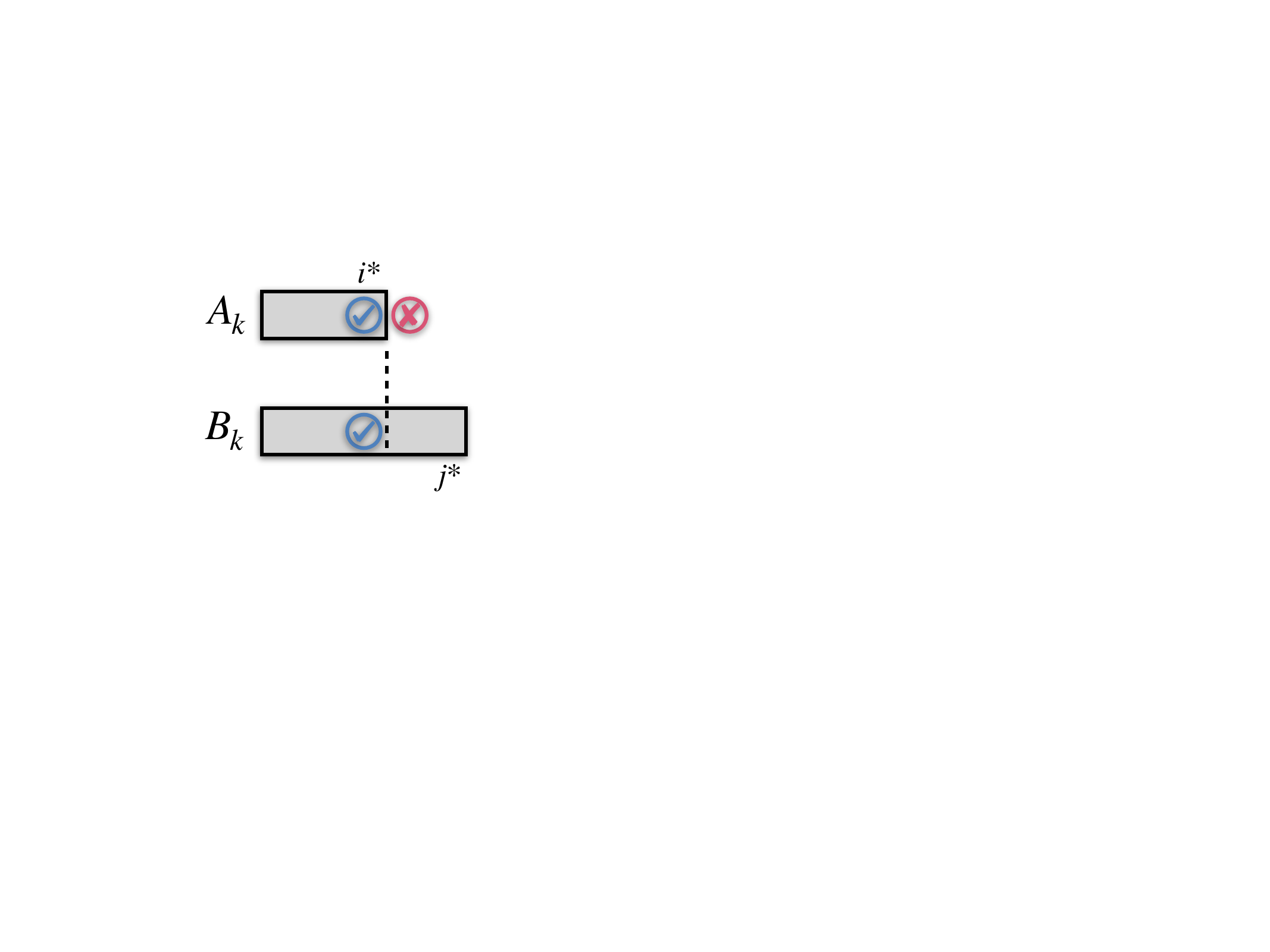}
\caption{This figure depicts the components of the solution sets $A_{l_1}$ and $A_{l_2}$.
A blue circle with a check mark represents an element in $O$,
while a red circle with a cross mark represents an element outside of $O$.
The grey rectangles indicate a sequence of consecutive elements in $O$.
The pink rectangles indicate the corresponding elements used to bound $\marge{O_{l_2}}{A_{l_1}}$ or $\marge{O_{l_1}}{A_{l_2}}$.
It is illustrated that $\marge{O_{l_1}}{A_{l_2}} + \marge{O_{l_2}}{A_{l_1}}\le \marge{A_{l_1}}{G_{i-1}} +  \marge{A_{l_2}}{G_{i-1}}$ under both cases.}
\label{fig:gdtwo}
\end{figure}
Recall that $A_{l, j}$ is $A_l$ after $j$-th element is added to $A_l$ at iteration $i$ of the outer for loop,
and $c_l^* = \max\left\{c\in [m]: A_{l, c}\setminus G_{i-1}\subseteq O_l \right\}$.

First, we prove that the first inequality holds.
For each $l\in [\ell]$, order the elements in $O_l$ as $\{o_1, o_2, \ldots\}$
such that $o_j \not \in A_{l, j-1}$ for any $1\le j \le |O_l|$.
Since each $o_j$ is either in $A_l$ or not in any solution set,
it remains in the candidate pool when $a_{l, j}$ is considered to be added to the solution.
Therefore, it holds that
\begin{equation}\label{inq:itg-1}
\marge{a_{l, j}}{A_{l, j-1}}\ge \marge{o_j}{A_{l, j-1}}.
\end{equation}
Then,
\begin{align*}
\marge{O_{l}}{A_{l}} \le \sum_{o_j\in O_l} \marge{o_j}{A_{l}} \le \sum_{o_j\in O_l} \marge{o_j}{A_{l, j-1}}\le \sum_{j=1}^{m}\marge{a_{l, j}}{A_{l, j-1}} = \marge{A_{l}}{G_{j-1}},
\end{align*}
where the first inequality follows from Proposition~\ref{prop:sum-marge},
the second inequality follows from submodularity,
and the last inequality follows from Inequality~\eqref{inq:itg-1}.

In the following, we prove that the second inequality holds.
For any $1\le l_1\le l_2\le \ell$,
we analyze two cases of the relationship between $c_{l_1}^* $ and $ c_{l_2}^*$ in the following.


\textbf{Case 1: $c_{l_1}^* \le c_{l_2}^*$; left half part in Fig.~\ref{fig:gdtwo}.}

First, we bound $\marge{O_{l_1}}{A_{l_2}}$.
Since $c_{l_1}^* \le m$, we know that the $(c_{l_1}^*+1)$-th element in $A_{l_1}\setminus G_{i-1}$ is not in $O$.
So, we can order the elements in $O_{l_1}\setminus A_{l_1, c_{l_1}^*}$ as $\{o_1, o_2, \ldots\}$ such that $o_j \not \in A_{l_1, c_{l_1}^*+j+1}$.
(Refer to the gray block with a dotted edge in the top left corner of Fig.~\ref{fig:gdtwo} for $O_{l_1}$.)
Since each $o_j$ is either added to $A_{l_1}$ or not in any solution set,
it remains in the candidate pool when $a_{l_2, c_{l_1}^*+j}$ is considered to be added to $A_{l_2}$.
Therefore, it holds that 
\begin{equation}\label{inq:itg-case2-1}
\marge{a_{l_2, c_{l_1}^*+j}}{A_{l_2, c_{l_1}^*+j-1}} \ge \marge{o_j}{A_{l_2, c_{l_1}^*+j-1}}, \forall 1\le j\le m-c_{l_1}^*.
\end{equation}
Then,
\begin{align*}
\marge{O_{l_1}}{A_{l_2}} &\le \marge{A_{l_1, c_{l_1}^*}}{A_{l_2}}  + \sum_{o_j \in O_{l_1}\setminus A_{l_1, c_{l_1}^*}}\marge{o_j}{A_{l_2}} \tag{Proposition~\ref{prop:sum-marge}}\\
&\le \marge{A_{l_1, c_{l_1}^*}}{G_{i-1}} + \sum_{o_j \in O_{l_1}\setminus A_{l_1, c_{l_1}^*}}\marge{o_j}{A_{l_2, , c_{l_1}^*+j-1}} \tag{submodularity}\\
&\le \ff{A_{l_1, c_{l_1}^*}}-\ff{G_{i-1}} + \sum_{j = 1}^{m-c_{l_1}^*}\marge{a_{l_2, c_{l_1}^*+j}}{A_{l_2, , c_{l_1}^*+j-1}} \tag{Inequality~\eqref{inq:itg-case2-1}}\\
& \le \ff{A_{l_1, c_{l_1}^*}}-\ff{G_{i-1}} + \ff{A_{l_2}} - \ff{A_{l_2, c_{l_1}^*}} 
\end{align*}

Similarly, we bound $\marge{O_{l_2}}{A_{l_1}}$ below.
Order the elements in $O_{l_2}\setminus A_{l_2, c_{l_1}^*}$ as $\{o_1, o_2, \ldots\}$
such that $o_j \not \in A_{l_2, c_{l_1}^* + j}$.
(See the gray block with a dotted edge in the bottom left corner of Fig.~\ref{fig:gdtwo} for $O_{l_2}$.)
Since each $o_j$ is either added to $A_{l_2}$ or not in any solution set,
it remains in the candidate pool when $a_{l_1, c_{l_1}^*+j}$ is considered to be added to $A_{l_2}$.
Therefore, it holds that
\begin{equation}\label{inq:itg-case2-2}
\marge{a_{l_1, c_{l_1}^*+j}}{A_{l_1, c_{l_1}^*+j-1}} \ge \marge{o_j}{A_{l_1, c_{l_1}^*+j-1}}, \forall 1\le j\le m-c_{l_1}^*.
\end{equation}
Then,
\begin{align*}
\marge{O_{l_2}}{A_{l_1}} &\le \marge{A_{l_2, c_{l_1}^*}}{A_{l_1}}  + \sum_{o_j \in O_{l_2}\setminus A_{l_2, c_{l_1}^*}}\marge{o_j}{A_{l_1}} \tag{Proposition~\ref{prop:sum-marge}}\\
&\le \marge{A_{l_2, c_{l_1}^*}}{G_{i-1}} + \sum_{o_j \in O_{l_2}\setminus A_{l_2, c_{l_1}^*}}\marge{o_j}{A_{l_1, , c_{l_1}^*+j-1}} \tag{submodularity}\\
&\le \ff{A_{l_2, c_{l_1}^*}}-\ff{G_{i-1}} + \sum_{j = 1}^{m-c_{l_1}^*}\marge{a_{l_1, c_{l_1}^*+j}}{A_{l_1, , c_{l_1}^*+j-1}} \tag{Inequality~\eqref{inq:itg-case2-1}}\\
& \le \ff{A_{l_1, c_{l_1}^*}}-\ff{G_{i-1}} + \ff{A_{l_1}} - \ff{A_{l_1, c_{l_1}^*}} 
\end{align*}
Thus, the lemma holds in this case.

\textbf{Case 2: $c_{l_1}^* > c_{l_2}^*$; right half part in Fig.~\ref{fig:gdtwo}.}
First, we bound $\marge{O_{l_1}}{A_{l_2}}$.
Order the elements in $O_{l_1}\setminus A_{l_1, c_{l_2}^*+1}$ as $\{o_1, o_2, \ldots\}$ such that $o_j \not \in A_{l_1, c_{l_2}^*+j}$.
(Refer to the gray block with a dotted edge in the top right corner of Fig.~\ref{fig:gdtwo} for $O_{l_1}$.)
Since each $o_j$ is either in $A_{l_1}$ or not in any solution set,
it remains in the candidate pool when $a_{l_2, c_{l_2}^*+j}$ is considered to be added to $A_{l_2}$.
Therefore, it holds that 
\begin{equation}\label{inq:itg-case3-1}
\marge{a_{l_2, c_{l_2}^*+j}}{A_{l_2, c_{l_2}^*+j-1}} \ge \marge{o_j}{A_{l_2, c_{l_2}^*+j-1}}, \forall 1\le j\le m-c_{l_2}^*-1.
\end{equation}
Then,
\begin{align*}
\marge{O_{l_1}}{A_{l_2}} &\le \marge{A_{l_1, c_{l_2}^*+1}}{A_{l_2}}  + \sum_{o_j \in O_{l_1}\setminus A_{l_1, c_{l_2}^*+1}}\marge{o_j}{A_{l_2}} \tag{Proposition~\ref{prop:sum-marge}}\\
&\le \marge{A_{l_1, c_{l_2}^*+1}}{G_{i-1}} + \sum_{o_j \in O_{l_1}\setminus A_{l_1, c_{l_2}^*+1}}\marge{o_j}{A_{l_2, , c_{l_2}^*+j-1}} \tag{submodularity}\\
&\le \ff{A_{l_1, c_{l_2}^*+1}}-\ff{G_{i-1}} + \sum_{j = 1}^{m-c_{l_2}^*-1}\marge{a_{l_2, c_{l_2}^*+j}}{A_{l_2, , c_{l_2}^*+j-1}} \tag{Inequality~\eqref{inq:itg-case3-1}}\\
& \le \ff{A_{l_1, c_{l_2}^*+1}}-\ff{G_{i-1}} + \ff{A_{l_2}} - \ff{A_{l_2, c_{l_2}^*}} 
\end{align*}

Similarly, we bound $\marge{O_{l_2}}{A_{l_1}}$ below.
Since $c_{l_2}^* < c_{l_2}^*$,
we know that the $(c_{l_2}^*+1)$-th element in $A_{l_2}\setminus G_{i-1}$
is not in $O$, which implies that $|O_{l_2}| \le m$.
So, we can order the elements in $O_{l_2}\setminus A_{l_2, c_{l_2}^*}$ as $\{o_1, o_2, \ldots\}$
such that $o_j \not \in A_{l_2, c_{l_2}^* + j}$ for each $1\le j\le m-c_{l_2}^*$.
(See the gray block with a dotted edge in the bottom right corner of Fig.~\ref{fig:gdtwo} for $O_{l_2}$.)

When $1\le j< m-c_{l_2}^*$,
since each $o_j$ is either in $A_{l_2}$ or not in any solution set,
it remains in the candidate pool when $a_{l_1, c_{l_2}^*+i+1}$ is considered to be added to $A_{l_1}$.
Therefore, it holds that
\begin{equation}
\marge{a_{l_1, c_{l_2}^*+j+1}}{A_{l_1, c_{l_2}^*+j}} \ge \marge{o_j}{A_{l_1, c_{l_2}^*+j}}, \forall 1\le j< m-c_{l_2}^*.
\end{equation}
As for the last element $o_{m-c_{l_2}^*}$ in $O_{l_2}\setminus A_{l_2, c_{l_2}^*}$,
we know that $o_{m-c_{l_2}^*}$ is not added to any solution set.
So,
\begin{equation}
\marge{o_{m-c_{l_2}^*}}{A_{l_1}} \le \frac{1}{m}\sum_{j = 1}^m \marge{a_{l_1, j}}{A_{l_1,j-1}}
 = \frac{1}{m} \marge{A_{l_1}}{G_{i-1}}
\end{equation}

Then,
\begin{align*}
\marge{O_{l_2}}{A_{l_1}} &\le \marge{A_{l_2, c_{l_2}^*}}{A_{l_1}}  + \sum_{o_j \in O_{l_2}\setminus A_{l_2, c_{l_2}^*}}\marge{o_j}{A_{l_1}} \tag{Proposition~\ref{prop:sum-marge}}\\
&\le \marge{A_{l_2, c_{l_2}^*}}{G_{i-1}} + \sum_{o_j \in O_{l_2}\setminus A_{l_2, c_{l_2}^*}}\marge{o_j}{A_{l_1, , c_{l_2}^*+j}} \tag{submodularity}\\
&\le \ff{A_{l_2, c_{l_2}^*}}-\ff{G_{i-1}} + \sum_{j = 1}^{m-c_{l_2}^*-1}\marge{a_{l_1, c_{l_2}^*+j+1}}{A_{l_1, , c_{l_2}^*+j}} + \frac{1}{m} \marge{A_{l_1}}{G_{i-1}} \tag{Inequality~\eqref{inq:itg-case3-1}}\\
& \le \ff{A_{l_1, c_{l_2}^*}}-\ff{G_{i-1}} + \ff{A_{l_1}} - \ff{A_{l_1, c_{l_2}^*+1}}+ \frac{1}{m} \marge{A_{l_1}}{G_{i-1}} \tag{$|O_{l_2}|\le m$}
\end{align*}
Thus, the lemma holds in this case.
\end{proof}

\begin{restatable}{lemma}{lemmagdtworec}\label{lemma:gdtwo-rec}
For any iteration $i$ of the outer for loop in Alg.~\ref{alg:gdtwo},
it holds that 

\vspace*{-1em}
\begin{align*}
&\ex{\ff{G_i} - \ff{G_{i-1}}}\ge \frac{1}{\ell+1}\left(1-\frac{1}{m+1}\right) \\
&\cdot \left(\left(1-\frac{1}{\ell}\right)\ex{\ff{O\cup G_{i-1}}} - \ex{\ff{G_{i-1}}}\right)
\end{align*}
\end{restatable}
\begin{proof}[Proof of Lemma~\ref{lemma:gdtwo-rec}]
Fix on $G_{i-1}$ for an iteration $i$ of the outer for loop in Alg.~\ref{alg:gdtwo}.
Let $A_l$ be the set after for loop in Lines~\ref{line:gdtwo-for-2-start}-\ref{line:gdtwo-for-2-end} ends (with $m$ iterations).
Then,
\begin{align*}
&\sum_{l\in [\ell]}\marge{O}{A_{l}} 
\le \sum_{l\in [\ell]} \marge{O_l}{A_{l}} + \sum_{1\le l_1 < l_2 \le \ell} \left(\marge{O_{l_1}}{A_{l_2}}+\marge{O_{l_2}}{A_{l_1}}\right) \tag{Inequality~\ref{inq:gdtwo-par}}\\
&\le \sum_{l\in [\ell]} \marge{A_{l}}{G_{i-1}} + \sum_{1\le l_1 < l_2 \le \ell}\left(1+\frac{1}{m}\right) \left(\marge{A_{l_1}}{G_{i-1}}+\marge{A_{l_2}}{G_{i-1}}\right)\tag{Lemma~\ref{lemma:par-A}}\\
&\le \ell\left(1+\frac{1}{m}\right) \sum_{l\in [\ell]} \marge{A_{l}}{G_{i-1}}\\
\Rightarrow& \left(\ell+1\right)\left(1+\frac{1}{m}\right)\sum_{l\in [\ell]}\marge{A_l}{G_{i-1}} \ge \sum_{l\in [\ell]}\ff{O\cup A_l} - \ell \ff{G_{i-1}}\\
& \hspace*{15em} \ge \left(\ell-1\right)\ff{O\cup G_{i-1}} - \ell \ff{G_{i-1}},\numberthis \label{inq:itg-rec-1}
\end{align*}
where the last inequality follows from Proposition~\ref{prop:sum-marge}.
Then, it holds that
\begin{align*}
&\exc{\ff{G_i} - \ff{G_{i-1}}}{G_{i-1}}  = \frac{1}{\ell}\sum_{l \in [\ell]}\marge{A_{l}}{G_{i-1}}\\
&\ge \frac{1}{\ell+1}\cdot\frac{m}{m+1}\cdot\left(\left(1-\frac{1}{\ell}\right)\ff{O\cup G_{i-1}} - \ff{G_{i-1}}\right) \tag{Inequality~\eqref{inq:itg-rec-1}}
\end{align*}
By unfixing $G_{i-1}$, the lemma holds.
\end{proof}

\begin{restatable}{lemma}{lemmagdtwodeg}\label{lemma:gdtwo-deg}
For any iteration $i$ of the outer for loop in Alg.~\ref{alg:gdtwo},
it holds that

\vspace*{-1em}
\begin{align*}
\ex{\ff{O\cup G_i}} \ge \left(1-\frac{1}{\ell}\right) \ex{\ff{O\cup G_{i-1}}}.
\end{align*}
\end{restatable}
\begin{proof}[Proof of Lemma~\ref{lemma:gdtwo-deg}]
Fix on $G_{i-1}$ at the beginning of this iteration.
Since $\left\{A_l\setminus G_{i-1}\right\}_{l\in [\ell]}$ 
are pairwise disjoint sets at the end of this iteration,
by Proposition~\ref{prop:sum-marge},
it holds that
\[\exc{\ff{O\cup G_i}}{G_{i-1}} = \frac{1}{\ell}\sum_{l\in [\ell]}\ff{O\cup A_l} \ge \left(1-\frac{1}{\ell}\right)\ff{O\cup G_{i-1}}.\]
Then, by unfixing $G_{i-1}$, the lemma holds.
\end{proof}

\subsection{Proof of Theorem~\ref{thm:gdtwo}}\label{apx:gdtwo-approx}
\thmgdtwo*
\begin{proof}
By Lemma~\ref{lemma:gdtwo-rec} and~\ref{lemma:gdtwo-deg},
the recurrence of $\ex{\ff{G_i}}$ can be expressed as follows,
\begin{align*}
\ex{\ff{G_i}} &\ge \left(1-\frac{1}{\ell+1}\left(1-\frac{1}{m+1}\right)\right)\ex{\ff{G_{i-1}}} + \frac{1}{\ell+1}\left(1-\frac{1}{m+1}\right)\left(1-\frac{1}{\ell}\right)^i\ff{O}\\
&\ge \left(1-\frac{1}{\ell}\right)\ex{\ff{G_{i-1}}} + \frac{1}{\ell+1}\left(1-\frac{1}{m+1}\right)\left(1-\frac{1}{\ell}\right)^i\ff{O}.
\end{align*}
By solving the above recurrence,
\begin{align*}
\ex{\ff{G_{\ell}}} &\ge \frac{\ell}{\ell+1}\left(1-\frac{1}{m+1}\right)\left(1-\frac{1}{\ell}\right)^\ell\ff{O}\\
&\ge \frac{\ell-1}{\ell+1}\left(1-\frac{1}{m+1}\right)e^{-1}\ff{O}\tag{Lemma~\ref{lemma:val-inq}}\\
&\ge \left(1-\frac{2}{\ell}\right)\left(1-\frac{\ell}{k}\right)e^{-1}\ff{O} \tag{$m = \left\lfloor \frac{k}{\ell} \right\rfloor$}\\
&\ge \frac{1}{1-\frac{\ell}{k}}\left(1-\frac{2}{\ell}-\frac{2\ell}{k}+\frac{4}{k}\right)e^{-1}\ff{O}\\
&\ge \frac{1}{1-\frac{\ell}{k}}\left(e^{-1}-\epsi\right)\ff{O}. \tag{$\ell \ge \frac{2}{e\epsi}, k\ge \frac{2(\ell-2)}{e\epsi-\frac{2}{\ell}}$}
\end{align*}
\rev
If $k\, \text{mod}\,\ell = 0$, the approximation ratio holds immediately.

Otherwise, when $k\, \text{mod}\,\ell > 0$, 
the algorithm returns an approximation solution for a size constraint of 
$\ell\cdot\left\lfloor\frac{k}{\ell}\right\rfloor$.
By Proposition~\ref{prop:dif-opt}, it holds that 
\begin{equation}\label{inq:dif-opt}
\ff{O'}\ge \ell\cdot\left\lfloor\frac{k}{\ell}\right\rfloor / k \ff{O}
\ge \left(1-\frac{\ell}{k}\right)\ff{O},
O' = \argmax\limits_{S\subseteq \uni, |S|\le \ell\cdot\left\lfloor\frac{k}{\ell}\right\rfloor} \ff{S}.
\end{equation}
In this case, the approximation ratio still holds.
\color{black}
\end{proof}

\section{\rev Preliminary Warm-Up of Parallel Approaches: Nearly-Linear Time Algorithms}
\label{apx:tg}
\rev
\citet{DBLP:conf/nips/Kuhnle19} introduced a fast version of \ig, replacing the 
greedy procedure with a descending threshold greedy procedure~\citep{DBLP:conf/soda/BadanidiyuruV14}
to achieve a query complexity of $\oh{n\log(k)}$.
This same technique was subsequently employed in \itg~\citep{DBLP:conf/kdd/ChenK23}.
In this section, we present simplified versions of both algorithms,
incorporating the blended marginal gain analysis strategy introduced in Section~\ref{sec:gd}.
These fast algorithms serve as building blocks for the parallel algorithms introduced in this work.
\color{black}

\subsection{Simplified Fast \ig with $1/4-\epsi$ Approximation Ratio (Alg.~\ref{alg:tgone})}
\begin{algorithm}[ht]
    \KwIn{evaluation oracle $f:2^{\uni} 
    \to \reals$, constraint $k$, error $\epsi$}
    \Init{$A\gets \emptyset$, $B\gets \emptyset$, $M\gets \max_{x \in \uni}\ff{\{x\}}$,
    $\tau_1\gets M$, $\tau_2\gets M$}
    \For{$i\gets 1$ to $k$}{
        \While{$\tau_1 \ge \frac{\epsi M}{k}$ and $|A| < k$}{
            \If{$\exists a \in \uni\setminus\left(A\cup B\right)$ \st $\marge{a}{A} \ge \tau_1$}{
            $A\gets A+ a$\;
            \textbf{break}\;}
            \lElse{$\tau_1 \gets (1-\epsi)\tau_1$}
        }
        \While{$\tau_2 \ge \frac{\epsi M}{k}$ and $|B| < k$}{
            \If{$\exists b \in \uni\setminus\left(A\cup B\right)$ \st $\marge{b}{B} \ge \tau_2$}{
            $B\gets B+ b$\;
            \textbf{break}\;}
            \lElse{$\tau_2 \gets (1-\epsi)\tau_2$}
        }
    }
    \Return{$S\gets \argmax\{\ff{A}, \ff{B}\}$}
    \caption{A nearly-linear time, $(1/4-\epsi)$-approximation algorithm.}
    \label{alg:tgone}
\end{algorithm}
\begin{restatable}{theorem}{thmtgone}\label{thm:tgone}
With input instance $(f, k, \epsi)$, Alg.~\ref{alg:tgone} returns a set $S$ with $\oh{n\log (k)/\epsi}$ queries
such that $\ff{S} \ge \left(\frac{1}{4}-\epsi\right) \ff{O}$.
\end{restatable}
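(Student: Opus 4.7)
\textbf{Proof proposal for Theorem~\ref{thm:tgone}.} The plan is to mirror the blending-based analysis used for Alg.~\ref{alg:gdone} (Theorem~\ref{thm:gdone}) while substituting the strict greedy selection rule by the $(1-\epsi)$-relaxed rule that the descending-threshold procedure delivers. The query-complexity bound is routine: the threshold $\tau$ starts at $\max_{x\in\uni}\ff{\{x\}}$, decreases geometrically by the factor $(1-\epsi)$ until it falls below $\epsi/k$ times the starting value, so only $\oh{\log(k)/\epsi}$ threshold levels occur, and at each level every element still outside $A\cup B$ is tested a constant number of times for a total of $\oh{n\log(k)/\epsi}$ queries.

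For the approximation ratio, I would first fix a virtual linear ordering in which the insertions into $A$ and those into $B$ alternate, matching the alternation in Alg.~\ref{alg:tgone}'s inner loop. With the notation of the excerpt, let $a_i$ and $b_i$ be the $i$th elements added to $A$ and $B$ respectively, and let $\tau_1^{a_i},\tau_2^{b_i}$ be the threshold values at which they were added. Because any $o\in O\setminus(A\cup B)$ was examined at every threshold level before insertion and not taken, the threshold-greedy rule gives
\begin{align*}
\marge{o}{A_{i-1}} &\le \tau_1^{a_i}/(1-\epsi) \le \marge{a_i}{A_{i-1}}/(1-\epsi),\\
\marge{o}{B_{i-1}} &\le \tau_2^{b_i}/(1-\epsi) \le \marge{b_i}{B_{i-1}}/(1-\epsi),
\end{align*}
the $(1-\epsi)^{-1}$-degraded analogue of the strict greedy selection rule used in Theorem~\ref{thm:gdone}.

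Next I would apply Proposition~\ref{prop:blend} with the prefix indices $i^\ast$ and $j^\ast$ from Figure~\ref{fig:gdone}, namely the positions at which the first element of $A$ (resp.\ $B$) outside $O$ appears. Summing Inequalities~\eqref{inq:gdone-B2} and~\eqref{inq:gdone-A2} and then replacing each individual marginal $\marge{o}{\cdot}$ on the right by the degraded selection rule, the two cases $i^\ast\le j^\ast$ and $i^\ast>j^\ast$ in Figure~\ref{fig:gdone} allow the telescoping sum over the alternating insertions to collapse into $\ff{A}+\ff{B}$, up to a factor of $(1-\epsi)^{-1}$. Together with Inequality~\eqref{inq:gdone-opt}, which depends only on submodularity and hence carries over unchanged, this yields $(1-\epsi)\ff{O}\le 2(\ff{A}+\ff{B})\le 4\max\{\ff{A},\ff{B}\}$, so the returned set $S$ satisfies $\ff{S}\ge ((1-\epsi)/4)\ff{O}\ge (1/4-\epsi)\ff{O}$.

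\textbf{Main obstacle.} The delicate step is not the algebra of the $\epsi$-loss but the bookkeeping around the virtual alternating ordering. Unlike Alg.~\ref{alg:gdone}, the threshold version may add multiple elements at the same threshold level, and an element $o\in O\setminus(A\cup B)$ must be shown to have been tested against the ``correct'' prefix of $A$ and of $B$ at the moment the bound $\marge{o}{A_{i-1}}\le \tau_1^{a_i}/(1-\epsi)$ is invoked. I would handle this by extending the algorithm's insertion trace into a strictly alternating sequence in which every $A$-insertion is immediately followed by a $B$-insertion (possibly of a dummy element when the corresponding round of Alg.~\ref{alg:tgone} inserted into only one side), and by verifying that the threshold-greedy invariant survives this rewriting. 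Once that is done, the prefixes $A_{i^\ast}$, $B_{j^\ast}$ are well-defined and the blended bound goes through as described.
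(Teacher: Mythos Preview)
Your overall blueprint is the same as the paper's: run the two-case blending analysis of Theorem~\ref{thm:gdone} with every greedy comparison degraded by the factor $1/(1-\epsi)$, then fold the losses into a $(1/4-\epsi)$ ratio. Two points need correction, one a misconception and one a genuine gap.

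\emph{The ``main obstacle'' is not an obstacle.} Alg.~\ref{alg:tgone} already alternates strictly: the outer \texttt{for} loop runs $k$ times, and in each pass the inner \texttt{while} on $A$ breaks after inserting at most one element (then the \texttt{while} on $B$ does the same). So $a_i$ is added before $b_i$ is added before $a_{i+1}$, exactly as in Alg.~\ref{alg:gdone}; multiple elements may share a threshold level, but they are still inserted one per side per outer iteration. No virtual reordering or dummy padding is needed, and the prefixes $A_{i^*},B_{j^*}$ are well-defined directly.

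\emph{The real missing piece is early termination.} If either inner \texttt{while} exhausts its threshold (i.e.\ $\tau_1$ or $\tau_2$ drops below $\epsi M/k$) before $k$ elements are added, then $|A|<k$ or $|B|<k$, and the elements of $O$ beyond position $|A|$ (resp.\ $|B|$) have no $a_i$ (resp.\ $b_i$) to be paired with in the telescoping sum. The paper handles this by observing that any such $o$ satisfies $\marge{o}{A}<\epsi M/((1-\epsi)k)$, so the unmatched tail contributes at most $\epsi M/(1-\epsi)\le \epsi\ff{O}/(1-\epsi)$ to each of $\ff{O\cup A}-\ff{A}$ and $\ff{O\cup B}-\ff{B}$. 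This extra additive term is why the paper's final inequality is
\[
\ff{O}\le \tfrac{2-\epsi}{1-\epsi}\bigl(\ff{A}+\ff{B}\bigr)+\tfrac{2\epsi}{1-\epsi}\ff{O},
\]
not your $(1-\epsi)\ff{O}\le 2(\ff{A}+\ff{B})$. Without accounting for this case your telescoping argument is incomplete and the stated bound does not follow.
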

\begin{proof}
\textbf{Query Complexity.}
Without loss of generality, we analyze the number queries related to set $A$.
For each threshold value $\tau_1$, at most $n$ queries are made to the value oracle.
Since $\tau_1$ is initialized with value $M$, decreases by a factor of $1-\epsi$,
and cannot exceed $\frac{\epsi M}{k}$,
there are at most $\log_{1-\epsi}\left(\frac{\epsi}{k}\right)+1$ possible values of $\tau_1$.
Therefore, the total number of queries is bounded as follows,
\begin{align*}
\#\text{Queries} \le 2\cdot n\cdot \left(\log_{1-\epsi}\left(\frac{\epsi}{k}\right)+1\right)
\le \oh{n\log(k)/\epsi},
\end{align*}
where the last inequality follows from the first inequality in Lemma~\ref{lemma:val-inq}.

\textbf{Approximation Ratio.}
Since $A$ and $B$ are disjoint, by submodularity and non-negativity,
\begin{equation}\label{inq:tgone-1}
\ff{O} \le \ff{O\cup A} + \ff{O\cup B}.
\end{equation}

Let $a_i$ be the $i$-th element added to $A$,
$A_i$ be the first $i$ elements added to $A$,
and $\tau_1^{a_i}$ be the threshold value when $a_i$ is added to $A$.
Similarly, define $b_i$, $B_i$, and $\tau_2^{b_i}$.
Let $i^* = \max\{i \le |A|: A_i \subseteq O\}$
and $j^* = \max\{i \le |B|: B_i \subseteq O\}$.
If either $i^*= k$ or $j^* = k$,
then $\ff{S}= \ff{O}$.
Next, we follow the analysis of Alg.~\ref{alg:gdone} in Section~\ref{apx:greedy-1/4}
to analyze the approximation ratio of Alg.~\ref{alg:tgone}.

\textbf{Case 1: $0\le i^*\le j^* < k$; Fig.~\ref{fig:gdone-1}.}
First, we bound $\ff{O\cup A}$. 
Since $B_{i^*} \subseteq O$, by submodularity
\begin{equation}\label{inq:tgone-3}
\ff{O\cup A} - \ff{A} \le \marge{B_{i^*}}{A} + \marge{O\setminus B_{i^*}}{A}
\le \ff{B_{i^*}}+ \sum_{o\in O\setminus \left(A\cup B_{i^*}\right)}\marge{o}{A}.
\end{equation}
Next, we bound $\marge{o}{A}$ for each $o\in O\setminus \left(A\cup B_{i^*}\right)$.

Let $\tilde{O} = O\setminus \left(A \cup B_{i^*}\right)$.
Obviously, it holds that $|\tilde{O}|\le k-i^*$.
Then, order $\tilde{O}$ as $\{o_1, o_2, \ldots\}$ such that $o_i \not \in B_{i+i^*-1}$,
for all $1\le i \le |\tilde{O}|$.
If $|A| < k$, the algorithm terminates with $\tau_1 < \frac{\epsi M}{k}$.
Thus, it follows that
\begin{equation}\label{inq:tgone-4}
\marge{o_i}{A} < \frac{\epsi M}{k(1-\epsi)}, \forall |A|-i^* < i \le |\tilde{O}|.
\end{equation}

Next, consider tuple $(o_i, a_{i + i^*}, A_{i+i^*-1})$,
for any $1\le i \le \min\{|\tilde{O}|, |A|-i^*\}$.
Since $\tau_1^{a_{i + i^*}}$ is the threshold value when $a_{i + i^*}$ is added,
it holds that 
\begin{equation}\label{inq:tgone-2}
\marge{a_{i + i^*}}{A_{i+i^*-1}} \ge \tau_1^{a_{i + i^*}},
\forall 1\le i\le |A| - i^*.
\end{equation}
Then, we show that $\marge{o_i}{A_{i+i^*-1}} < \tau_1^{a_{i + i^*}}/(1-\epsi)$ always holds
for any $1\le i \le \min\{|\tilde{O}|, |A|-i^*\}$.

Since $M = \max_{x\in \uni}\ff{\{x\}}$,
if $\tau_1^{a_{i+i^*}} \ge M$,
it always holds that $\marge{o_i}{A_{i+i^*-1}} < M/(1-\epsi)\le \tau_1^{a_{i+i^*}}/(1-\epsi)$.
If $\tau_1^{a_{i+i^*}} < M$, 
since $o_i \not \in B_{i+i^*-1}$,
$o_i$ is not considered to be added to $A$ with threshold value $\tau_1^{a_{i+i^*}}/(1-\epsi)$.
Then, by submodularity,
$\marge{o_i}{A_{i+i^*-1}} < \tau_1^{a_{i+i^*}}/(1-\epsi)$.
Therefore, by submodularity and Inequality~\eqref{inq:tgone-2},
it holds that 
\begin{equation}\label{inq:tgone-5}
\marge{o_i}{A} \le \marge{o_i}{A_{i+i^*-1}} < \marge{a_{i + i^*}}{A_{i+i^*-1}}/(1-\epsi), \forall 1\le i \le \min\{|\tilde{O}|, |A|-i^*\}.
\end{equation}

Then,
\begin{align*}
\ff{O\cup A} - \ff{A} &\le \ff{B_{i^*}} + \sum_{o\in O\setminus \left(A\cup B_{i^*}\right)}\marge{o}{A}\\
&\le \ff{B_{i^*}} + \sum_{i = 1}^{\min\{|\tilde{O}|, |A|\}-i^*}\marge{a_{i + i^*}}{A_{i+i^*-1}}/(1-\epsi) + \frac{\epsi M}{1-\epsi}\\
&\le \frac{1}{1-\epsi}\left(\ff{B_{i^*}} + \ff{A}-\ff{A_{i^*}} + \epsi \ff{O}\right),\numberthis \label{inq:tgone-10}
\end{align*}
where the first inequality follows from Inequality~\eqref{inq:tgone-3};
the second inequality follows from Inequalities~\eqref{inq:tgone-4} and~\eqref{inq:tgone-5};
and the last inequality follows from $M\le \ff{O}$.

Second, we bound $\ff{O\cup B}$.
Since $A_{i^*} \subseteq O$, by submodularity
\begin{equation}\label{inq:tgone-6}
\ff{O\cup B} - \ff{B} \le \marge{A_{i^*}}{B} + \marge{O\setminus A_{i^*}}{B}
\le \ff{A_{i^*}}+ \sum_{o\in O\setminus \left(B\cup A_{i^*}\right)}\marge{o}{B}.
\end{equation}
Next, we bound $\marge{o}{B}$ for each $o\in O\setminus \left(B\cup A_{i^*}\right)$.

Let $\tilde{O} = O\setminus \left(B\cup A_{i^*}\right)$.
Obviously, it holds that $|\tilde{O}|\le k-i^*$.
Then, since $a_{i^*+1} \not\in O$,
we can order $\tilde{O}$ as $\{o_1, o_2, \ldots\}$ such that $o_i \not \in A_{i+i^*}$,
for all $1\le i \le |\tilde{O}|$.
If $|B| < k$, the algorithm terminates with $\tau_2 < \frac{\epsi M}{k}$.
Thus, it follows that
\begin{equation}\label{inq:tgone-7}
\marge{o_i}{B} < \frac{\epsi M}{k(1-\epsi)}, \forall |B|-i^* < i \le |\tilde{O}|.
\end{equation}

Next, consider tuple $(o_i, b_{i + i^*}, B_{i+i^*-1})$,
for any $1\le i \le \min\{|\tilde{O}|, |B|-i^*\}$.
Since $\tau_2^{b_{i + i^*}}$ is the threshold value when $b_{i + i^*}$ is added,
it holds that 
\begin{equation}\label{inq:tgone-8}
\marge{b_{i + i^*}}{B_{i+i^*-1}} \ge \tau_2^{b_{i + i^*}},
\forall 1\le i\le |B| - i^*.
\end{equation}
Then, we show that $\marge{o_i}{B_{i+i^*-1}} < \tau_2^{b_{i + i^*}}/(1-\epsi)$ always holds
for any $1\le i \le \min\{|\tilde{O}|, |B|-i^*\}$.

Since $M = \max_{x\in \uni}\ff{\{x\}}$,
if $\tau_2^{b_{i+i^*}} \ge M$,
it always holds that $\marge{o_i}{B_{i+i^*-1}} < M/(1-\epsi)\le \tau_2^{b_{i+i^*}}/(1-\epsi)$.
If $\tau_2^{b_{i+i^*}} < M$, 
since $o_i \not \in A_{i+i^*}$,
$o_i$ is not considered to be added to $B$ with threshold value $\tau_2^{b_{i+i^*}}/(1-\epsi)$.
Then, by submodularity,
$\marge{o_i}{B_{i+i^*-1}} < \tau_2^{b_{i+i^*}}/(1-\epsi)$.
Therefore, by submodularity and Inequality~\eqref{inq:tgone-8},
it holds that 
\begin{equation}\label{inq:tgone-9}
\marge{o_i}{B} \le \marge{o_i}{B_{i+i^*-1}} < \marge{b_{i + i^*}}{B_{i+i^*-1}}/(1-\epsi), \forall 1\le i \le \min\{|\tilde{O}|, |B|-i^*\}.
\end{equation}

Then,
\begin{align*}
\ff{O\cup B} - \ff{B} &\le \ff{A_{i^*}} + \sum_{o\in O\setminus \left(B\cup A_{i^*}\right)}\marge{o}{B}\\
&\le \ff{A_{i^*}} + \sum_{i = 1}^{\min\{|\tilde{O}|, |B|\}-i^*}\marge{b_{i + i^*}}{B_{i+i^*-1}}/(1-\epsi) + \frac{\epsi M}{1-\epsi}\\
&\le \frac{1}{1-\epsi}\left(\ff{A_{i^*}} + \ff{B}-\ff{B_{i^*}} + \epsi \ff{O}\right),\numberthis \label{inq:tgone-11}
\end{align*}
where the first inequality follows from Inequality~\eqref{inq:tgone-6};
the second inequality follows from Inequalities~\eqref{inq:tgone-7} and~\eqref{inq:tgone-9};
and the last inequality follows from $M\le \ff{O}$.

By Inequalities~\eqref{inq:tgone-1},~\eqref{inq:tgone-10} and~\eqref{inq:tgone-11},
it holds that
\begin{align*}
&\ff{O} \le \frac{2-\epsi}{1-\epsi}\left(\ff{A} + \ff{B}\right) + \frac{2\epsi}{1-\epsi}\ff{O}\\
\Rightarrow &\ff{S} \ge \left(\frac{1}{4} - \frac{5}{2(4-2\epsi)}\epsi\right)\ff{O}
\ge \left(\frac{1}{4}-\epsi\right)\ff{O}\tag{$\epsi < 1/2$}
\end{align*}

\textbf{Case 2: $0\le j^* < i^* < k$; Fig.~\ref{fig:gdone-2}.}

First, we bound $\ff{O\cup A}$. 
Since $B_{j^*} \subseteq O$, by submodularity
\begin{equation}\label{inq:tgone-20}
\ff{O\cup A} - \ff{A} \le \marge{B_{j^*}}{A} + \marge{O\setminus B_{j^*}}{A}
\le \ff{B_{j^*}}+ \sum_{o\in O\setminus \left(A\cup B_{j^*}\right)}\marge{o}{A}.
\end{equation}
Next, we bound $\marge{o}{A}$ for each $o\in O\setminus \left(A\cup B_{j^*}\right)$.

Let $\tilde{O} = O\setminus \left(A \cup B_{j^*}\right)$.
Since $i^* > j^*\ge 0$, 
it holds that $|\tilde{O}|\le k-j^*-1$.
Since $b_{j^*+1}\not \in O$,
we can order $\tilde{O}$ as $\{o_1, o_2, \ldots\}$ such that $o_i \not \in B_{i+j^*}$,
for all $1\le i \le |\tilde{O}|$.
If $|A| < k$, the algorithm terminates with $\tau_1 < \frac{\epsi M}{k}$.
Thus, it follows that
\begin{equation}\label{inq:tgone-21}
\marge{o_i}{A} < \frac{\epsi M}{k(1-\epsi)}, \forall |A|-j^*-1 < i \le |\tilde{O}|.
\end{equation}

Next, consider tuple $(o_i, a_{i + j^*+1}, A_{i+j^*})$,
for any $1\le i \le \min\{|\tilde{O}|, |A|-j^*-1\}$.
Since $\tau_1^{a_{i + j^*+1}}$ is the threshold value when $a_{i + j^*+1}$ is added,
it holds that 
\begin{equation}\label{inq:tgone-22}
\marge{a_{i + j^*+1}}{A_{i+j^*}} \ge \tau_1^{a_{i + j^*+1}},
\forall 1\le i\le |A| - j^*-1.
\end{equation}
Then, we show that $\marge{o_i}{A_{i+j^*}} < \tau_1^{a_{i + j^*+1}}/(1-\epsi)$ always holds
for any $1\le i \le \min\{|\tilde{O}|, |A|-j^*-1\}$.

Since $M = \max_{x\in \uni}\ff{\{x\}}$,
if $\tau_1^{a_{i + j^*+1}} \ge M$,
it always holds that $\marge{o_i}{A_{i+j^*}} < M/(1-\epsi)\le \tau_1^{a_{i + j^*+1}}/(1-\epsi)$.
If $\tau_1^{a_{i + j^*+1}} < M$, 
since $o_i \not \in B_{i+j^*}$,
$o_i$ is not considered to be added to $A$ with threshold value $\tau_1^{a_{i + j^*+1}}/(1-\epsi)$.
Then, by submodularity,
$\marge{o_i}{A_{i+j^*}} < \tau_1^{a_{i + j^*+1}}/(1-\epsi)$.
Therefore, by submodularity and Inequality~\eqref{inq:tgone-22},
it holds that 
\begin{equation}\label{inq:tgone-23}
\marge{o_i}{A} \le \marge{o_i}{A_{i+j^*}} < \marge{a_{i + j^*+1}}{A_{i+j^*}}/(1-\epsi), \forall 1\le i \le \min\{|\tilde{O}|, |A|-j^*-1\}.
\end{equation}

Then,
\begin{align*}
\ff{O\cup A} - \ff{A} &\le \ff{B_{j^*}} + \sum_{o\in O\setminus \left(A\cup B_{j^*}\right)}\marge{o}{A}\\
&\le \ff{B_{j^*}} + \sum_{i = 1}^{\min\{|\tilde{O}|, |A|-j^*-1\}}\marge{a_{i + j^*+1}}{A_{i+j^*}}/(1-\epsi) + \frac{\epsi M}{1-\epsi}\\
&\le \frac{1}{1-\epsi}\left(\ff{B_{j^*}} + \ff{A}-\ff{A_{j^*+1}} + \epsi \ff{O}\right),\numberthis \label{inq:tgone-24}
\end{align*}
where the first inequality follows from Inequality~\eqref{inq:tgone-20};
the second inequality follows from Inequalities~\eqref{inq:tgone-21} and~\eqref{inq:tgone-23};
and the last inequality follows from $M\le \ff{O}$.

Second, we bound $\ff{O\cup B}$.
Since $A_{j^*+1} \subseteq O$, by submodularity
\begin{equation}\label{inq:tgone-25}
\ff{O\cup B} - \ff{B} \le \marge{A_{j^*+1}}{B} + \marge{O\setminus A_{j^*+1}}{B}
\le \ff{A_{j^*+1}}+ \sum_{o\in O\setminus \left(B\cup A_{j^*+1}\right)}\marge{o}{B}.
\end{equation}
Next, we bound $\marge{o}{B}$ for each $o\in O\setminus \left(B\cup A_{j^*+1}\right)$.

Let $\tilde{O} = O\setminus \left(B\cup A_{j^*+1}\right)$.
Obviously, it holds that $|\tilde{O}|\le k-j^*-1$.
Then, order $\tilde{O}$ as $\{o_1, o_2, \ldots\}$ such that $o_i \not \in A_{i+j^*}$,
for all $1\le i \le |\tilde{O}|$.
If $|B| < k$, the algorithm terminates with $\tau_2 < \frac{\epsi M}{k}$.
Thus, it follows that
\begin{equation}\label{inq:tgone-26}
\marge{o_i}{B} < \frac{\epsi M}{k(1-\epsi)}, \forall |B|-j^*-1 < i \le |\tilde{O}|.
\end{equation}

Next, consider tuple $(o_i, b_{i + j^*}, B_{i+j^*-1})$,
for any $1\le i \le \min\{|\tilde{O}|, |B|-j^*-1\}$.
Since $\tau_2^{b_{i + j^*}}$ is the threshold value when $b_{i + j^*}$ is added,
it holds that 
\begin{equation}\label{inq:tgone-27}
\marge{b_{i + j^*}}{B_{i+j^*-1}} \ge \tau_2^{b_{i + j^*}},
\forall 1\le i\le |B| - j^*-1.
\end{equation}
Then, we show that $\marge{o_i}{B_{i+j^*-1}} < \tau_2^{b_{i + j^*}}/(1-\epsi)$ always holds
for any $1\le i \le \min\{|\tilde{O}|, |B|-j^*-1\}$.

Since $M = \max_{x\in \uni}\ff{\{x\}}$,
if $\tau_2^{b_{i+j^*}} \ge M$,
it always holds that $\marge{o_i}{B_{i+j^*-1}} < M/(1-\epsi)\le \tau_2^{b_{i+j^*}}/(1-\epsi)$.
If $\tau_2^{b_{i+j^*}} < M$, 
since $o_i \not \in A_{i+j^*}$,
$o_i$ is not considered to be added to $B$ with threshold value $\tau_2^{b_{i+j^*}}/(1-\epsi)$.
Then, by submodularity,
$\marge{o_i}{B_{i+j^*-1}} < \tau_2^{b_{i+j^*}}/(1-\epsi)$.
Therefore, by submodularity and Inequality~\eqref{inq:tgone-27},
it holds that 
\begin{equation}\label{inq:tgone-28}
\marge{o_i}{B} \le \marge{o_i}{B_{i+j^*-1}} < \marge{b_{i + j^*}}{B_{i+j^*-1}}/(1-\epsi), \forall 1\le i \le \min\{|\tilde{O}|, |B|-j^*-1\}.
\end{equation}

Then,
\begin{align*}
\ff{O\cup B} - \ff{B} &\le \ff{A_{j^*+1}} + \sum_{o\in O\setminus \left(B\cup A_{j^*+1}\right)}\marge{o}{B}\\
&\le \ff{A_{j^*+1}} + \sum_{i = 1}^{\min\{|\tilde{O}|, |B|-j^*-1\}}\marge{b_{i + j^*}}{B_{i+j^*-1}}/(1-\epsi) + \frac{\epsi M}{1-\epsi}\\
&\le \frac{1}{1-\epsi}\left(\ff{A_{j^*+1}} + \ff{B}-\ff{B_{i^*}} + \epsi \ff{O}\right),\numberthis \label{inq:tgone-29}
\end{align*}
where the first inequality follows from Inequality~\eqref{inq:tgone-25};
the second inequality follows from Inequalities~\eqref{inq:tgone-26} and~\eqref{inq:tgone-28};
and the last inequality follows from $M\le \ff{O}$.

By Inequalities~\eqref{inq:tgone-1},~\eqref{inq:tgone-24} and~\eqref{inq:tgone-29},
it holds that
\begin{align*}
&\ff{O} \le \frac{2-\epsi}{1-\epsi}\left(\ff{A} + \ff{B}\right) + \frac{2\epsi}{1-\epsi}\ff{O}\\
\Rightarrow &\ff{S} \ge \left(\frac{1}{4} - \frac{5}{2(4-2\epsi)}\epsi\right)\ff{O}
\ge \left(\frac{1}{4}-\epsi\right)\ff{O}\tag{$\epsi < 1/2$}
\end{align*}

Therefore, in both cases, it holds that
\[\ff{S} \ge \left(\frac{1}{4}-\epsi\right)\ff{O} .\]
\end{proof}

\subsection{Simplified Fast \itg with $1/e-\epsi$ Approximation Ratio (Alg.~\ref{alg:tgtwo})}
\begin{algorithm}[ht]
    \KwIn{evaluation oracle $f:2^{\uni} \to \reals$, constraint $k$, error $\epsi$}
    \Init{$G_0\gets \emptyset$, $\epsi'\gets \frac{\epsi}{2}$, $m \gets \left\lfloor\frac{k}{\ell}\right\rfloor$, $\ell\gets \left \lceil\frac{4}{e\epsi'}\right \rceil$,
    $M\gets \max_{x\in \uni} \ff{\{x\}}$}
    \For{$i\gets 1$ to $\ell$}{
        $\tau_l \gets M, \forall l\in [\ell]$\;
        $A_{l}\gets G_{i-1}, \forall l \in [\ell]$\;
        \For{$j\gets 1$ to $m$}{
            \For{$l\gets 1$ to $\ell$}{
                \While{$\tau_l \ge \frac{\epsi' M}{k}$ and $|A_l\setminus G_{i-1}| < m$}{
                    \If{$\exists x \in \uni\setminus\left(\bigcup_{r\in [\ell]} A_r\right)$ \st $\marge{a}{A_l} \ge \tau_l$}{
                    $A_l\gets A_l+ x$\;
                    \textbf{break}\;}
                    \lElse{$\tau_l \gets (1-\epsi')\tau_l$}
                }
            }
        }
        $G_i\gets$ a random set in $\{A_l\}_{l\in [\ell]}$\;
    }
    \Return{$G_\ell$}
    \caption{A nearly-linear time, $(1/e-\epsi)$-approximation algorithm.}
    \label{alg:tgtwo}
\end{algorithm}
\begin{restatable}{theorem}{thmtgtwo}\label{thm:tgtwo}
With input instance $(f, k, \epsi)$
such that $\ell = \oh{\epsi^{-1}}\ge \frac{4}{e\epsi}$
and $k \ge \frac{2(2-\epsi)\ell^2}{e\epsi\ell-4}$,
Alg.~\ref{alg:gdtwo} (Alg.~\ref{alg:tgtwo}) returns a set $G_\ell$ with $\oh{n\log(k)/\epsi^2}$ queries
such that $\ff{G_\ell} \ge \left(1/e-\epsi\right) \ff{O}$.
\end{restatable}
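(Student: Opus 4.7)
\textbf{Proof plan for Theorem~\ref{thm:tgtwo}.} The plan is to lift the $1/e-\epsi$ analysis of Theorem~\ref{thm:gdtwo} into the threshold-greedy setting, in exactly the same way that Section~\ref{sec:tg} lifts the analysis of Theorem~\ref{thm:gdone} into Theorem~\ref{thm:tgone}: every time the standard greedy selection rule was used to bound a marginal gain of an element of $O$ by one of an accepted element, the threshold version will lose a $1/(1-\epsi)$ factor, and I will show that these losses aggregate into a single $1/(1-\epsi)$ slack in the key one-step recursion. Concretely, fix an outer iteration $i$, let $G_{i-1}$ be the intermediate solution at the start of the iteration, and let $A_1,\ldots,A_\ell$ be the sets produced by the $\ell$ interlaced descending-threshold greedy runs. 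I would first re-establish Claim~\ref{claim:par-A} essentially verbatim, since its proof depends only on the pairwise-disjoint structure of the $A_l\setminus G_{i-1}$ and on $|A_l\setminus G_{i-1}|\le m=\lfloor k/\ell\rfloor$, both of which remain true under threshold greedy. This yields the partition $O\setminus G_{i-1}=O_1\cup\cdots\cup O_\ell$ together with the containment $(O\setminus G_{i-1})\cap(A_l\setminus G_{i-1})\subseteq O_l$.

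The heart of the argument, and the main obstacle, is adapting Lemma~\ref{lemma:par-A} to the threshold setting. Within a single threshold level $\tau$ the elements added to the various $A_l$'s are interleaved in an order-dependent way, so the blending indices $i^\ast,j^\ast$ used with Proposition~\ref{prop:blend} must be tracked relative to the threshold at which each $o\in O$ is last rejected, rather than relative to a clean greedy insertion order. Once this bookkeeping is in place, the governing inequality is that any $o$ considered but not added at threshold $\tau$ satisfies $\marge{o}{\cdot}\le \tau$, which is in turn at most $\marge{a}{\cdot}/(1-\epsi)$ for every element $a$ accepted at that same threshold level. Combined with Proposition~\ref{prop:blend} and the alternating-insertion property across the $\ell$ sets, this yields the threshold analogues
\[
(1-\epsi)\marge{O_l}{A_l}\le \marge{A_l}{G_{i-1}},\qquad (1-\epsi)\left(1+\tfrac{1}{m}\right)\bigl(\marge{A_{l_1}}{G_{i-1}}+\marge{A_{l_2}}{G_{i-1}}\bigr)\ge \marge{O_{l_2}}{A_{l_1}}+\marge{O_{l_1}}{A_{l_2}}.
\]

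Summing these as in Inequality~\eqref{inq:gdtwo-par} gives the one-step bound $\sum_{l}\marge{O}{A_l}\le \tfrac{\ell(1+1/m)}{1-\epsi}\sum_{l}\marge{A_l}{G_{i-1}}$, and since $G_i$ is drawn uniformly at random from $\{A_l\}_{l\in[\ell]}$, averaging produces a one-step recursion of the form $\ex{\ff{O\cup G_{i-1}}-\ff{G_i}\mid G_{i-1}}\le (1-\alpha/\ell)\bigl(\ff{O\cup G_{i-1}}-\ff{G_{i-1}}\bigr)$ with $\alpha=(1-\epsi)/(1+1/m)$, which is exactly the threshold analogue of the recursion driving Theorem~\ref{thm:gdtwo}. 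Iterating over the $\ell$ outer iterations and using $(1-\alpha/\ell)^\ell\le e^{-\alpha}$ yields $\ex{\ff{G_\ell}}\ge (1-e^{-\alpha})\ff{O}$, and the hypotheses $\ell\ge 4/(e\epsi)$ and $k\ge 2(2-\epsi)\ell^2/(e\epsi\ell-4)$ are calibrated precisely so that $\alpha$ is close enough to $1$ to give $1-e^{-\alpha}\ge 1/e-\epsi$. Finally, the query complexity is immediate from the threshold structure: each of the $\ell=O(1/\epsi)$ outer iterations performs a single geometric sweep from the maximum singleton value down to an $\epsi/k$ fraction of it, using $O(\log(k)/\epsi)$ threshold levels and $O(n)$ queries per level, for $O(n\log(k)/\epsi^2)$ queries overall.
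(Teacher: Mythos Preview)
Your high-level plan (lift the greedy blending analysis to the threshold setting by paying a $1/(1-\epsi)$ factor on each greedy comparison) is exactly the paper's approach, and your threshold analogue of Lemma~\ref{lemma:par-A} is close to Lemma~\ref{lemma:tg-par-A} in the paper. There is a minor omission: the paper's version carries an additive $\frac{\epsi' M}{(1-\epsi')\ell}$ term, arising from the case $|A_l\setminus G_{i-1}|<m$ where the threshold has dropped below $\tau_{\min}=\epsi' M/k$ and the remaining elements of $O_l$ are bounded only by $\tau_{\min}/(1-\epsi')$, not by any accepted element. This is easy to add and is absorbed into the final $\epsi$.

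The real gap is in your recursion and its solution. You claim
\[
\ex{\ff{O\cup G_{i-1}}-\ff{G_i}\mid G_{i-1}}\le (1-\alpha/\ell)\bigl(\ff{O\cup G_{i-1}}-\ff{G_{i-1}}\bigr),
\]
and then ``iterate'' to $\ex{\ff{G_\ell}}\ge (1-e^{-\alpha})\ff{O}$. But $1-e^{-\alpha}\approx 1-1/e$, not the $1/e-\epsi$ the theorem asserts; you have inadvertently reproduced the monotone analysis. The recursion above cannot be telescoped the way you suggest, because the left-hand side at step $i+1$ involves $\ff{O\cup G_i}$, not $\ff{O\cup G_{i-1}}$, and in the non-monotone setting $\ff{O\cup G_i}$ can be strictly smaller. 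The paper handles this with two coupled recursions (Lemma~\ref{lemma:tg-recur}): a degradation bound $\ex{\ff{O\cup G_i}}\ge (1-1/\ell)\ex{\ff{O\cup G_{i-1}}}$, and a progress bound in which the coefficient of $\ff{O\cup G_{i-1}}$ is $(1-1/\ell)$, coming from $\sum_l \ff{O\cup A_l}\ge (\ell-1)\ff{O\cup G_{i-1}}$. Solving these together yields a leading factor of $(1-1/\ell)^\ell\approx 1/e$, which is where the $1/e-\epsi$ ratio actually comes from. Your plan needs to incorporate both of these ingredients; as written, the recursion step does not go through and would, if taken at face value, prove an approximation ratio that is known to be unachievable for non-monotone objectives.
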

\begin{proof}
When $k\,\text{mod}\,\ell > 0$, the algorithm returns an approximation with a size constraint of 
$\ell\cdot\left\lfloor \frac{k}{\ell}\right\rfloor$, where by Proposition~\ref{prop:dif-opt},
\begin{equation}\label{inq:tgtwo-dif-opt}
\ff{O'} \ge \left(1-\frac{\ell}{k}\right)\ff{O}, 
O' = \argmax\limits_{S\subseteq \uni, |S|\le \ell\cdot \left\lfloor \frac{k}{\ell} \right\rfloor}\ff{S}.
\end{equation}
In the following, we only consider the case where $k\,\text{mod}\,\ell = 0$.

At every iteration of the outer for loop,
$\ell$ solutions are constructed, with each solution being augmented
by at most $k/\ell$ elements.
To bound the marginal gain of the optimal set $O$ on each solution set $A_l$,
we consider partitioning $O$ into $\ell$ subsets.
We formalize this partition in the following claim,
which yields a result analogous to Claim~\ref{claim:par-A} presented in 
Section~\ref{sec:greedy-blend}.
Specifically, the claim states that the optimal set $O$ can be evenly 
divided into $\ell$ subsets,
where each subset only overlaps with only one solution set.
\begin{claim}
At an iteration $i$ of the outer for loop in Alg.~\ref{alg:tgtwo},
let $G_{i-1}$ be $G$ at the start of this iteration,
and $A_{l}$ be the set at the end of this iteration,
for each $l\in [\ell]$.
The set $O\setminus G_{i-1}$ can then be split into $\ell$ pairwise disjoint sets $\{O_1, \ldots, O_\ell\}$
such that $|O_l| \le\frac{k}{\ell}$ and $\left(O\setminus G_{i-1}\right) \cap \left(A_{l}\setminus G_{i-1}\right) \subseteq O_l$, for all $l \in [\ell]$.
\end{claim}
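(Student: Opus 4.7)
The plan is to build the partition $\{O_1,\ldots,O_\ell\}$ in two stages: first anchor each $O_l$ on the elements of $O\setminus G_{i-1}$ that happen to land inside the corresponding solution $A_l$, and then distribute whatever optimal elements remain unassigned into the $\ell$ buckets while respecting the size cap $m=\lfloor k/\ell\rfloor$.

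More concretely, for each $l\in[\ell]$ I would set $P_l = (O\setminus G_{i-1})\cap(A_l\setminus G_{i-1})$. The inner loops of Alg.~\ref{alg:tgtwo} only add an element $x$ to some $A_l$ if $x\notin\bigcup_{r\in[\ell]} A_r$ at that moment, so the sets $\{A_l\setminus G_{i-1}\}_{l\in[\ell]}$ are pairwise disjoint; consequently the $P_l$'s are pairwise disjoint as well. Likewise, each inner while-loop adds at most one element per outer index $j\in[m]$ to a given $A_l$, giving $|A_l\setminus G_{i-1}|\le m$ and hence $|P_l|\le m$. This already delivers the containment requirement $(O\setminus G_{i-1})\cap(A_l\setminus G_{i-1})\subseteq O_l$ provided we only ever add more elements to $O_l$, never remove any of the anchors.

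Next I would handle the leftover $R = (O\setminus G_{i-1})\setminus\bigcup_{l\in[\ell]}P_l$. Since the $P_l$'s are disjoint and each has size at most $m$, the total spare capacity is
\begin{equation*}
\sum_{l\in[\ell]}(m-|P_l|) \;=\; \ell m - \Bigl|\bigcup_{l\in[\ell]}P_l\Bigr| \;\ge\; |O\setminus G_{i-1}| - \Bigl|\bigcup_{l\in[\ell]}P_l\Bigr| \;=\; |R|,
\end{equation*}
using $|O\setminus G_{i-1}|\le |O|\le k = \ell m$. Therefore $R$ can be partitioned arbitrarily among the buckets as long as each $|O_l|$ stays within $m$; I would fix any deterministic rule (e.g., greedy left-to-right filling) that achieves this and set $O_l := P_l \cup R_l$, where $\{R_l\}_{l\in[\ell]}$ is the chosen partition of $R$.

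The verification is then a short bookkeeping check: the $O_l$'s are pairwise disjoint because the $P_l$'s are disjoint and the $R_l$'s are disjoint and $R$ is disjoint from $\bigcup P_l$; their union equals $\bigcup P_l \cup R = O\setminus G_{i-1}$; each $|O_l|\le m = k/\ell$ by construction; and $(O\setminus G_{i-1})\cap(A_l\setminus G_{i-1}) = P_l \subseteq O_l$. I do not foresee a genuine obstacle here — the claim is structural and rides entirely on the facts that the solutions are constructed from disjoint pools and that each solution receives at most $m$ new elements per outer iteration. The only mild subtlety is making sure the size budget accommodates both the anchors and the leftover, which the counting inequality above resolves cleanly.
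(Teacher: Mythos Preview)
Your argument is correct. The paper does not actually give a proof of this claim (nor of the identical Claim~\ref{claim:par-A} for Alg.~\ref{alg:gdtwo}); it simply asserts it as a structural observation and moves on to Lemma~\ref{lemma:tg-par-A}. Your two-stage construction---anchor $P_l=(O\setminus G_{i-1})\cap(A_l\setminus G_{i-1})$ and then fill the leftover $R$ greedily---is exactly the natural way to make the claim precise, and your verification of disjointness, coverage, size, and containment is clean.

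One small point worth flagging: you write ``$|O|\le k=\ell m$'', but $m=\lfloor k/\ell\rfloor$, so $\ell m=k$ only when $\ell\mid k$. The paper handles this by explicitly restricting the entire analysis to the case $k\bmod\ell=0$ (see the sentence just before the claim in the proof of Theorem~\ref{thm:tgtwo}, and Inequality~\eqref{inq:tgtwo-dif-opt} for the reduction). So your counting inequality is valid in the context where the claim is actually invoked, but it is worth stating that assumption rather than silently equating $k$ with $\ell m$.
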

Next, based on such partition, we introduce the following lemma, 
which provides a bound on the marginal gain of any subset $O_{l_1}$ 
with respect to any solution set $A_{l_2}$,
where $1\le l_1, l_2 \le \ell$.
\begin{lemma}\label{lemma:tg-par-A}
Fix on $G_{i-1}$ for an iteration $i$ of the outer for loop in Alg.~\ref{alg:tgtwo}.
Following the definition in Claim~\ref{claim:par-A}, it holds that
\begin{align*}
\text{1) }&\marge{O_{l}}{A_{l}}\le \frac{\marge{A_{l}}{G_{j-1}}}{1-\epsi'}+\frac{\epsi' M}{(1-\epsi')\ell}, \forall 1\le l \le \ell,\\
\text{2) }&\marge{O_{l_2}}{A_{l_1}} + \marge{O_{l_1}}{A_{l_2}} \le \frac{1}{1-\epsi'}\left(1+\frac{1}{m}\right)\left(\marge{A_{l_1}}{G_{i-1}}+\marge{A_{l_2}}{G_{i-1}}\right)
+\frac{2\epsi' M}{(1-\epsi')\ell}, \forall 1\le l_1 < l_2 \le \ell.
\end{align*}
\end{lemma}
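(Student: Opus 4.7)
The plan is to combine the blending argument from Lemma~\ref{lemma:par-A} with the descending-threshold bookkeeping that appeared in the proof of Theorem~\ref{thm:tgone}. In both parts of the lemma, the idea is to pair each $o_j$ in the relevant subset of $O$ with a specific element $a_{l, c+j}$ added to one of the two interlaced solutions, and then upper bound $\marge{o_j}{A_{l, c+j-1}}$ using the threshold in force when $a_{l, c+j}$ was selected. Wherever the greedy selection inequality $\marge{a_{l, c+j}}{A_{l, c+j-1}} \ge \marge{o_j}{A_{l, c+j-1}}$ was invoked in the proof of Lemma~\ref{lemma:par-A}, we will instead show $\marge{o_j}{A_{l, c+j-1}} < \tau_l^{a_{l, c+j}}/(1-\epsi') \le \marge{a_{l, c+j}}{A_{l, c+j-1}}/(1-\epsi')$, matching the threshold version used in Theorem~\ref{thm:tgone}.

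For part 1, I would order $O_l$ as $\{o_1, o_2, \ldots\}$ so that $o_j \notin A_{l, j-1}$ for each $j$, which is possible because $A_{l, c_l^*}\setminus G_{i-1}\subseteq O_l$ and each $A_{l,j}\setminus A_{l,j-1}$ is a singleton. I then split the indices into those $j \le |A_l\setminus G_{i-1}|$, for which the threshold comparison yields the factor $1/(1-\epsi')$, and the remaining ones, which only exist when the inner while loop terminates with $\tau_l < \epsi' M/k$ and therefore contribute at most $\epsi' M/(k(1-\epsi'))$ each. Summing over at most $m = k/\ell$ elements produces the additive error $\epsi' M/((1-\epsi')\ell)$, giving part 1 after invoking Proposition~\ref{prop:sum-marge} and submodularity to telescope the greedy side into $\marge{A_l}{G_{i-1}}$.

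For part 2, I would mirror the two cases $c_{l_1}^* \le c_{l_2}^*$ and $c_{l_1}^* > c_{l_2}^*$ from the proof of Lemma~\ref{lemma:par-A} in Appendix~\ref{apx:greedy-1/e} (see Fig.~\ref{fig:gdtwo}). In each case, after peeling off the common prefix $A_{l_r, c_r^*}$ via submodularity, I would order $O_{l_r}\setminus A_{l_r, c_r^*}$ so that its $j$-th element is outside the union of solutions at the moment $a_{l_{3-r}, \cdot+j}$ is considered, and apply the threshold comparison to obtain the $1/(1-\epsi')$ factor throughout. The bookkeeping $+1/m$ coming from the last element of $O_{l_2}$ in the second case carries over unchanged from Lemma~\ref{lemma:par-A}, since it only used the submodular bound $\marge{o_{m-c_{l_2}^*}}{A_{l_1}} \le \frac{1}{m}\marge{A_{l_1}}{G_{i-1}}$. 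Two potential early-termination tails---one for $A_{l_1}$ and one for $A_{l_2}$---each contribute at most $\epsi' M/((1-\epsi')\ell)$ to the error, summing to $2\epsi' M/((1-\epsi')\ell)$ as required.

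The main obstacle is keeping the threshold and availability bookkeeping consistent with the alternating schedule across the two solutions. When $a_{l_2, c+j}$ is picked, one must verify both that $o_j$ is still available (not in any $A_r$) and that $o_j$ was rejected at the preceding threshold $\tau_{l_2}^{a_{l_2, c+j}}/(1-\epsi')$, so that submodularity gives $\marge{o_j}{A_{l_2, c+j-1}} < \tau_{l_2}^{a_{l_2, c+j}}/(1-\epsi')$; the edge case $\tau_{l_2}^{a_{l_2, c+j}} = M$ is handled separately using $\marge{o_j}{\emptyset} \le M$. The alternating selection guarantees that every $o_j$ in our ordering passes these two availability checks at the right moment, at which point the calculation proceeds line by line in parallel with the proof of Lemma~\ref{lemma:par-A}.
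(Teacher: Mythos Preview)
Your plan is essentially the paper's own proof: order each $O_l$ (or $O_{l_r}\setminus A_{l_r,c^*}$) so that the $j$-th element is still outside the union of solutions when the matching step of the threshold greedy runs, replace each greedy inequality by the threshold inequality $\marge{o_j}{A_{l,\cdot}}<\tau_l^{(\cdot)}/(1-\epsi')\le \marge{a_{l,\cdot}}{A_{l,\cdot-1}}/(1-\epsi')$, and absorb the early-termination tail $\tau_l<\epsi' M/k$ into the additive $\epsi' M/((1-\epsi')\ell)$ term; the two cases $c_{l_1}^*\le c_{l_2}^*$ and $c_{l_1}^*> c_{l_2}^*$ are treated exactly as in Lemma~\ref{lemma:par-A}.

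One small correction: the bound $\marge{o_{m-c_{l_2}^*}}{A_{l_1}}\le \tfrac{1}{m}\marge{A_{l_1}}{G_{i-1}}$ in Lemma~\ref{lemma:par-A} is \emph{not} a pure submodularity fact---it relies on the argmax rule, which is gone here. In the threshold setting you instead get $\marge{o_{m-c_{l_2}^*}}{A_{l_1}}<\tfrac{1}{m}\sum_j \tau_{l_1}^j/(1-\epsi')\le \tfrac{1}{(1-\epsi')m}\marge{A_{l_1}}{G_{i-1}}$, exactly as the paper does. The extra $1/(1-\epsi')$ is harmless because the target bound already carries $\tfrac{1}{1-\epsi'}(1+\tfrac{1}{m})$, so your conclusion is unaffected.
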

Followed by the above lemma, 
we provide the recurrence of $\ex{\ff{G_i}}$ and $\ex{\ff{O\cup G_i}}$.
\begin{lemma}\label{lemma:tg-recur}
For any iteration $i$ of the outer for loop in Alg.~\ref{alg:tgtwo},
it holds that
\begin{align*}
\text{1) } & \ex{\ff{O\cup G_i}}\ge \left(1-\frac{1}{\ell}\right) \ex{\ff{O\cup G_{i-1}}}\\
\text{2) } & \ex{\ff{G_i} - \ff{G_{i-1}}}
\ge\frac{1}{1+\frac{\ell}{1-\epsi'}}\left(1-\frac{1}{m+1}\right)\left(\left(1-\frac{1}{\ell}\right)  \ex{\ff{O\cup G_{i-1}}} - \ex{\ff{G_{i-1}}} - \frac{\epsi'}{1-\epsi'}\ff{O}\right).
\end{align*}
\end{lemma}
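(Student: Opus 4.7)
The plan is to treat the two parts separately; both rely on first fixing $G_{i-1}$, deriving a deterministic inequality over the $\ell$ intermediate sets $\{A_l\}_{l\in[\ell]}$, exploiting that $G_i$ is chosen uniformly at random from these sets, and then unfixing $G_{i-1}$. The bones of the argument follow exactly the template of Lemmata~\ref{lemma:gdtwo-rec} and~\ref{lemma:gdtwo-deg}, with the threshold-greedy version of the per-pair bound, namely Lemma~\ref{lemma:tg-par-A}, in place of Lemma~\ref{lemma:par-A}; the new ingredient is to track the additive $\epsi'$ slack terms and show they collapse into a single $\frac{\epsi'}{1-\epsi'}\ff{O}$ error in the final recurrence.

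For part~1, I would observe that the sets $\{A_l\setminus G_{i-1}\}_{l\in[\ell]}$ are pairwise disjoint (each outer iteration of Alg.~\ref{alg:tgtwo} adds each candidate element to at most one solution), then apply the second statement of Proposition~\ref{prop:sum-marge} to the submodular function $g(S)=\ff{O\cup S}$ with base $G_{i-1}$ and the disjoint increments $A_l\setminus G_{i-1}$. This yields $\sum_{l\in[\ell]}\ff{O\cup A_l}\ge(\ell-1)\ff{O\cup G_{i-1}}$. Because $G_i$ is uniform on $\{A_l\}_{l\in[\ell]}$, dividing by $\ell$ gives $\exc{\ff{O\cup G_i}}{G_{i-1}}\ge(1-1/\ell)\ff{O\cup G_{i-1}}$; taking an outer expectation closes part~1.

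For part~2, the chain mirrors the proof of Lemma~\ref{lemma:gdtwo-rec}. After fixing $G_{i-1}$ and invoking Claim~\ref{claim:par-A} to partition $O\setminus G_{i-1}$ into $\{O_l\}_{l\in[\ell]}$, I would start from Inequality~\eqref{inq:gdtwo-par} and substitute Lemma~\ref{lemma:tg-par-A} on both the diagonal and the cross terms. Summing the diagonal slack $\frac{\epsi' M}{(1-\epsi')\ell}$ over $\ell$ indices and the cross slack $\frac{2\epsi' M}{(1-\epsi')\ell}$ over $\binom{\ell}{2}$ pairs, the total slack telescopes to $\frac{\epsi'\ell M}{1-\epsi'}$; upper-bounding $M\le\ff{O}$ converts this to $\frac{\epsi'\ell}{1-\epsi'}\ff{O}$. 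The aggregate main term becomes $\frac{\ell(1+1/m)}{1-\epsi'}\sum_l\marge{A_l}{G_{i-1}}$. Now write $\sum_l\marge{O}{A_l}=\sum_l\ff{O\cup A_l}-\ell\ff{G_{i-1}}-\sum_l\marge{A_l}{G_{i-1}}$ and lower-bound $\sum_l\ff{O\cup A_l}\ge(\ell-1)\ff{O\cup G_{i-1}}$ by the argument of part~1, producing
\begin{equation*}
\Bigl[\tfrac{\ell(1+1/m)}{1-\epsi'}+1\Bigr]\sum_{l\in[\ell]}\marge{A_l}{G_{i-1}}\ \ge\ (\ell-1)\ff{O\cup G_{i-1}}-\ell\ff{G_{i-1}}-\tfrac{\epsi'\ell}{1-\epsi'}\ff{O}.
\end{equation*}
Dividing by $\ell$, identifying $\exc{\ff{G_i}-\ff{G_{i-1}}}{G_{i-1}}=\frac{1}{\ell}\sum_l\marge{A_l}{G_{i-1}}$, and finally taking expectation over $G_{i-1}$ gives the claim, after noting that $\frac{1-\epsi'}{\ell(1+1/m)+(1-\epsi')}\ge\frac{1-\epsi'}{(\ell+(1-\epsi'))(1+1/m)}=\frac{1}{1+\ell/(1-\epsi')}(1-\tfrac{1}{m+1})$, which is a monotone weakening of the constant.

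The main obstacle is the final algebraic simplification: the exact coefficient coming out of the rearrangement is $\frac{1-\epsi'}{\ell(1+1/m)+(1-\epsi')}$, and showing this is at least the cleaner factor $\frac{1}{1+\ell/(1-\epsi')}(1-\tfrac{1}{m+1})$ claimed in the lemma requires noticing that adding $1-\epsi'$ to the denominator of the first expression can be absorbed by replacing $\ell$ with $(1-\epsi')+\ell$ inside an $(m+1)/m$ factor, which is the only nontrivial estimate in the proof. Beyond this, the argument is a careful bookkeeping of the threshold slack and reuses every structural ingredient already established for the exact-greedy case in Appendix~\ref{apx:greedy-1/e}.
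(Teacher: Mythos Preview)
Your proposal is correct and follows essentially the same argument as the paper: fix $G_{i-1}$, use disjointness of $\{A_l\setminus G_{i-1}\}$ with Proposition~\ref{prop:sum-marge} for part~1, and for part~2 combine the partition Inequality~\eqref{inq:gdtwo-par} with Lemma~\ref{lemma:tg-par-A}, sum the slack terms to $\frac{\epsi'\ell}{1-\epsi'}\ff{O}$, and rearrange. The paper performs the final weakening you flag implicitly, simply writing the coefficient $\bigl(1+\tfrac{\ell}{1-\epsi'}\bigr)\bigl(1+\tfrac{1}{m}\bigr)$ in place of your tighter $\tfrac{\ell(1+1/m)}{1-\epsi'}+1$, so your last paragraph in fact verifies a step the paper leaves to the reader.
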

By solving the recurrence in Lemma~\ref{lemma:tg-recur},
we calculate the approximation ratio of the algorithm as follows,
\begin{align*}
&\ex{\ff{G_{i}}}  \ge \left(1-\frac{1}{\ell}\right) \ex{\ff{G_{i-1}}}
+ \frac{1}{1+\frac{\ell}{1-\epsi'}}\left(1-\frac{1}{m+1}\right)\left(\left(1-\frac{1}{\ell}\right)^i - \frac{\epsi'}{1-\epsi'}\right)\ff{O}\\
\Rightarrow& \ex{\ff{G_\ell}} \ge \frac{\ell}{1+\frac{\ell}{1-\epsi'}}\left(1-\frac{1}{m+1}\right)\left(\left(1-\frac{1}{\ell}\right)^\ell - \frac{\epsi'}{1-\epsi'}\left(1-\left(1-\frac{1}{\ell}\right)^\ell\right)\right)\ff{O}\\
&\hspace*{4em} \ge \frac{\ell-1}{1+\frac{\ell}{1-\epsi'}}\left(1-\frac{1}{m+1}\right)\left(e^{-1} - \frac{\epsi'}{1-\epsi'}\left(1-e^{-1}\right)\right)\ff{O}\\
&\hspace*{4em} \ge \frac{1}{1-\frac{\ell}{k}}\left(1-\epsi' - \frac{2}{\ell}\right)\left(1-\frac{\ell}{k}\right)^2\left(e^{-1} - \frac{\epsi'}{1-\epsi'}\left(1-e^{-1}\right)\right) \ff{O}\\
&\hspace*{4em} \ge \frac{1}{1-\frac{\ell}{k}}\left(1-\epsi' - \frac{2}{\ell}-\frac{2(1-\epsi')\ell}{k}\right)\left(e^{-1} - \frac{\epsi'}{1-\epsi'}\left(1-e^{-1}\right)\right) \ff{O}\\
&\hspace*{4em} \ge \frac{1}{1-\frac{\ell}{k}} \left(1-(e+1)\epsi'\right)\left(e^{-1} - \frac{\epsi'}{1-\epsi'}\left(1-e^{-1}\right)\right) \ff{O}\tag{$\ell \ge \frac{2}{e\epsi'}, k \ge \frac{2(1-\epsi')\ell}{e\epsi'-\frac{2}{\ell}}$}\\
&\hspace*{4em} \ge \frac{1}{1-\frac{\ell}{k}} \left(e^{-1}-\epsi\right)\ff{O}\tag{$\epsi' = \frac{\epsi}{2}$}.
\end{align*}
By Inequality~\ref{inq:tgtwo-dif-opt},
the approximation ratio of Alg.~\ref{alg:tgtwo} is $e^{-1}-\epsi$.
\end{proof}

In the rest of this section, we provide the proofs for 
Lemma~\ref{lemma:tg-par-A} and~\ref{lemma:tg-recur}.
\begin{proof}[Proof of Lemma~\ref{lemma:tg-par-A}]
At iteration $i$ of the outer for loop,
let $A_l$ be the set at the end of iteration $i$,
$a_{l, j}$ be the $j$-th element added to $A_l$,
$\tau_l^j$ be the threshold value of $\tau_l$ when $a_{l, j}$ is added to $A_l$,
and $A_{l, j}$ be $A_l$ after $a_{l, j}$ is added to $A_l$.
Let $c_l^* = \max\{c\in [m]:A_{l, c}\setminus G_{i-1}\subseteq O_l\}$.

First, we prove that the first inequality holds.
For each $l\in [\ell]$, order the elements in $O_l$ as $\{o_1, o_2, \ldots\}$
such that $o_j \not \in A_{l, j-1}$ for any $1\le j \le |A_l\setminus G_{i-1}|$,
and $o_j\not \in A_l$ for any $|A_l\setminus G_{i-1}| < j \le m$.

When $1\le j \le |A_l\setminus G_{i-1}|$, by Claim~\ref{claim:par-A},
each $o_j$ is either added to $A_l$ or not in any solution set.
Since $\tau_l$ is initialized with the maximum marginal gain $M$,
$o_j$ is not considered to be added to $A_l$ with threshold value 
$\tau_l^j/(1-\epsi')$.
Therefore, by submodularity it holds that
\begin{equation}\label{inq:tgtwo-1}
\marge{o_j}{A_{l, j-1}} < \tau_l^j/(1-\epsi')\le \marge{a_{l,j}}{A_{l,j-1}}/(1-\epsi'),
\forall 1\le j \le |A_l\setminus G_{i-1}|.
\end{equation}

When $|A_l\setminus G_{i-1}| <  m$,
the minimum value of $\tau_l$ is less than $\frac{\epsi' M}{k}$.
Then, for any $|A_l\setminus G_{i-1}| < j \le m$,
$o_j$ is not considered to be added to $A_l$ with threshold value less than $\frac{\epsi' M}{(1-\epsi')k}$.
It follows that 
\begin{equation}\label{inq:tgtwo-2}
\marge{o_j}{A_l} \le \frac{\epsi' M}{(1-\epsi')k},
\forall |A_l\setminus G_{i-1}| < j \le m.
\end{equation}

Then,
\begin{align*}
\marge{O_{l}}{A_{l}} &\le \sum_{o_j\in O_l} \marge{o_j}{A_{l}} \tag{Proposition~\ref{prop:sum-marge}}\\
&\le \sum_{j=1}^{|A_l\setminus G_{i-1}|} \marge{o_j}{A_{l, j-1}} + 
\sum_{j=|A_l\setminus G_{i-1}|+1}^m \marge{o_j}{A_{l}}\tag{Submodularity}\\
&\le \sum_{j=1}^{|A_l\setminus G_{i-1}|}\frac{\marge{a_{l,j}}{A_{l,j-1}}}{1-\epsi'}+\frac{\epsi' M}{(1-\epsi')\ell}
\tag{Inequalities~\eqref{inq:tgtwo-1} and~\eqref{inq:tgtwo-2}}\\
&= \frac{\marge{A_{l}}{G_{j-1}}}{1-\epsi'}+\frac{\epsi' M}{(1-\epsi')\ell}.
\end{align*}
The first inequality holds.

In the following, we prove that the second inequality holds.
For any $1\le l_1\le l_2\le \ell$,
we analyze two cases of the relationship between $c_{l_1}^* $ and $ c_{l_2}^*$ in the following.


\textbf{Case 1: $c_{l_1}^* \le c_{l_2}^*$; left half part in Fig.~\ref{fig:gdtwo}.}

First, we bound $\marge{O_{l_1}}{A_{l_2}}$.
Order the elements in $O_{l_1}\setminus A_{l_1, c_{l_1}^*}$ as $\{o_1, o_2, \ldots\}$ such that $o_j \not \in A_{l_1, c_{l_1}^*+j}$.
(Refer to the gray block with a dotted edge in the top left corner of Fig.~\ref{fig:gdtwo} for $O_{l_1}$.
If $c_{l_1}^*+j$ is greater than the number of elements added to $A_{l_1}$,
$A_{l_1, c_{l_1}^*+j}$ refers to $A_{l_1}$.)
Note that, since $A_{l_1, c_{l_1}^*} \subseteq O_{l_1}$,
it follows that $|O_{l_1}\setminus A_{l_1, c_{l_1}^*}| \le m - c_{l_1}^*$.

When $1 \le j \le |A_{l_2}\setminus G_{i-1}| - c_{l_1}^*$,
since each $o_j$ is either added to $A_{l_1}$ or not in any solution set by Claim~\ref{claim:par-A}
and $\tau_{l_2}$ is initialized with the maximum marginal gain $M$,
$o_j$ is not considered to be added to $A_{l_2}$ with threshold value $\tau_{l_2}^{c_{l_1}^* + j}/(1-\epsi')$.
Therefore, it holds that 
\begin{equation}\label{inq:tgtwo-case2-1}
\marge{o_j}{A_{l_2, c_{l_1}^*+j-1}} < \frac{\tau_{l_2}^{c_{l_1}^* + j}}{1-\epsi'} \le \frac{\marge{a_{l_2, c_{l_1}^* + j}}{A_{l_2, c_{l_1}^* + j-1}}}{1-\epsi'}, \forall 1\le j\le |A_{l_2}\setminus G_{i-1}|-c_{l_1}^*.
\end{equation}

When $|A_{l_2}\setminus G_{i-1}| < m$ and $|A_{l_2}\setminus G_{i-1}|-c_{l_1}^* < j\le m-c_{l_1}^*$,
this iteration ends with $\tau_{l_2} < \frac{\epsi' M}{k}$ and
$o_j$ is never considered to be added to $A_{l_2}$.
Thus, it holds that
\begin{equation}\label{inq:tgtwo-case2-3}
\marge{o_j}{A_{l_2}} < \frac{\epsi' M}{(1-\epsi')k}, 
\forall |A_{l_2}\setminus G_{i-1}|-c_{l_1}^* < j \le m-c_{l_1}^*.
\end{equation}

Then,
\begin{align*}
\marge{O_{l_1}}{A_{l_2}} &\le \marge{A_{l_1, c_{l_1}^*}}{A_{l_2}}  + \sum_{o_j \in O_{l_1}\setminus A_{l_1, c_{l_1}^*}}\marge{o_j}{A_{l_2}} \tag{Proposition~\ref{prop:sum-marge}}\\
&\le \marge{A_{l_1, c_{l_1}^*}}{G_{i-1}} + \sum_{j = 1}^{|A_{l_2}\setminus G_{i-1}|-c_{l_1}^*}\marge{o_j}{A_{l_2, , c_{l_1}^*+j-1}} + \sum_{j=|A_{l_2}\setminus G_{i-1}|-c_{l_1}^*+1}^{m-c_{l_1}^*} \marge{o_j}{A_{l_2}} \tag{submodularity}\\
&\le \ff{A_{l_1, c_{l_1}^*}}-\ff{G_{i-1}} + \sum_{j = 1}^{|A_{l_2}\setminus G_{i-1}|-c_{l_1}^*}\frac{\marge{a_{l_2, c_{l_1}^*+j}}{A_{l_2, , c_{l_1}^*+j-1}}}{1-\epsi'} + \frac{\epsi' M}{(1-\epsi')\ell} \tag{Inequality~\eqref{inq:tgtwo-case2-1} and~\eqref{inq:tgtwo-case2-3}}\\
& \le \ff{A_{l_1, c_{l_1}^*}}-\ff{G_{i-1}} + \frac{\ff{A_{l_2}} - \ff{A_{l_2, c_{l_1}^*}}}{1-\epsi'} + \frac{\epsi' M}{(1-\epsi')\ell} \numberthis \label{inq:tgtwo-case2-6}
\end{align*}

Similarly, we bound $\marge{O_{l_2}}{A_{l_1}}$ below.
Order the elements in $O_{l_2}\setminus A_{l_2, c_{l_1}^*}$ as $\{o_1, o_2, \ldots\}$ such that $o_j \not \in A_{l_2, c_{l_1}^*+j-1}$.
(See the gray block with a dotted edge in the bottom left corner of Fig.~\ref{fig:gdtwo} for $O_{l_2}$.
If $c_{l_1}^*+j-1$ is greater than the number of elements added to $A_{l_2}$,
$A_{l_2, c_{l_1}^*+j-1}$ refers to $A_{l_2}$.)
Note that, since $A_{l_2, c_{l_1}^*} \subseteq O_{l_2}$,
it follows that $|O_{l_2}\setminus A_{l_2, c_{l_1}^*}| \le m - c_{l_1}^*$.

When $1 \le j \le |A_{l_1}\setminus G_{i-1}|-c_{l_1}^*$,
since each $o_j$ is either added to $A_{l_2}$ or not in any solution set by Claim~\ref{claim:par-A}
and $\tau_{l_1}$ is initialized with the maximum marginal gain $M$,
$o_j$ is not considered to be added to $A_{l_1}$ with threshold value $\tau_{l_1}^{c_{l_1}^* + j}/(1-\epsi')$.
Therefore, it holds that 
\begin{equation}\label{inq:tgtwo-case2-4}
\marge{o_j}{A_{l_1, c_{l_1}^*+j-1}} < \frac{\tau_{l_1}^{c_{l_1}^* + j}}{1-\epsi'} \le \frac{\marge{a_{l_1, c_{l_1}^* + j}}{A_{l_1, c_{l_1}^* + j-1}}}{1-\epsi'}, \forall 1\le j\le |A_{l_2}\setminus G_{i-1}|-c_{l_1}^*.
\end{equation}

When $|A_{l_1}\setminus G_{i-1}| < m$ and $|A_{l_1}\setminus G_{i-1}|-c_{l_1}^* < j\le m-c_{l_1}^*$,
this iteration ends with $\tau_{l_1} < \frac{\epsi' M}{k}$
and $o_j$ is never considered to be added to $A_{l_1}$.
Thus, it holds that
\begin{equation}\label{inq:tgtwo-case2-5}
\marge{o_j}{A_{l_1}} < \frac{\epsi' M}{(1-\epsi')k}, \forall |A_{l_1}\setminus G_{i-1}|-c_{l_1}^* < j \le m-c_{l_1}^*.
\end{equation}

Then,
\begin{align*}
\marge{O_{l_2}}{A_{l_1}} &\le \marge{A_{l_2, c_{l_1}^*}}{A_{l_1}}  + \sum_{o_j \in O_{l_2}\setminus A_{l_2, c_{l_1}^*}}\marge{o_j}{A_{l_1}} \tag{Proposition~\ref{prop:sum-marge}}\\
&\le \marge{A_{l_2, c_{l_1}^*}}{G_{i-1}} + \sum_{j = 1}^{|A_{l_1}\setminus G_{i-1}|-c_{l_1}^*}\marge{o_j}{A_{l_1, c_{l_1}^*+j-1}} + \sum_{j=|A_{l_1}\setminus G_{i-1}|-c_{l_1}^*+1}^{m-c_{l_1}^*} \marge{o_j}{A_{l_1}} \tag{submodularity}\\
&\le \ff{A_{l_2, c_{l_1}^*}}-\ff{G_{i-1}} + \sum_{j = 1}^{|A_{l_1}\setminus G_{i-1}|-c_{l_1}^*}\frac{\marge{a_{l_1, c_{l_1}^*+j}}{A_{l_1, , c_{l_1}^*+j-1}}}{1-\epsi'} + \frac{\epsi' M}{(1-\epsi')\ell} \tag{Inequality~\eqref{inq:tgtwo-case2-4} and~\eqref{inq:tgtwo-case2-5}}\\
& \le \ff{A_{l_2, c_{l_1}^*}}-\ff{G_{i-1}} + \frac{\ff{A_{l_1}} - \ff{A_{l_1, c_{l_1}^*}}}{1-\epsi'} + \frac{\epsi' M}{(1-\epsi')\ell} \numberthis \label{inq:tgtwo-case2-7}
\end{align*}

By Inequalities~\eqref{inq:tgtwo-case2-6} and~\eqref{inq:tgtwo-case2-7},
\begin{align*}
\marge{O_{l_1}}{A_{l_2}}+\marge{O_{l_2}}{A_{l_1}}
\le \frac{1}{1-\epsi'}\left(\marge{A_{l_1}}{G_{i-1}}+\marge{A_{l_2}}{G_{i-1}}\right)
+\frac{2\epsi' M}{(1-\epsi')\ell}
\end{align*}

Thus, the lemma holds in this case.

\textbf{Case 2: $c_{l_1}^* > c_{l_2}^*$; right half part in Fig.~\ref{fig:gdtwo}.}

First, we bound $\marge{O_{l_1}}{A_{l_2}}$.
Order the elements in $O_{l_1}\setminus A_{l_1, c_{l_2}^*+1}$ as $\{o_1, o_2, \ldots\}$ such that $o_j \not \in A_{l_1, c_{l_1}^*+j}$.
(Refer to the gray block with a dotted edge in the top right corner of Fig.~\ref{fig:gdtwo} for $O_{l_1}$.
If $c_{l_1}^*+j$ is greater than the number of elements added to $A_{l_1}$,
$A_{l_1, c_{l_1}^*+j}$ refers to $A_{l_1}$.)
Note that, since $A_{l_1, c_{l_2}^*+1} \subseteq O_{l_1}$,
it follows that $|O_{l_1}\setminus A_{l_1, c_{l_2}^*+1}| \le m - c_{l_2}^*-1$.

When $1 \le j \le |A_{l_2}\setminus G_{i-1}| - c_{l_2}^* - 1$,
since each $o_j$ is either added to $A_{l_1}$ or not in any solution set by Claim~\ref{claim:par-A}
and $\tau_{l_2}$ is initialized with the maximum marginal gain $M$,
$o_j$ is not considered to be added to $A_{l_2}$ with threshold value $\tau_{l_2}^{c_{l_2}^* + j}/(1-\epsi')$.
Therefore, it holds that 
\begin{equation}\label{inq:tgtwo-case3-1}
\marge{o_j}{A_{l_2, c_{l_2}^*+j-1}} < \frac{\tau_{l_2}^{c_{l_2}^* + j}}{1-\epsi'} \le \frac{\marge{a_{l_2, c_{l_2}^* + j}}{A_{l_2, c_{l_2}^* + j-1}}}{1-\epsi'}, \forall 1\le j\le |A_{l_2}\setminus G_{i-1}| - c_{l_2}^* - 1.
\end{equation}

When $|A_{l_2}\setminus G_{i-1}| < m$ and $|A_{l_2}\setminus G_{i-1}|- c_{l_2}^* - 1 < j\le m- c_{l_2}^* - 1$,
this iteration ends with $\tau_{l_2} < \frac{\epsi' M}{k}$ and
$o_j$ is never considered to be added to $A_{l_2}$.
Thus, it holds that
\begin{equation}\label{inq:tgtwo-case3-3}
\marge{o_j}{A_{l_2}} < \frac{\epsi' M}{(1-\epsi')k}, 
\forall |A_{l_2}\setminus G_{i-1}|- c_{l_2}^* - 1 < j \le m- c_{l_2}^* - 1.
\end{equation}

Then,
\begin{align*}
\marge{O_{l_1}}{A_{l_2}} &\le \marge{A_{l_1, c_{l_2}^*}}{A_{l_2}}  + \sum_{o_j \in O_{l_1}\setminus A_{l_1, c_{l_2}^*+1}}\marge{o_j}{A_{l_2}} \tag{Proposition~\ref{prop:sum-marge}}\\
&\le \marge{A_{l_1, c_{l_2}^*}}{G_{i-1}} + \sum_{j = 1}^{|A_{l_2}\setminus G_{i-1}|- c_{l_2}^* - 1}\marge{o_j}{A_{l_2, , c_{l_2}^*+j-1}} + \sum_{j=|A_{l_2}\setminus G_{i-1}|- c_{l_2}^*}^{m- c_{l_2}^* - 1} \marge{o_j}{A_{l_2}} \tag{submodularity}\\
&\le \ff{A_{l_1, c_{l_2}^*}}-\ff{G_{i-1}} + \sum_{j = 1}^{|A_{l_2}\setminus G_{i-1}|- c_{l_2}^* - 1}\frac{\marge{a_{l_2, c_{l_2}^*+j}}{A_{l_2, , c_{l_2}^*+j-1}}}{1-\epsi'} + \frac{\epsi' M}{(1-\epsi')\ell} \tag{Inequality~\eqref{inq:tgtwo-case3-1} and~\eqref{inq:tgtwo-case3-3}}\\
& \le \ff{A_{l_1, c_{l_2}^*}}-\ff{G_{i-1}} + \frac{\ff{A_{l_2}} - \ff{A_{l_2, c_{l_2}^*}}}{1-\epsi'} + \frac{\epsi' M}{(1-\epsi')\ell} \numberthis \label{inq:tgtwo-case3-6}
\end{align*}

Similarly, we bound $\marge{O_{l_2}}{A_{l_1}}$ below.
Order the elements in $O_{l_2}\setminus A_{l_2, c_{l_2}^*}$ as $\{o_1, o_2, \ldots\}$ such that $o_j \not \in A_{l_2, c_{l_2}^*+j}$.
(See the gray block with a dotted edge in the bottom right corner of Fig.~\ref{fig:gdtwo} for $O_{l_2}$.
If $c_{l_2}^*+j$ is greater than the number of elements added to $A_{l_2}$,
$A_{l_2, c_{l_2}^*+j}$ refers to $A_{l_2}$.)
Note that, since $A_{l_2, c_{l_2}^*} \subseteq O_{l_2}$,
it follows that $|O_{l_2}\setminus A_{l_2, c_{l_2}^*}| \le m - c_{l_2}^*$.

When $1 \le j \le |A_{l_1}\setminus G_{i-1}|- c_{l_2}^* - 1$, 
since each $o_j$ is either added to $A_{l_2}$ or not in any solution set by Claim~\ref{claim:par-A}
and $\tau_{l_1}$ is initialized with the maximum marginal gain $M$,
$o_j$ is not considered to be added to $A_{l_1}$ with threshold value $\tau_{l_1}^{c_{l_2}^* + j+1}/(1-\epsi')$.
Therefore, it holds that 
\begin{equation}\label{inq:tgtwo-case3-4}
\marge{o_j}{A_{l_1, c_{l_2}^*+j}} < \frac{\tau_{l_1}^{c_{l_2}^* + j+1}}{1-\epsi'} \le \frac{\marge{a_{l_1, c_{l_2}^* + j+1}}{A_{l_1, c_{l_2}^* + j}}}{1-\epsi'}, \forall 1\le j\le |A_{l_2}\setminus G_{i-1}|- c_{l_2}^* - 1.
\end{equation}
If $|A_{l_1}\setminus G_{i-1}| = m$,
consider the last element $o_{m-c_{l_2}^*}$ in $O_{l_2}\setminus A_{l_2, c_{l_2}^*}$.
Since $o_{m-c_{l_2}^*} \not\in A_{l_2}$ and $o_{m-c_{l_2}^*} \not\in A_{l_1}$, $o_{m-c_{l_2}^*}$ is not considered to be added to 
$A_{l_1}$ with threshold value $\tau_{l_1}^j/(1-\epsi')$ for any $j \in [m]$.
Then,
\begin{equation}\label{inq:tgtwo-case3-2}
\marge{o_{m-c_{l_2}^*}}{A_{l_1}} < \frac{\sum_{j=1}^m \tau_{l_1}^j}{(1-\epsi')m}
\le \frac{\sum_{j=1}^m \marge{a_{l_1, j}}{A_{l_1, j-1}}}{(1-\epsi')m}
 = \frac{\marge{A_{l_1}}{G_{i-1}}}{(1-\epsi')m}.
\end{equation}

When $|A_{l_1}\setminus G_{i-1}| < m$ and $|A_{l_1}\setminus G_{i-1}|- c_{l_2}^* - 1 < j\le m- c_{l_2}^*$,
this iteration ends with $\tau_{l_1} < \frac{\epsi' M}{k}$
and $o_j$ is never considered to be added to $A_{l_1}$.
Thus, it holds that
\begin{equation}\label{inq:tgtwo-case3-5}
\marge{o_j}{A_{l_1}} < \frac{\epsi' M}{(1-\epsi')k}, 
\forall |A_{l_1}\setminus G_{i-1}|- c_{l_2}^* - 1 < j \le m- c_{l_2}^*.
\end{equation}

Then,
\begin{align*}
\marge{O_{l_2}}{A_{l_1}} &\le \marge{A_{l_2, c_{l_2}^*}}{A_{l_1}}  + \sum_{o_j \in O_{l_2}\setminus A_{l_2, c_{l_2}^*}}\marge{o_j}{A_{l_1}} \tag{Proposition~\ref{prop:sum-marge}}\\
&\le \marge{A_{l_2, c_{l_2}^*}}{G_{i-1}} + \sum_{j = 1}^{|A_{l_1}\setminus G_{i-1}|- c_{l_2}^* - 1}\marge{o_j}{A_{l_1, c_{l_2}^*+j-1}} + \sum_{j=|A_{l_1}\setminus G_{i-1}|- c_{l_2}^*}^{m} \marge{o_j}{A_{l_1}} \tag{submodularity}\\
&\le \ff{A_{l_2, c_{l_2}^*}}-\ff{G_{i-1}} + \sum_{j = 1}^{|A_{l_1}\setminus G_{i-1}|- c_{l_2}^* - 1}\frac{\marge{a_{l_1, c_{l_2}^*+j}}{A_{l_1, , c_{l_2}^*+j-1}}}{1-\epsi'}
+ \frac{\marge{A_{l_1}}{G_{i-1}}}{(1-\epsi')m}
+\frac{\epsi' M}{(1-\epsi')\ell} \tag{Inequalities~\eqref{inq:tgtwo-case3-4}-\eqref{inq:tgtwo-case3-5}}\\
& \le \ff{A_{l_2, c_{l_2}^*}}-\ff{G_{i-1}} + \frac{\ff{A_{l_1}} - \ff{A_{l_1, c_{l_2}^*}}}{1-\epsi'} + \frac{\marge{A_{l_1}}{G_{i-1}}}{(1-\epsi')m} + \frac{\epsi' M}{(1-\epsi')\ell} \numberthis \label{inq:tgtwo-case3-7}
\end{align*}

By Inequalities~\eqref{inq:tgtwo-case3-6} and~\eqref{inq:tgtwo-case3-7},
\begin{align*}
\marge{O_{l_1}}{A_{l_2}}+\marge{O_{l_2}}{A_{l_1}}
\le \frac{1}{1-\epsi'}\left(1+\frac{1}{m}\right)\left(\marge{A_{l_1}}{G_{i-1}}+\marge{A_{l_2}}{G_{i-1}}\right)
+\frac{2\epsi' M}{(1-\epsi')\ell}
\end{align*}

Thus, the lemma holds in this case.
\end{proof}

\begin{proof}[Proof of Lemma~\ref{lemma:tg-recur}]
Fix on $G_{i-1}$ at the beginning of this iteration,
Since $\{A_l\setminus G_{i-1}: l\in [\ell]\}$ are pairwise disjoint sets,
by Proposition~\ref{prop:sum-marge}, it holds that
\[\exc{\ff{O\cup G_i}}{G_{i-1}} = \frac{1}{\ell}\sum_{l\in [\ell]}\ff{O\cup A_l} \ge \left(1-\frac{1}{\ell}\right)\ff{O\cup G_{i-1}}.\]
Then, by unfixing $G_{i-1}$, the first inequality holds.

To prove the second inequality, also consider fix on $G_{i-1}$ at the beginning of iteration $i$.
Then,
\begin{align*}
\sum_{l\in [\ell]}\marge{O}{A_l} &\le \sum_{l_1\in [\ell]}\sum_{l_2\in [\ell]}\marge{O_{l_1}}{A_{l_2}}\tag{Proposition~\ref{prop:sum-marge}}\\
& = \sum_{l \in [\ell]}\marge{O_{l}}{A_{l}} + \sum_{1\le l_1< l_2 \le \ell} \left(\marge{O_{l_1}}{A_{l_2}} +\marge{O_{l_2}}{A_{l_1}}\right) \tag{Lemma~\ref{lemma:tg-par-A}}\\
& \le \sum_{l \in [\ell]}\left(\frac{\marge{A_{l}}{G_{i-1}}}{1-\epsi'}+\frac{\epsi' M}{(1-\epsi')\ell}\right)\\
&\hspace*{2em}+\sum_{1\le l_1< l_2 \le \ell} \left(\frac{1}{1-\epsi'}\left(1+\frac{1}{m}\right)\left(\marge{A_{l_1}}{G_{i-1}}
+\marge{A_{l_2}}{G_{i-1}}\right)
+\frac{2\epsi' M}{(1-\epsi')\ell}\right)\tag{Lemma~\ref{lemma:tg-par-A}}\\
&\le \frac{\ell}{1-\epsi'}\left(1+\frac{1}{m}\right)\sum_{l \in [\ell]}\marge{A_{l}}{G_{i-1}} + \frac{\epsi' \ell}{1-\epsi'}\ff{O}\tag{$M \le \ff{O}$}
\end{align*}
\begin{align*}
\Rightarrow \left(1+\frac{\ell}{1-\epsi'}\right)\left(1+\frac{1}{m}\right) \sum_{l\in [\ell]}\marge{A_l}{G_{i-1}} &\ge \sum_{l\in [\ell]}\ff{O\cup A_l} -\ell\ff{G_{i-1}} - \frac{\epsi' \ell}{1-\epsi'}\ff{O}\\
&\ge \left(\ell-1\right)\ff{O\cup G_{i-1}}-\ell\ff{G_{i-1}} - \frac{\epsi' \ell}{1-\epsi'}\ff{O}
\end{align*}
Thus,
\begin{align*}
&\exc{\ff{G_i} - \ff{G_{i-1}}}{G_{i-1}}  = \frac{1}{\ell}\sum_{l \in [\ell]}\marge{A_{l}}{G_{i-1}}\\
&\ge \frac{1}{1+\frac{\ell}{1-\epsi'}} \frac{m}{m+1}\left(\left(1-\frac{1}{\ell}\right)  \ff{O\cup G_{i-1}} - \ff{G_{i-1}} - \frac{\epsi'}{1-\epsi'}\ff{O}\right)\tag{Proposition~\ref{prop:sum-marge}}
\end{align*}
By unfixing $G_{i-1}$, the second inequality holds.
\end{proof}

\section{Analysis of \rev Parallel Algorithms Introduced in Section~\ref{sec:ptg}} 
\label{apx:ptg}
\rev
This section presents the formal analysis of our parallel algorithms introduced in Section~\ref{sec:ptg}.
We first prove fundamental lemmata for \ptgoneshort in Appendix~\ref{apx:ptgone},
then establish its approximation guarantees in Appendix~\ref{apx:ptgone-guarantee},
and finally analyze \ptgtwoshort in Appendix~\ref{apx:ptgtwo}.
\color{black}

\subsection{\rev Key Lemmata for \ptgoneshort (Alg.~\ref{alg:ptgone}, Section~\ref{sec:ptg})}\label{apx:ptgone}
\rev
We provide the key lemmata achieved by \ptgoneshort as follows,
\color{black}
\begin{lemma}\label{lemma:ptgone}
With input $(f, m, \ell, \tau_{\min}, \epsi)$, \ptgone (Alg.~\ref{alg:ptgone})
runs in $\oh{\ell^2\epsi^{-2}\log(n)\log\left(\frac{M}{\tau_{\min}}\right)}$ adaptive rounds and $\oh{\ell^3 \epsi^{-2}n\log(n)\log\left(\frac{M}{\tau_{\min}}\right)}$ queries with a probability of $1-1/n$,
and terminates with $\{(A_l, A_l'): l\in [\ell]\}$ \st
\begin{enumerate}
\item $A_l'\subseteq A_l$, $\marge{A_l'}{\emptyset} \ge \marge{A_l}{\emptyset}, \forall 1\le l \le \ell$, and $\{A_l: l\in [\ell]\}$ are pairwise disjoint sets,
\item $\marge{O_{l}}{A_{l}}\le \frac{\marge{A_{l}'}{\emptyset}}{(1-\epsi)^2}+\frac{m\cdot\tau_{\min}}{1-\epsi}, \forall 1\le l \le \ell$,
\item $\marge{O_{l_2}}{A_{l_1}} + \marge{O_{l_1}}{A_{l_2}} \le 
\frac{1+\frac{1}{m}}{(1-\epsi)^2}\left(\marge{A_{l_1}'}{\emptyset}+\marge{A_{l_2}'}{\emptyset}\right) + \frac{2m\cdot \tau_{\min}}{1-\epsi}, \text{ if } O_{l_1} = O_{l_2}, \forall 1\le l_1 < l_2 \le \ell$,
\end{enumerate}
where $O_l\subseteq \uni$, $|O_l| \le m$, and $O_l \cap A_j = \emptyset$ for each $j \neq l$.

Especially, when $\ell = 2$,
\begin{itemize}
	\item[4.] $\marge{S}{A_{1}} + \marge{S}{A_{2}} \le \frac{1}{(1-\epsi)^2}\left(\marge{A_{l_1}'}{\emptyset}+\marge{A_{l_2}'}{\emptyset}\right) + \frac{2m\cdot \tau_{\min}}{1-\epsi}, \forall S\subseteq \uni, |S| \le m$. 
\end{itemize}
\end{lemma}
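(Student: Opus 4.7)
The plan is to establish the lemma in three conceptual blocks, mirroring the structure of Lemma~\ref{lemma:tg-par-A} while accounting for the block-wise parallel additions performed by \ptgoneshort.

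\textbf{Complexity.} First I would count the iterations of the outer while loop. Each iteration falls into one of two branches. The if-branch (Lines~\ref{line:pig-if-start}-\ref{line:pig-if-end}) decreases $m_0$ by exactly one, so it fires at most $m$ times. In the else-branch (Lines~\ref{line:pig-else-start}-\ref{line:pig-else-end}), Lemma~\ref{lemma:prefix-prob} gives that with probability at least $1/2$ the index $j$ attaining $i^*$ either saturates the block (so $m_0$ drops by $i^*\ge 1$) or strips at least an $\epsi/2$ fraction of $\mathcal V_j$; by Lemma~\ref{lemma:dist}, this is at least an $\Omega(\epsi/\ell)$ fraction of $V_j$. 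After $O(\epsi^{-1}\ell\log n)$ such successful iterations at any fixed index, $V_j$ is drained, forcing \update{} to lower $\tau_j$ by a $(1-\epsi)$ factor, which can happen at most $O(\epsi^{-1}\log(M/\tau_{\min}))$ times per index. Lemma~\ref{lemma:indep} together with a Chernoff bound (Lemma~\ref{lemma:chernoff}) converts the ``probability $\ge 1/2$'' per round into the claimed $1-1/n$ high-probability guarantee. Multiplying by the $\ell$ indices and observing that \update, \dist, and \prefix each add only $O(1)$ adaptive rounds yields the stated adaptivity; the query bound follows by multiplying the round count by the per-round work $\oh{\ell n}$.

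\textbf{Structural properties.} Pairwise disjointness of $\{A_l\}$ holds because in the if-branch each $x_j$ is removed from every $V_l$ before the next selection, and in the else-branch \dist{} (Lemma~\ref{lemma:dist}) returns pairwise disjoint candidate pools from which the $S_j$ are drawn. The inclusion $A_l'\subseteq A_l$ is immediate from Line~\ref{line:tgone-subset-2}. For $\ff{A_l'}\ge \ff{A_l}$ I would argue that every element of $A_l\setminus A_l'$ is marked \textbf{false}, hence has negative marginal gain against a prefix of a block $S_j\subseteq A_l$ at its moment of evaluation by \prefix; removing these elements one by one in reverse order of addition and applying submodularity shows each removal is nondecreasing in $f$.

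\textbf{Marginal-gain bounds.} This is the heart and directly parallels Lemma~\ref{lemma:tg-par-A}. For item~2, I would fix $l$, define $c_l^*$ as the largest prefix of $A_l$ contained in $O_l$, and order $O_l\setminus A_{l,c_l^*}$ as $\{o_1,o_2,\dots\}$ so that each $o_i$ is still free (i.e., not in any $A_j$) when the $(c_l^*+i)$-th slot of $A_l$ is processed. The alternating-addition property of Section~\ref{sec:ptg-alter} is exactly what makes this ordering feasible. For each such $o_i$, either the active threshold at that step is at least $\tau_{\min}$, giving $\marge{o_i}{\cdot}<\tau_l/(1-\epsi)$ by the standard threshold-sampling argument, or it has already fallen below $\tau_{\min}$ and contributes at most $\tau_{\min}/(1-\epsi)$. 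I then relate the sum of per-step thresholds to $\marge{A_l'}{\emptyset}$: within each block $S_j$ the three-pass selection in Line~\ref{line:tgone-subset} retains every \textbf{true}-marked element (gain at least $\tau_l$ by Line~\ref{line:prefix-B-true}) while excluding every \textbf{false}-marked one from $A_l'$, so the average marginal gain across $S_j'$ is at least $(1-\epsi)\tau_l$; this is the source of the extra $1/(1-\epsi)$ factor. Summing gives item~2. For item~3, I would perform the two-case split $c_{l_1}^*\le c_{l_2}^*$ or $c_{l_1}^*>c_{l_2}^*$ exactly as in Lemma~\ref{lemma:tg-par-A}, using that an element $o\in O_{l_1}\setminus A_{l_1}$ remains available to $A_{l_2}$'s threshold process at the relevant alternating step (and vice versa); the $(1+1/m)$ penalty arises only in the tail of the second subcase, just as in Lemma~\ref{lemma:tg-par-A}. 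Item~4 ($\ell=2$) is a cleaner specialisation: with only two interlaced processes, the per-element alignment is exact and there is no ``last-element'' slack, so the argument of items~2-3 goes through without the $(1+1/m)$ term.

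\textbf{Main obstacle.} I expect the subtlest step to be the block-level translation from \prefix{}'s guarantees to per-element threshold bounds. Because a block $S_j$ may contain up to an $\epsi$-fraction of elements whose marginal gains are not certified to exceed $\tau_l$, the clean element-by-element bookkeeping of Lemma~\ref{lemma:tg-par-A} fails. Bridging this gap requires combining the three-pass prioritisation in Line~\ref{line:tgone-subset} (which keeps all \textbf{true} elements of the prefix inside $A_l'$) with submodularity to recover an averaged marginal-gain lower bound that loses only an $O(\epsi)$ factor. Doing this while simultaneously preserving the alternating-addition invariant across $\ell$ interlaced threshold processes is what forces both the extra $1/(1-\epsi)$ factor and the $m\tau_{\min}/(1-\epsi)$ residual in the final bounds.
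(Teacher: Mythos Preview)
Your plan matches the paper's proof closely. The paper packages the per-iteration guarantees (pairwise disjointness, $A_l'\subseteq A_l$, $f(A_l')\ge f(A_l)$, the per-element threshold bound $\marge{y}{A_{l,(x)}}<\tau_l^{(x)}/(1-\epsi)$, and the probability-$\tfrac12$ geometric shrinkage of some $V_j$) into an auxiliary Lemma~\ref{lemma:tgone-iteration} and then derives Properties~1--4 and the complexity bounds from it; you do the same analysis inline. Your complexity argument, the structural properties, and the two-case split on $c_{l_1}^*\lessgtr c_{l_2}^*$ for Property~3 are all what the paper does. (For Property~2 the paper is slightly simpler: it does not introduce $c_l^*$ at all but directly pairs each $y\in O_l\setminus A_l$ with a step of $A_l$, using that such $y$ lies outside every $A_j$.)

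One point needs correcting. Your explanation for why item~4 drops the $(1+\tfrac1m)$ factor --- ``with only two interlaced processes, the per-element alignment is exact'' --- misidentifies the cause. Having $\ell=2$ alone is not enough: if $O_1,O_2$ were arbitrary sets satisfying the hypotheses of item~3 with $\ell=2$, Case~2 would still produce the $(1+\tfrac1m)$ term. What actually kills it is the special structure of item~4, where a single $S$ with $|S|\le m$ generates $O_1=S\setminus A_2$ and $O_2=S\setminus A_1$. In Case~2 one has $c_{1}^*>c_{2}^*\ge 0$, so $c_{1}^*\ge 1$, meaning the first element of $A_1$ lies in $O_1\subseteq S$; hence $|O_2|=|S\setminus A_1|\le m-1$. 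This shortens the ordering of $O_{2}\setminus A_{2,c_{2}^*}$ by one slot, so the ``last element'' $o_{m-c_{2}^*}$ that generated the $\tfrac1m\sum_j\tau_{l_1}^j$ correction in Lemma~\ref{lemma:tg-par-A} simply does not exist. With this adjustment your plan goes through.
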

Before proving Lemma~\ref{lemma:ptgone}, we provide the following lemma regarding each iteration of \ptgone.
\begin{lemma}\label{lemma:tgone-iteration}
For any iteration of the while loop in \ptgone (Alg.~\ref{alg:ptgone}),
let $A_{l, 0}$, $A_{l, 0}'$, $V_{l, 0}$, $\tau_{l, 0}$ be the set and threshold value at the beginning,
and $A_l$, $A_l'$, $V_l$, $\tau_l$ be those at the end.
The following properties hold. 
\begin{enumerate}
\item With a probability of at least $1/2$,
there exists $l \in [\ell]$ \st $\tau_l < \tau_{l, 0}$
or $m_0 = 0$ or
$|V_l| \le \left(1-\frac{\epsi}{4\ell}\right)|V_{l, 0}|$.
\item $\{A_l: l\in [\ell]\}$ have the same size and are pairwise disjoint.  
\item For each $x\in A_l\setminus A_{l,0}$, let $\tau_l^{(x)}$ be the threshold value when $x$ is added to the solution,
$A_{l, (x)}$ be the largest prefix of $A_l$ that do not include $x$,
and for any $j\in [\ell]$ and $j\neq l$,
$A_{j, (x)}$ be the prefix of $A_j$ with $|A_{l, (x)}|$ elements if $j < l$,
or with $|A_{l, (x)}|-1$ elements if $j > l$.
Then, for any $l\in [\ell]$, $x\in A_l\setminus A_{l,0}$,
and $y\in \uni\setminus \left(\bigcup_{j\in [\ell]} A_{j, (x)}\right)$,
it holds that $\marge{y}{A_{l, (x)}} < \frac{\tau_l^{(x)}}{1-\epsi}$.
\item $A_l'\subseteq A_l$, $\marge{A_l'}{A_{l, 0}'} \ge \marge{A_l}{A_{l, 0}}$,
and $\marge{A_l'}{A_{l, 0}'}\ge (1-\epsi)\sum_{x \in A_l\setminus A_{l, 0}}\tau_l^{(x)}$ for all $l\in [\ell]$.
\end{enumerate}
\end{lemma}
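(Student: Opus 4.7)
\textbf{Proof Proposal for Lemma~\ref{lemma:tgone-iteration}.}
The plan is to verify each of the four properties in turn, treating the two control paths of Alg.~\ref{alg:ptgone} separately: the ``add-one-alternately'' branch (Lines~\ref{line:pig-if-start}--\ref{line:pig-if-end}) and the ``add-an-equal-block'' branch (Lines~\ref{line:pig-else-start}--\ref{line:pig-else-end}). The supporting facts I will use are Lemma~\ref{lemma:dist} (\dist produces pairwise disjoint $\mathcal V_l$ with $|\mathcal V_l|\ge |V_l|/(2\ell)$), Lemma~\ref{lemma:prefix-prob} (with probability $\ge 1/2$, the prefix sampling either uses the full budget or filters an $\epsi/2$-fraction of $\mathcal V$), and submodularity via the propositions in Appendix~\ref{apx:prop}. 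The hardest piece will be Property~3 in the block branch, where the logical ordering of elements inside $S_l$ must be reconciled with the three-pass selection rule.

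For Property~2 (equal size, pairwise disjoint), I would argue directly. In the alternating branch, each $A_l$ grows by exactly one element, and every selected $x_j$ is removed from every $V_l$ in Line~\ref{line:tgone-select}; disjointness is preserved because candidates are pulled from $V_l\subseteq \uni\setminus \bigcup_{r}A_r$. In the block branch, Lemma~\ref{lemma:dist} yields pairwise disjoint $\{\mathcal V_l\}$, and each $S_l$ in Line~\ref{line:tgone-subset} has cardinality exactly $i^*$, so the blocks are both disjoint and of equal size. For Property~1 (progress), the alternating branch adds at least one element to some $V_j$ with $|V_{j,0}|<2\ell$, so either some $\tau_j$ is decreased during the in-loop \update (first disjunct holds) or $|V_j|$ drops by one, which exceeds $\epsi/(4\ell)\cdot |V_{j,0}|$ for $\epsi<2$; if neither, $m_0$ was driven to zero. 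In the block branch, let $j^*$ achieve $\min_l i^*_l$. Applying Lemma~\ref{lemma:prefix-prob} to the call $\prefix(f_{A_{j^*}},\mathcal V_{j^*},s,\tau_{j^*},\epsi)$, with probability $\ge 1/2$ either $i^*=s$ (so $m_0$ decreases by $s$, and the $m_0=0$ disjunct may trigger) or an $\epsi/2$-fraction of $\mathcal V_{j^*}$ will be filtered on the next \update; combining with $|\mathcal V_{j^*}|\ge |V_{j^*}|/(2\ell)$ gives the required $(1-\epsi/(4\ell))$ shrinkage for $l=j^*$.

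For Property~3 (threshold bound), the central observation is how $\tau_l^{(x)}$ came to be its value. If $\tau_l^{(x)}=M$, then for any $y\in\uni\setminus(\bigcup_j A_{j,(x)})$, submodularity and the definition of $M$ give $\marge{y}{A_{l,(x)}}\le \marge{y}{\emptyset}\le M<M/(1-\epsi)$. Otherwise, $\tau_l^{(x)}<M$ was obtained inside some call to \update by multiplying a previous value $\tau_{\text{prev}}=\tau_l^{(x)}/(1-\epsi)$ by $(1-\epsi)$, and this decrement happened only because the candidate set was empty at threshold $\tau_{\text{prev}}$. Let $A_{l,t}$ denote $A_l$ at the moment of that decrement; emptiness at $\tau_{\text{prev}}$ means every element of $\uni\setminus\bigcup_j A_{j,t}$ has marginal gain below $\tau_{\text{prev}}$ with respect to $A_{l,t}$. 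Since $A_{l,t}\subseteq A_{l,(x)}$ and $\bigcup_j A_{j,t}\subseteq \bigcup_j A_{j,(x)}$, submodularity promotes this to $\marge{y}{A_{l,(x)}}<\tau_{\text{prev}}=\tau_l^{(x)}/(1-\epsi)$ for any admissible $y$. The subtlety in the block branch is that several elements of $S_l$ share the same $\tau_l^{(x)}=\tau_l$; here I will fix a logical total order on the added elements (round-robin across $l$, TRUE-NONE-FALSE within each $S_l$ to match Line~\ref{line:tgone-subset}) so that $A_{l,(x)}$ is well-defined, and then repeat the argument above using the fact that no \update occurs during the block, so the decrement that produced $\tau_l$ still bounds marginals against the larger prefix.

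For Property~4 (quality of $A_l'$), I will maintain the invariant $A_{l,0}'\subseteq A_{l,0}$ and $A_l'\subseteq A_l$ inductively. In the alternating branch, $A_l'=A_{l,0}'+x$ and $A_l=A_{l,0}+x$, and since the greedy-threshold step guarantees $\marge{x}{A_{l,0}}\ge \tau_l^{(x)}$, submodularity gives $\marge{x}{A_{l,0}'}\ge \marge{x}{A_{l,0}}\ge \tau_l^{(x)}\ge (1-\epsi)\tau_l^{(x)}$, which establishes both inequalities. In the block branch, $S_l'\subseteq S_l$ keeps exactly the elements tagged \textbf{true} or \textbf{none} in $B_l$. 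By the definition of $i_l^*$ in Line~\ref{line:prefix-istar} the prefix $\mathcal V_l[1:i_l^*]$ contains at least $(1-\epsi)i_l^*\ge (1-\epsi)i^*$ \textbf{true}-marked elements, and the three-pass rule in Line~\ref{line:tgone-subset} forces $S_l$ to include all of those \textbf{true} elements first; each such element contributes at least $\tau_l^{(x)}=\tau_l$ to the telescoping sum (marginals in its local prefix), while \textbf{none}-marked elements contribute a nonnegative amount. Summing the local marginals along the logical order, $\marge{A_l'}{A_{l,0}'}\ge (1-\epsi)i^*\tau_l= (1-\epsi)\sum_{x\in A_l\setminus A_{l,0}}\tau_l^{(x)}$, and $\marge{A_l'}{A_{l,0}'}\ge \marge{A_l}{A_{l,0}}$ follows because the excluded \textbf{false} elements have (by their labelling in Line~\ref{line:prefix-B-false}) negative marginal gain in their local prefix, so dropping them does not decrease the telescoping value. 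Combining these four steps yields the full lemma.
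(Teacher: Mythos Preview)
Your proposal is essentially correct and follows the same overall strategy as the paper: the same alternating/block case split, the same use of Lemmas~\ref{lemma:dist} and~\ref{lemma:prefix-prob} for Property~1, and the same true/none/false bookkeeping for Property~4.

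There is one place where your argument is looser than the paper's and could fail as written. In Property~3 you assert the containment $\bigcup_j A_{j,t}\subseteq \bigcup_j A_{j,(x)}$, where $t$ is the moment the decrement to $\tau_l^{(x)}$ occurred. This need not hold when the decrement happens at the in-loop \update on Line~\ref{line:tgone-update-2}: at that point the sets $A_j$ with $j<l$ have already been incremented in the current sequential pass, so $A_{j,t}$ can be strictly larger than the prefix $A_{j,(x)}$ defined in the lemma statement. The paper handles this by splitting Property~3 not on whether $\tau_l^{(x)}=M$, but on whether $\tau_l$ was updated at Line~\ref{line:tgone-update-2}; in that sub-case it explicitly identifies $A_{j,(x)}$ with the state of $A_j$ at the moment of the sequential update and invokes the \update guarantee directly, rather than relying on a global containment. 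You should either adopt that separate treatment, or verify the containment carefully for that branch.

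A second, smaller point: for the ``false'' part of Property~4, the labelling in Line~\ref{line:prefix-B-false} is with respect to the $\mathcal V_l$-prefix $T_{i-1}$, not the $S_l$-prefix. To conclude $\marge{x}{A_{l,(x)}}<0$ you need $\mathcal V_{l,(x)}\subseteq S_{l,(x)}$; the paper gets this by ordering $S_l$ according to the original $\mathcal V_l$ order (so that for a false $x\in S_l$ the three-pass rule forces $S_{l,(x)}=\mathcal V_{l,(x)}$), whereas your proposed TRUE--NONE--FALSE order would require an extra submodularity step. This is easy to fix but worth stating explicitly.
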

\begin{proof}[Proof of Lemma~\ref{lemma:tgone-iteration}]
\textbf{Proof of Property 1.}
At the beginning of the iteration, if there exists $l\in I$ \st $|V_{l, 0}| < 2\ell$,
then either $\tau_{l, 0}$ is decreased to $\tau_l$ and $V_l$ is updated accordingly, 
or an element $x_l$ from $V_{l, 0}$ is added to $A_j$ and $A_j'$
and subsequently removed from $V_{l, 0}$. 
This implies that
\[|V_l| \le |V_{l,0}| -1 < \left(1-\frac{1}{2\ell}\right)|V_{l,0}|.\]
Property 1 holds in this case.

Otherwise, for all $l\in I$, it holds that $|V_{l, 0}| \ge 2\ell$,
and the algorithm proceeds to execute Lines~\ref{line:tgone-dist}-\ref{line:tgone-update-size}.
By Lemma~\ref{lemma:dist}, in Line~\ref{line:tgone-dist}, 
$|\mathcal V_l| \ge \frac{|V_{l, 0}|}{2\ell}$ for each $l\in I$ .
Consider the index $j\in I$ where $i_j^* = i^*$.
Then, $O_j$ consists of the first $i^*$ elements in $\mathcal V_j$
by Line~\ref{line:tgone-subset}.
By Lemma~\ref{lemma:prefix-prob},
with probability greater than $1/2$,
either $i^* = m_0$ or at least an $\frac{\epsi}{2}$-fraction
of elements $x\in \mathcal V_j$ satisfy $\marge{x}{A_j}< \tau_{j,0}$.
Consequently, either $m_0 = 0$ after Line~\ref{line:tgone-update-size},
or, after the \update procedure in Line~\ref{line:tgone-update},
one of the following holds:
$|V_l| \le \left(1-\frac{\epsi}{4\ell}\right)|V_{l, 0}|$,
or $\tau_{j} < \tau_{j, 0}$.
Therefore, Property 1 holds in this case. 

\textbf{Proof of Property 2.}
At any iteration of the while,
either $|I|$ different elements or $|I|$ pairwise disjoint sets with same size $i^*$
are added to solution sets $\{A_l: l\in I\}$.
Therefore, Property 2 holds.

\textbf{Proof of Property 3.}
At any iteration, if $\tau_l$ is not updated on Line~\ref{line:tgone-update-2},
then prior to this iteration, all the elements outside of the solutions
have marginal gain less than $\frac{\tau_l^{(x)}}{1-\epsi}$.
Thus, for any $x \in A_l\setminus A_{l, 0}$, $y\in \uni\setminus \left(\bigcup_{j\in [\ell]} A_{j, 0}\right)$,
it holds that $\marge{y}{A_{l, (x)}} < \frac{\tau_l^{(x)}}{1-\epsi}$
by submodularity. Property 3 holds in this case.

Otherwise, if $\tau_l$ is updated on Line~\ref{line:tgone-update-2},
only one element is added to each solution set during this iteration.
Let $x = A_l\setminus A_{l, 0}$.
For any $j\in [\ell]$ and $j\neq l$,
it holds that $A_{j, (x)} = A_j$ if $j < l$,
or $A_{j, (x)} = A_{j, 0}$ if $j > l$.
Since elements are added to each pair of solutions in sequence within the for loop
in Lines~\ref{line:tgone-for-begin}-\ref{line:tgone-for-end},
by the \update procedure,
for any $y \in \uni \setminus \left(\bigcup_{j\in [\ell]} A_{j, (x)}\right)$,
it holds that $\marge{y}{A_{l, (x)}} < \frac{\tau_l^{(x)}}{1-\epsi}$.
Therefore, Property 3 also holds in this case.

\textbf{Proof of Property 4.}
First, we prove $A_l'\subseteq A_l$ by induction.
At the beginning of the algorithm,
$A_l'$ and $A_l$ are initialized as empty sets.
Clearly, the property holds in the base case.
Then, suppose that $A_{l,0}'\subseteq A_{l,0}$.
There are three possible cases of updating $A_{l,0}'$ and $A_{l,0}$ at any iteration:
1) $A_l'=A_{l,0}'$ and $A_l = A_{l,0}$,
2) $A_l' = A_{l,0}' + x_l$ and $A_l = A_{l,0} + x_l$ in Line~\ref{line:tgone-update-A},
or 3) $A_l' = A_{l,0}' \cup S_l'$ and $A_l = A_{l,0} \cup S_l$ in Line~\ref{line:tgone-update-A-2}.
Clearly, $A_l'\subseteq A_l$ holds in all cases.

Next, we prove the rest of Property 4.

If $A_l'=A_{l,0}'$ and $A_l = A_{l,0}$, 
then $\marge{A_l'}{A_{l, 0}'} = \marge{A_l}{A_{l, 0}}=0$.
Property 4 holds.

If $A_{l,0}'$ and $A_{l,0}$ are updated in Line~\ref{line:tgone-update-A},
by submodularity, $\marge{A_l'}{A_{l, 0}'} = \marge{x_l}{A_{l, 0}'} \ge \marge{x_l}{A_{l, 0}}=\marge{A_l}{A_{l, 0}} \ge \tau_l^{(x_{l})}$.
Therefore, Property 4 also holds.

If $A_{l,0}'$ and $A_{l,0}$ are updated in Line~\ref{line:tgone-update-A-2},
we know that $A_l' = A_{l,0}' \cup S_l'$ and $A_l = A_{l,0} \cup S_l$.
Suppose the elements in $S_l$ and $S_l'$ retain their original order within $\mathcal V_l$.
For each $x\in S_l$, let $S_{l,(x)}$, $\mathcal V_{l, (x)}$ and $A_{l, (x)}$
be the largest prefixes of $S_l$, $\mathcal V_l$ and $A_l$ that do not include $x$, respectively.
Moreover, let $S_{l, (x)}' = S_{l, (x)}\cap S_l'$ and $A_{l, (x)}' = A_{l, (x)}\cap A_l'$.
Say an element $x\in S_l$ \textbf{true} if $B_l[(x)] = \textbf{true}$,
where $B_l[(x)]$ is the $i$-th element in $B_l$ if $x$ is the $i$-th element in $\mathcal V_l$.
Similarly, say an element $x\in S_l$ \textbf{false} if $B_l[(x)] = \textbf{false}$,
and \textbf{none} otherwise.

Following the above definitions, for any \textbf{true} or \textbf{none} element $x\in S_l$,
by Line~\ref{line:tgone-subset}, it holds that $S_{l, (x)} \subseteq \mathcal V_{l, (x)}$.
Then, by Line~\ref{line:prefix-B-true} and submodularity,
\begin{equation*}
\marge{x}{A_{l, (x)}} = \marge{x}{A_{l, 0} \cup S_{l, (x)}}
\ge \marge{x}{A_{l, 0} \cup \mathcal V_{l, (x)}} \ge \left\{
\begin{aligned}
&\tau_l^{(x)}, \text{ if } x \text{ is \textbf{true} element}\\
&0, \text{ if } x \text{ is \textbf{none} element}
\end{aligned}\right.
\end{equation*}
Since \textbf{true} elements are selected at first and $i_j^*\ge i^*$,
there are more than $(1-\epsi)i^*$ \textbf{true} elements in $S_l$.
Therefore,
\begin{align*}
\marge{A_l'}{A_{l, 0}'} &= \sum_{x\in A_l'\setminus A_{l, 0}', x\text{ is \textbf{true} element}} \marge{x}{A_{l, (x)}'}
+\sum_{x\in A_l'\setminus A_{l, 0}', x\text{ is \textbf{none} element}} \marge{x}{A_{l, (x)}'}\\
&\ge \sum_{x\in A_l'\setminus A_{l, 0}', x\text{ is \textbf{true} element}} \marge{x}{A_{l, (x)}}
+\sum_{x\in A_l'\setminus A_{l, 0}', x\text{ is \textbf{none} element}} \marge{x}{A_{l, (x)}}\\
&\ge (1-\epsi) |A_l\setminus A_{l, 0}| \tau_l^{(x)}, \text{for any } x\in A_l\setminus A_{l, 0}\\
&= (1-\epsi)\sum_{x\in A_l\setminus A_{l, 0}} \tau_l^{(x)}.
\end{align*}
The third part of Property 4 holds.

To prove the second part of Property 4, consider any \textbf{false} element $x\in S_l$.
By Line~\ref{line:tgone-subset}, it holds that $\mathcal V_{l, (x)} = S_{l, (x)}$.
Then, by Line~\ref{line:prefix-B-false}
\begin{equation}\label{ineq:tgone-false}
\marge{x}{A_{l, (x)}} = \marge{x}{A_{l, 0} \cup S_{l, (x)}}
= \marge{x}{A_{l, 0} \cup \mathcal V_{l, (x)}} < 0.
\end{equation}
By Line~\ref{line:tgone-subset-2}, all the elements in $S_l\setminus S_l'$ are
\textbf{false} elements.
Then,
\begin{align*}
\ff{A_l} - \ff{A_{l, 0}} &= \sum_{x\in S_l'}\marge{x}{A_{l, (x)}} + \sum_{x\in S_l\setminus S_l'}\marge{x}{A_{l, (x)}}\\
&< \sum_{x\in S_l'}\marge{x}{A_{l, (x)}} \tag{Inequality~\ref{ineq:tgone-false}}\\
&\le \sum_{x\in S_l'}\marge{x}{A_{l, (x)}'} \tag{Submodularity}\\
& = \ff{A_l'} - \ff{A_{l, 0}'}.
\end{align*}
\end{proof}

By Lemma~\ref{lemma:tgone-iteration},
we are ready to prove Lemma~\ref{lemma:ptgone}.
\begin{proof}[Proof of Lemma~\ref{lemma:ptgone}]
\textbf{Proof of Property 1.}
By Property 2 and 3 in Lemma~\ref{lemma:tgone-iteration},
this property holds immediately.

\textbf{Proof of Property 2.}
For any $l\in [\ell]$,
since $O_l\cap A_j  = \emptyset$ for each $j\neq l$,
$O_l\setminus A_l$ is outside of any solution set.
If $|A_l| = m$, by Property 4 of Lemma~\ref{lemma:tgone-iteration},
\begin{align*}
\marge{O_l}{A_l} &\le \sum_{y \in O_l\setminus A_l}\marge{y}{A_l}\\
&\le \sum_{x \in A_l}\tau_l^{(x)}/(1-\epsi)\tag{Property 3 in Lemma~\ref{lemma:tgone-iteration}}\\
& \le \frac{\marge{A_l'}{\emptyset}}{(1-\epsi)^2}.\tag{Property 5 in Lemma~\ref{lemma:tgone-iteration}}
\end{align*}
If $|A_l| < m$, then the threshold value for solution $A_l$ has been updated to be less than $\tau_{\min}$.
Therefore, for any $y\in O_l\setminus A_l$,
it holds that $\marge{y}{A_l} < \frac{\tau_{\min}}{1-\epsi}$.
Then,
\begin{align*}
\marge{O_l}{A_l} \le \sum_{y \in O_l\setminus A_l}\marge{y}{A_l}
\le \frac{m\tau_{\min}}{1-\epsi}.
\end{align*}
Therefore, Property 2 holds by summing the above two inequalities.

\textbf{Proof of Property 3 and 4.}
Let $a_{l, j}$ be the $j$-th element added to $A_l$,
$\tau_l^j$ be the threshold value of $\tau_l$ when $a_{l, j}$ is added to $A_l$,
and $A_{l, j}$ be $A_l$ after $a_{l, j}$ is added to $A_l$.
Let $c_l^* = \max\{c\in [m]:A_{l, c}\subseteq O_l\}$.

In the following, we analyze these properties together under two cases,
similar to the analysis of Alg.~\ref{alg:ptgtwo}.
For the case where $\ell = 2$, 
let $O_{1} = S\setminus A_2$, and $O_2 = S\setminus A_1$,
unifying the notations used in Property 3 and 4.
Note that, the only difference between the two analyses is that,
a small portion (no more than $\epsi$ fraction) of elements in the solution returned by Alg.~\ref{alg:ptgone}
do not have marginal gain greater than the threshold value.

\textbf{Case 1: $c_{l_1}^*\le c_{l_2}^*$; left half part in Fig.~\ref{fig:gdtwo}.}

First, we bound $\marge{O_{l_1}}{A_{l_2}}$.
Consider elements in $A_{l_1, c_{l_1}^*} \subseteq O_{l_1}$.
Let $A_{l_1, c_{l_1}^*} = \{o_1, \ldots, o_{c_{l_1}^*}\}$.
For each $1\le j \le c_{l_1}^*$, 
since $o_j$ is added to $A_{l_1}$ with threshold value $\tau_{l_1}^{j}$
and the threshold value starts from the maximum marginal gain $M$,
clearly, $o_j$ has been filtered out with threshold value $\tau_{l_1}^{j}/(1-\epsi)$.
Then, by submodularity,
\begin{equation}\label{inq:ptgone-case1-1}
\marge{A_{l_1, c_{l_1}^*} }{A_{l_2}} \le \marge{A_{l_1, c_{l_1}^*} }{\emptyset}
 = \sum_{j=1}^{c_{l_1}^*}\marge{o_j}{A_{l_1, j-1}}
 \le \sum_{j=1}^{c_{l_1}^*} \tau_{l_1}^{j}/(1-\epsi).
\end{equation}

Next, consider the elements in $O_{l_1}\setminus A_{l_1, c_{l_1}^*}$.
Order the elements in $O_{l_1}\setminus A_{l_1, c_{l_1}^*}$ as $\{o_1, o_2, \ldots\}$ such that $o_j \not \in A_{l_1, c_{l_1}^*+j}$.
(Refer to the gray block with a dotted edge in the top left corner of Fig.~\ref{fig:gdtwo} for $O_{l_1}$.
If $c_{l_1}^*+j$ is greater than $|A_{l_1}|$,
$A_{l_1, c_{l_1}^*+j}$ refers to $A_{l_1}$.)
Note that, since $A_{l_1, c_{l_1}^*} \subseteq O_{l_1}$,
it follows that $|O_{l_1}\setminus A_{l_1, c_{l_1}^*}| \le m - c_{l_1}^*$.

When $1 \le j \le |A_{l_2}| - c_{l_1}^*$,
since each $o_j$ is either added to $A_{l_1}$ or not in any solution set
and $\tau_{l_2}$ is initialized with the maximum marginal gain $M$,
$o_j$ is not considered to be added to $A_{l_2}$ with threshold value $\tau_{l_2}^{c_{l_1}^* + j}/(1-\epsi)$
by Property 3 of Lemma~\ref{lemma:tgone-iteration}.
Therefore, it holds that 
\begin{equation}\label{inq:ptgone-case1-2}
\marge{o_j}{A_{l_2, c_{l_1}^*+j-1}} < \frac{\tau_{l_2}^{c_{l_1}^* + j}}{1-\epsi} , \forall 1\le j\le |A_{l_2}|-c_{l_1}^*.
\end{equation}

When $|A_{l_2}| < m$ and $|A_{l_2}|-c_{l_1}^* < j\le m-c_{l_1}^*$,
the algorithm ends with $\tau_{l_2} < \tau_{\min}$ and
$o_j$ is never considered to be added to $A_{l_2}$.
Thus, it holds that
\begin{equation}\label{inq:ptgone-case1-3}
\marge{o_j}{A_{l_2}} < \frac{\tau_{\min}}{1-\epsi}, 
\forall |A_{l_2}|-c_{l_1}^* < j \le m-c_{l_1}^*.
\end{equation}

Then,
\begin{align*}
\marge{O_{l_1}}{A_{l_2}} &\le \marge{A_{l_1, c_{l_1}^*}}{A_{l_2}}  + \sum_{o_j \in O_{l_1}\setminus A_{l_1, c_{l_1}^*}}\marge{o_j}{A_{l_2}} \tag{Proposition~\ref{prop:sum-marge}}\\
&\le \marge{A_{l_1, c_{l_1}^*}}{\emptyset} + \sum_{j = 1}^{|A_{l_2}|-c_{l_1}^*}\marge{o_j}{A_{l_2, , c_{l_1}^*+j-1}} + \sum_{j=|A_{l_2}|-c_{l_1}^*+1}^{m-c_{l_1}^*} \marge{o_j}{A_{l_2}} \tag{submodularity}\\
&\le \sum_{j=1}^{c_{l_1}^*} \frac{\tau_{l_1}^{j}}{1-\epsi} + \sum_{j=c_{l_1}^*+1}^{|A_{l_2}|} \frac{\tau_{l_2}^{j}}{1-\epsi} + \frac{m \cdot \tau_{\min}}{1-\epsi} \numberthis \label{inq:ptgone-case1-4}
\end{align*}
where the last inequality follows from 
Inequalities~\eqref{inq:ptgone-case1-1}-\eqref{inq:ptgone-case1-3}.

Similarly, we bound $\marge{O_{l_2}}{A_{l_1}}$ below.
Consider elements in $A_{l_2, c_{l_1}^*} \subseteq O_{l_2}$.
Let $A_{l_2, c_{l_1}^*} = \{o_1, \ldots, o_{c_{l_1}^*}\}$.
For each $1\le j \le c_{l_1}^*$, 
since $o_j$ is added to $A_{l_2}$ with threshold value $\tau_{l_2}^{j}$
and the threshold value starts from the maximum marginal gain $M$,
clearly, $o_j$ has been filtered out with threshold value $\tau_{l_2}^{j}/(1-\epsi)$.
Then, by submodularity,
\begin{equation}\label{inq:ptgone-case1-5}
\marge{A_{l_2, c_{l_1}^*} }{A_{l_1}} \le \marge{A_{l_2, c_{l_1}^*} }{\emptyset}
 = \sum_{j=1}^{c_{l_1}^*}\marge{o_j}{A_{l_2, j-1}}
 \le \sum_{j=1}^{c_{l_1}^*} \tau_{l_2}^{j}/(1-\epsi).
\end{equation}

Next, consider the elements in $O_{l_2}\setminus A_{l_2, c_{l_1}^*}$.
Order the elements in $O_{l_2}\setminus A_{l_2, c_{l_1}^*}$ as $\{o_1, o_2, \ldots\}$ such that $o_j \not \in A_{l_2, c_{l_1}^*+j-1}$.
(See the gray block with a dotted edge in the bottom left corner of Fig.~\ref{fig:gdtwo} for $O_{l_2}$.
If $c_{l_1}^*+j-1$ is greater than $|A_{l_2}|$,
$A_{l_2, c_{l_1}^*+j-1}$ refers to $A_{l_2}$.)
Note that, since $A_{l_2, c_{l_1}^*} \subseteq O_{l_2}$,
it follows that $|O_{l_2}\setminus A_{l_2, c_{l_1}^*}| \le m - c_{l_1}^*$.

When $1 \le j \le |A_{l_1}|-c_{l_1}^*$,
since each $o_j$ is either added to $A_{l_2}$ or not in any solution set,
and $\tau_{l_1}$ is initialized with the maximum marginal gain $M$,
$o_j$ is not considered to be added to $A_{l_1}$ with threshold value $\tau_{l_1}^{c_{l_1}^* + j}/(1-\epsi)$
by Property 3 of Lemma~\ref{lemma:tgone-iteration}.
Therefore, it holds that 
\begin{equation}\label{inq:ptgone-case1-6}
\marge{o_j}{A_{l_1, c_{l_1}^*+j-1}} < \frac{\tau_{l_1}^{c_{l_1}^* + j}}{1-\epsi} , \forall 1\le j\le |A_{l_2}|-c_{l_1}^*.
\end{equation}

When $|A_{l_1}| < m$ and $|A_{l_1}|-c_{l_1}^* < j\le m-c_{l_1}^*$,
this iteration ends with $\tau_{l_1} < \tau_{\min}$
and $o_j$ is never considered to be added to $A_{l_1}$.
Thus, it holds that
\begin{equation}\label{inq:ptgone-case1-7}
\marge{o_j}{A_{l_1}} < \frac{\tau_{\min}}{1-\epsi}, \forall |A_{l_1}|-c_{l_1}^* < j \le m-c_{l_1}^*.
\end{equation}

Then,
\begin{align*}
\marge{O_{l_2}}{A_{l_1}} &\le \marge{A_{l_2, c_{l_1}^*}}{A_{l_1}}  + \sum_{o_j \in O_{l_2}\setminus A_{l_2, c_{l_1}^*}}\marge{o_j}{A_{l_1}} \tag{Proposition~\ref{prop:sum-marge}}\\
&\le \marge{A_{l_2, c_{l_1}^*}}{\emptyset} + \sum_{j = 1}^{|A_{l_1}|-c_{l_1}^*}\marge{o_j}{A_{l_1, c_{l_1}^*+j-1}} + \sum_{j=|A_{l_1}|-c_{l_1}^*+1}^{m-c_{l_1}^*} \marge{o_j}{A_{l_1}} \tag{submodularity}\\
&\le \sum_{j=1}^{c_{l_1}^*} \frac{\tau_{l_2}^{j}}{1-\epsi} + \sum_{j=c_{l_1}^*+1}^{|A_{l_1}|} \frac{\tau_{l_1}^{j}}{1-\epsi} + \frac{m\cdot \tau_{\min}}{1-\epsi}  \numberthis \label{inq:ptgone-case1-8}
\end{align*}
where the last inequality follows from Inequalities~\ref{inq:ptgone-case1-5}-\ref{inq:ptgone-case1-7}.

By Inequalities~\eqref{inq:ptgone-case1-4} and~\eqref{inq:ptgone-case1-8},
\begin{equation}\label{inq:ptgone-case1-final}
\marge{O_{l_1}}{A_{l_2}}+\marge{O_{l_2}}{A_{l_1}}
\le \sum_{j=1}^{|A_{l_1}|} \frac{\tau_{l_1}^{j}}{1-\epsi} + \sum_{j=1}^{|A_{l_2}|} \frac{\tau_{l_2}^{j}}{1-\epsi} + \frac{2m\cdot \tau_{\min}}{1-\epsi}
\end{equation}

\textbf{Case 2: $c_{l_1}^* > c_{l_2}^*$; right half part in Fig.~\ref{fig:gdtwo}.}

First, we bound $\marge{O_{l_1}}{A_{l_2}}$.
Consider elements in $A_{l_1, c_{l_2}^*+1} \subseteq O_{l_1}$.
Let $A_{l_1, c_{l_2}^*+1} = \{o_1, \ldots, o_{c_{l_2}^*+1}\}$.
For each $1\le j \le c_{l_2}^*+1$, 
since $o_j$ is added to $A_{l_1}$ with threshold value $\tau_{l_1}^{j}$
and the threshold value starts from the maximum marginal gain $M$,
clearly, $o_j$ has been filtered out with threshold value $\tau_{l_1}^{j}/(1-\epsi)$.
Then, by submodularity,
\begin{equation}\label{inq:ptgone-case2-1}
\marge{A_{l_1, c_{l_2}^*+1} }{A_{l_2}} \le \marge{A_{l_1, c_{l_2}^*+1} }{\emptyset}
 = \sum_{j=1}^{c_{l_2}^*+1}\marge{o_j}{A_{l_1, j-1}}
 \le \sum_{j=1}^{c_{l_2}^*+1} \tau_{l_1}^{j}/(1-\epsi).
\end{equation}

Next, consider the elements in $O_{l_1}\setminus A_{l_1, c_{l_2}^*+1}$.
Order the elements in $O_{l_1}\setminus A_{l_1, c_{l_2}^*+1}$ as $\{o_1, o_2, \ldots\}$ such that $o_j \not \in A_{l_1, c_{l_1}^*+j}$.
(Refer to the gray block with a dotted edge in the top right corner of Fig.~\ref{fig:gdtwo} for $O_{l_1}$.
If $c_{l_1}^*+j$ is greater than $|A_{l_1}|$,
$A_{l_1, c_{l_1}^*+j}$ refers to $A_{l_1}$.)
Note that, since $A_{l_1, c_{l_2}^*+1} \subseteq O_{l_1}$,
it follows that $|O_{l_1}\setminus A_{l_1, c_{l_2}^*+1}| \le m - c_{l_2}^*-1$.

When $1 \le j \le |A_{l_2}| - c_{l_2}^* - 1$,
since each $o_j$ is either added to $A_{l_1}$ or not in any solution set
and $\tau_{l_2}$ is initialized with the maximum marginal gain $M$,
$o_j$ is not considered to be added to $A_{l_2}$ with threshold value $\tau_{l_2}^{c_{l_2}^* + j}/(1-\epsi)$.
Therefore, it holds that 
\begin{equation}\label{inq:ptgone-case2-2}
\marge{o_j}{A_{l_2, c_{l_2}^*+j-1}} < \frac{\tau_{l_2}^{c_{l_2}^* + j}}{1-\epsi}, \forall 1\le j\le |A_{l_2}\setminus G_{i-1}| - c_{l_2}^* - 1.
\end{equation}

When $|A_{l_2}| < m$ and $|A_{l_2}|- c_{l_2}^* - 1 < j\le m- c_{l_2}^* - 1$,
this iteration ends with $\tau_{l_2} < \tau_{\min}$ and
$o_j$ is never considered to be added to $A_{l_2}$.
Thus, it holds that
\begin{equation}\label{inq:ptgone-case2-3}
\marge{o_j}{A_{l_2}} < \frac{\tau_{\min}}{1-\epsi}, 
\forall |A_{l_2}|- c_{l_2}^* - 1 < j \le m- c_{l_2}^* - 1.
\end{equation}

Then,
\begin{align*}
\marge{O_{l_1}}{A_{l_2}} &\le \marge{A_{l_1, c_{l_2}^*}}{A_{l_2}}  + \sum_{o_j \in O_{l_1}\setminus A_{l_1, c_{l_2}^*+1}}\marge{o_j}{A_{l_2}} \tag{Proposition~\ref{prop:sum-marge}}\\
&\le \marge{A_{l_1, c_{l_2}^*}}{\emptyset} + \sum_{j = 1}^{|A_{l_2}|- c_{l_2}^* - 1}\marge{o_j}{A_{l_2, c_{l_2}^*+j-1}} + \sum_{j=|A_{l_2}|- c_{l_2}^*}^{m- c_{l_2}^* - 1} \marge{o_j}{A_{l_2}} \tag{submodularity}\\
&\le \sum_{j=1}^{c_{l_2}^*+1} \tau_{l_1}^{j}/(1-\epsi)
+ \sum_{j = c_{l_2}^*+1}^{|A_{l_2}|} \tau_{l_2}^{j}/(1-\epsi) + \frac{m \cdot \tau_{\min}}{1-\epsi}
 \numberthis \label{inq:ptgone-case2-4}
\end{align*}
where the last inequality follows from 
Inequalities~\eqref{inq:ptgone-case2-1}-\eqref{inq:ptgone-case2-3}.

Similarly, we bound $\marge{O_{l_2}}{A_{l_1}}$ below.
Consider elements in $A_{l_1, c_{l_2}^*} \subseteq O_{l_2}$.
Let $A_{l_2, c_{l_2}^*} = \{o_1, \ldots, o_{c_{l_2}^*}\}$.
For each $1\le j \le c_{l_2}^*$, 
since $o_j$ is added to $A_{l_2}$ with threshold value $\tau_{l_2}^{j}$
and the threshold value starts from the maximum marginal gain $M$,
clearly, $o_j$ has been filtered out with threshold value $\tau_{l_2}^{j}/(1-\epsi)$.
Then, by submodularity,
\begin{equation}\label{inq:ptgone-case2-5}
\marge{A_{l_2, c_{l_2}^*} }{A_{l_1}} \le \marge{A_{l_2, c_{l_2}^*} }{\emptyset}
 = \sum_{j=1}^{c_{l_2}^*}\marge{o_j}{A_{l_2, j-1}}
 \le \sum_{j=1}^{c_{l_2}^*} \tau_{l_2}^{j}/(1-\epsi).
\end{equation}

Next, consider the elements in $O_{l_2}\setminus A_{l_2, c_{l_2}^*}$.
Order these elements as $\{o_1, o_2, \ldots\}$ such that $o_j \not \in A_{l_2, c_{l_2}^*+j}$.
(See the gray block with a dotted edge in the bottom right corner of Fig.~\ref{fig:gdtwo} for $O_{l_2}$.
If $c_{l_2}^*+j$ is greater than the number of elements added to $A_{l_2}$,
$A_{l_2, c_{l_2}^*+j}$ refers to $A_{l_2}$.)
Note that, since $A_{l_2, c_{l_2}^*} \subseteq O_{l_2}$,
it follows that $|O_{l_2}\setminus A_{l_2, c_{l_2}^*}| \le m - c_{l_2}^*$.

Furthermore, for the case where $\ell  = 2$, as considered in Property 4,
we have $O_1 = S\setminus A_2$ and $O_2 = S\setminus A_1$ for a given $S\subseteq \uni$
where $|S| \le m$.
Since $c_{l_1}^* > c_{l_2}^* \ge 0$,
it follows that $c_{l_1}^*\ge 1$, which implies $|O_2| = |S\setminus A_1|\le m-1$.
In this case, it holds that $|O_{l_2}\setminus A_{l_2, c_{l_2}^*}| \le m - c_{l_2}^*-1$.

When $1 \le j \le |A_{l_1}|- c_{l_2}^* - 1$, 
since each $o_j$ is either added to $A_{l_2}$ or not in any solution set by Claim~\ref{claim:par-A}
and $\tau_{l_1}$ is initialized with the maximum marginal gain $M$,
$o_j$ is not considered to be added to $A_{l_1}$ with threshold value $\tau_{l_1}^{c_{l_2}^* + j+1}/(1-\epsi)$.
Therefore, it holds that 
\begin{equation}\label{inq:ptgone-case2-6}
\marge{o_j}{A_{l_1, c_{l_2}^*+j}} < \frac{\tau_{l_1}^{c_{l_2}^* + j+1}}{1-\epsi}, \forall 1\le j\le |A_{l_2}|- c_{l_2}^* - 1.
\end{equation}

If $|A_{l_1}| = m$,
consider the last element $o_{m-c_{l_2}^*}$ in $O_{l_2}\setminus A_{l_2, c_{l_2}^*}$.
Since $o_{m-c_{l_2}^*} \not\in A_{l_2}$ and $o_{m-c_{l_2}^*} \not\in A_{l_1}$, $o_{m-c_{l_2}^*}$ is not considered to be added to 
$A_{l_1}$ with threshold value $\tau_{l_1}^j/(1-\epsi)$ for any $j \in [m]$.
Then,
\begin{equation}\label{inq:ptgone-case2-7}
\marge{o_{m-c_{l_2}^*}}{A_{l_1}} < \frac{\sum_{j=1}^m \tau_{l_1}^j}{(1-\epsi)m}.
\end{equation}
Else, $|A_{l_1}| < m$ and 
this iteration ends with $\tau_{l_1} < \frac{\epsi M}{k}$.
For any $|A_{l_1}|- c_{l_2}^* - 1 < j\le m- c_{l_2}^*$,
$o_j$ is never considered to be added to $A_{l_1}$.
Thus, it holds that
\begin{equation}\label{inq:ptgone-case2-8}
\marge{o_j}{A_{l_1}} < \frac{\tau_{\min}}{1-\epsi}, 
\forall |A_{l_1}|- c_{l_2}^* - 1 < j \le m- c_{l_2}^*.
\end{equation}

Then,
\begin{align*}
&\marge{O_{l_2}}{A_{l_1}} \le \marge{A_{l_2, c_{l_2}^*}}{A_{l_1}}  + \sum_{o_j \in O_{l_2}\setminus A_{l_2, c_{l_2}^*}}\marge{o_j}{A_{l_1}} \tag{Proposition~\ref{prop:sum-marge}}\\
&\le \left\{
\begin{aligned}
&\marge{A_{l_2, c_{l_2}^*}}{\emptyset} + \sum_{j = 1}^{|A_{l_1}\setminus G_{i-1}|- c_{l_2}^* - 1}\marge{o_j}{A_{l_1, c_{l_2}^*+j-1}} + \sum_{j=|A_{l_1}\setminus G_{i-1}|- c_{l_2}^*}^{m-c_{l_2}^*} \marge{o_j}{A_{l_1}}, &&\text{ if } |O_{l_2}| = m\\
&\marge{A_{l_2, c_{l_2}^*}}{\emptyset} + \sum_{j = 1}^{|A_{l_1}\setminus G_{i-1}|- c_{l_2}^* - 1}\marge{o_j}{A_{l_1, c_{l_2}^*+j-1}} + \sum_{j=|A_{l_1}\setminus G_{i-1}|- c_{l_2}^*}^{m-c_{l_2}^*-1} \marge{o_j}{A_{l_1}}, &&\text{otherwise}
\end{aligned}
\right. \tag{submodularity}\\
&\le\left\{
\begin{aligned}
	&\sum_{j=1}^{c_{l_2}^*} \frac{\tau_{l_2}^j}{1-\epsi} + \sum_{j=c_{l_2}^*+2}^{|A_{l_2}|} \left(1+\frac{1}{m}\right) \frac{\tau_{l_1}^j}{1-\epsi} + \frac{m\cdot \tau_{\min}}{1-\epsi}, &&\text{ if } |O_{l_2}| = m\\
	&\sum_{j=1}^{c_{l_2}^*} \frac{\tau_{l_2}^j}{1-\epsi} + \sum_{j=c_{l_2}^*+2}^{|A_{l_2}|} \frac{\tau_{l_1}^j}{1-\epsi} + \frac{m\cdot \tau_{\min}}{1-\epsi}, &&\text{otherwise}
\end{aligned}
\right. \numberthis \label{inq:ptgone-case2-9}
\end{align*}
where the last inequality follows from Inequalities~\eqref{inq:ptgone-case2-5}-\eqref{inq:ptgone-case2-8}.

By Inequalities~\eqref{inq:ptgone-case2-4} and~\eqref{inq:ptgone-case2-9},
\begin{equation}\label{inq:ptgone-case2-final}
\marge{O_{l_1}}{A_{l_2}}+\marge{O_{l_2}}{A_{l_1}}
\le \left\{
\begin{aligned}
	& \left(1+\frac{1}{m}\right) \frac{1}{1-\epsi}\left(\sum_{j=1}^{|A_{l_1}|} \tau_{l_1}^{j} + \sum_{j=1}^{|A_{l_2}|} \tau_{l_2}^{j}\right) + \frac{2m\cdot \tau_{\min}}{1-\epsi}, &&\text{ if } |O_{l_2}| = m \\
	&  \frac{1}{1-\epsi}\left(\sum_{j=1}^{|A_{l_1}|} \tau_{l_1}^{j} + \sum_{j=1}^{|A_{l_2}|} \tau_{l_2}^{j}\right) + \frac{2m\cdot \tau_{\min}}{1-\epsi}, &&\text{ otherwise }
\end{aligned}
\right.
\end{equation}

Overall, in both cases, if $|O_{l_2}| = m$,
\begin{align*}
\marge{O_{l_1}}{A_{l_2}}+\marge{O_{l_2}}{A_{l_1}}
&\le \left(1+\frac{1}{m}\right) \frac{1}{1-\epsi}\left(\sum_{j=1}^{|A_{l_1}|} \tau_{l_1}^{j} + \sum_{j=1}^{|A_{l_2}|} \tau_{l_2}^{j}\right) + \frac{2m\cdot \tau_{\min}}{1-\epsi} \tag{Inequalities~\eqref{inq:ptgone-case1-final} and~\eqref{inq:ptgone-case2-final}}\\
&\le \left(1+\frac{1}{m}\right)\frac{1}{(1-\epsi)^2}\left(\marge{A_{l_1}'}{\emptyset}+\marge{A_{l_2}'}{\emptyset}\right) + \frac{2m\cdot \tau_{\min}}{1-\epsi} \tag{Property 4 of Lemma~\ref{lemma:tgone-iteration}}
\end{align*}
Otherwise, if $|O_{l_2}| < m$,
\begin{align*}
\marge{O_{l_1}}{A_{l_2}}+\marge{O_{l_2}}{A_{l_1}}
&\le \frac{1}{1-\epsi}\left(\sum_{j=1}^{|A_{l_1}|} \tau_{l_1}^{j} + \sum_{j=1}^{|A_{l_2}|} \tau_{l_2}^{j}\right) + \frac{2m\cdot \tau_{\min}}{1-\epsi} \tag{Inequalities~\eqref{inq:ptgone-case1-final} and~\eqref{inq:ptgone-case2-final}}\\
&\le \frac{1}{(1-\epsi)^2}\left(\marge{A_{l_1}'}{\emptyset}+\marge{A_{l_2}'}{\emptyset}\right) + \frac{2m\cdot \tau_{\min}}{1-\epsi} \tag{Property 4 of Lemma~\ref{lemma:tgone-iteration}}
\end{align*}
Property (3) and (4) hold.

\textbf{Proof of Adaptivity and Query Complexity.}
Note that, at the beginning of every iteration,
for any $j\in I$, $V_j$ contains all the elements outside of all solutions that has marginal gain greater than $\tau_j$ with respect to solution $A_j$.
Say an iteration \textit{successful} if either
1) algorithm terminates after this iteration because of $m_0=0$,
2) all the elements in $V_j$ can be filtered out at the end of this iteration
and the value of $\tau_j$ decreases,
or 3) the size of $V_j$ decreases by a factor of $1-\frac{\epsi}{4\ell}$.
Then, by Property 1 of Lemma~\ref{lemma:tgone-iteration},
with a probability of at least $1/2$,
the iteration is successful.
Furthermore, if $\tau_j$ is less than $\tau_{\min}$,
$j$ will be removed from $I$ and 
solutions $A_j$ and $A_j'$ won't be updated anymore.

For each $j \in [\ell]$,
there are at most $\log_{1-\epsi}\left(\frac{\tau_{\min}}{M}\right) \le \epsi^{-1}\log\left(\frac{M}{\tau_{\min}}\right)$ possible threshold values.
And, for each threshold value, with at most 
$\log_{1-\frac{\epsi}{4\ell}}\left(\frac{1}{n}\right) \le 4\ell\epsi^{-1}\log(n)$
successful iterations regarding solution $A_j$,
the threshold value $\tau_j$ will decrease
or the algorithm terminates because of $m_0=0$.
Overall, with at most $4\ell^2\epsi^{-2}\log(n)\log\left(\frac{M}{\tau_{\min}}\right)$
successful iterations,
the algorithm terminates because of $m_0=0$ or $I=\emptyset$.

Next, we prove that, after $N=4\left(\log(n)+ 4\ell^2\epsi^{-2}\log(n)\log\left(\frac{M}{\tau_{\min}}\right)\right)$ iterations,
with a probability of $1-\frac{1}{n}$,
there exists at least $4\ell^2\epsi^{-2}\log(n)\log\left(\frac{M}{\tau_{\min}}\right)$
successful iterations,
or equivalently, the algorithm terminates.
Let $X$ be the number of successful iterations.
Then, $X$ can be regarded as a sum of $N$ dependent Bernoulli trails,
where the success probability is larger than $1/2$.
Let $Y$ be a sum of $N$ independent Bernoulli trials,
where the success probability is equal to $1/2$.
Then, the probability that the algorithm terminates with at most $N$ iterations can be bounded as follows,
\begin{align*}
\prob{\#\text{iterations} > N} &\le \prob{X \le 4\ell^2\epsi^{-2}\log(n)\log\left(\frac{M}{\tau_{\min}}\right)} \\
& \overset{(a)}{\le} \prob{Y \le 4\ell^2\epsi^{-2}\log(n)\log\left(\frac{M}{\tau_{\min}}\right)}\tag{Lemma~\ref{lemma:indep}}\\
&\le e^{- \frac{N}{4}\left(1-\frac{8\ell^2\epsi^{-2}\log(n)\log\left(\frac{M}{\tau_{\min}}\right)}{N}\right)^2} \tag{Lemma~\ref{lemma:chernoff}}\\
&= e^{-\frac{\left(4\log(n)+ 8\ell^2\epsi^{-2}\log(n)\log\left(\frac{M}{\tau_{\min}}\right)\right)^2}{16\left(\log(n)+ 4\ell^2\epsi^{-2}\log(n)\log\left(\frac{M}{\tau_{\min}}\right)\right)}} \le \frac{1}{n}.
\end{align*}
Therefore, with a probability of $1-\frac{1}{n}$,
the algorithm terminates with $\oh{\ell^2\epsi^{-2}\log(n)\log\left(\frac{M}{\tau_{\min}}\right)}$ iterations of the while loop.

In Alg.~\ref{alg:ptgone}, oracle queries occur during calls
to \update and \prefix 
on Line~\ref{line:tgone-update-2},~\ref{line:tgone-prefix} and~\ref{line:tgone-update}.
The \prefix algorithm,
with input $(f, \mathcal V, s, \tau, \epsi)$,
operates with $1$ adaptive rounds
and at most $|\mathcal V|$ queries.
The \update algorithm,
with input $(f, V_0, \tau_0, \epsi)$, 
outputs $(V, \tau)$
with $1+\log_{1-\epsi}\left(\frac{\tau}{\tau_0}\right)$ adaptive rounds
and at most $|V| + n\log_{1-\epsi}\left(\frac{\tau}{\tau_0}\right)$ queries.
Here, $\log_{1-\epsi}\left(\frac{\tau}{\tau_0}\right)$ equals the number of iterations in the while loop within \update.
Notably, every iteration is successful,
as the threshold value is updated.
Consequently, we can regard an iteration of the while loop in \update
as a separate iteration of the while loop in Alg.~\ref{alg:ptgone},
where such iteration only update one threshold value $\tau_j$ and its corresponding candidate set $V_j$.
So, each redefined iteration has no more than $2$ adaptive rounds,
and then the adaptivity of the algorithm should be no more than 
the number of successful iterations, which is
$\oh{\ell^2\epsi^{-2}\log(n)\log\left(\frac{M}{\tau_{\min}}\right)}$.
Since there are at most $\ell n$ queries at each adaptive rounds,
the query complexity is bounded by $\oh{\ell^3\epsi^{-2}n\log(n)\log\left(\frac{M}{\tau_{\min}}\right)}$.

\end{proof}

\subsection{\rev Analysis of Guarantees achieved by \ptgoneshort (Theorem~\ref{thm:ptgone}, Section~\ref{sec:ptg})}
\label{apx:ptgone-guarantee}
\rev
In this section, we provide the analysis of the parallel $(1/4-\epsi)$-approximation algorithm.
\color{black}
\thmptgone*
\begin{proof}[Proof of Theorem~\ref{thm:ptgone}]
The adaptivity and query complextiy are quite straightforward.
In the following, we will analyze the approximation ratio.

Let $S = O$ in Lemma~\ref{lemma:ptgone},
it holds that
\begin{align}
	& \ff{A_l'} \ge \ff{A_l}, \forall l = 1,2 \label{inq:ptg-1}\\
	& A_1 \cap A_2 = \emptyset \label{inq:ptg-2}\\
	& \marge{O}{A_1} + \marge{O}{A_2} \le \frac{1}{(1-\epsi)^2}\left(\ff{A_1'} + \ff{A_2'} \right) + \frac{2\epsi M}{1-\epsi}\label{inq:ptg-3}
\end{align}
Then,
\begin{align*}
	\ff{O} &\le \ff{O\cup A_1} + \ff{O\cup A_2} \tag{Submodularity, Nonnegativity, Inequality~\eqref{inq:ptg-2}} \\
	&\le \ff{A_1} + \ff{A_2} + \frac{1}{(1-\epsi)^2}\left(\ff{A_1'} + \ff{A_2'} \right) + \frac{2\epsi M}{1-\epsi} \tag{Inequality~\eqref{inq:ptg-3}}\\
	&\le 2\left(1+\frac{1}{(1-\epsi)^2}\right)\ff{G} + \frac{2\epsi}{1-\epsi}\ff{O} \tag{Inequality~\eqref{inq:ptg-1} and $G = \argmax\{\ff{A_1'}, \ff{A_2'}\}$}\\
	\Rightarrow \ff{G} &\ge \frac{(1-3\epsi)(1-\epsi)}{2\left((1-\epsi)^2 + 1+\frac{1}{k}\right)}\ff{O}\ge \left(\frac{1}{4}-\epsi\right)\ff{O} 
\end{align*}
\end{proof}

\subsection{\rev Pseudocode and Analysis of \ptgtwoshort (Theorem~\ref{thm:ptgtwo}, Section~\ref{sec:ptg})}
\label{apx:ptgtwo}
\begin{algorithm}[ht]
\Fn{\ptgtwo($f, k, \epsi$)}{
	\KwIn{evaluation oracle $f:2^{\uni} \to \reals$, 
        constraint $k$, constant $\ell$, error $\epsi$}
	\Init{$G\gets \emptyset, \epsi' \gets \frac{\epsi}{2}, m\gets \left\lfloor \frac{k}{\ell} \right\rfloor, M\gets \max_{x\in\uni}\ff{\{x\}}, \tau_{\min}\gets \frac{\epsi'M}{k}$}
	\For{$i\gets 1$ to $\ell$}{
		$\{A_l': l\in [\ell]\} \gets \ptgone(f_{G}, m, \ell, \tau_{\min}, \epsi')$\;
		$G\gets$ a random set in $\{G\cup A_l': l\in [\ell]\}$\;
	}
	\Return{$G$}
	}
\caption{A randomized $(1/e-\epsi)$-approximation algorithm with $\oh{\ell^{3}\epsi^{-2}\log(n)\log(k)}$ adaptivity and $\oh{\ell^4\epsi^{-2}n\log(n)\log(k)}$ query complexity}\label{alg:ptg}
\label{alg:ptgtwo}
\end{algorithm}
\rev
This subsection presents the pseudocode and theoretical analysis of our parallel $(1/e-\epsi)$-approximation algorithm.
\color{black}

First, we provide the following lemma which provides a lower bound
on the gains achieved after every iteration in Alg.~\ref{alg:ptgtwo}.
\begin{lemma}\label{lemma:ptgtwo-recur}
For any iteration $i$ of the outer for loop in Alg.~\ref{alg:ptgtwo},
it holds that
\begin{align*}
\text{1) } & \ex{\ff{O\cup G_i}}\ge \left(1-\frac{1}{\ell}\right) \ex{\ff{O\cup G_{i-1}}}\\
\text{2) } & \ex{\ff{G_i} - \ff{G_{i-1}}}
\ge\frac{1}{1+\frac{\ell}{(1-\epsi')^2}}\left(1-\frac{1}{m+1}\right)\left(\left(1-\frac{1}{\ell}\right)  \ex{\ff{O\cup G_{i-1}}} - \ex{\ff{G_{i-1}}} - \frac{\epsi'}{1-\epsi'}\ff{O}\right).
\end{align*}
\end{lemma}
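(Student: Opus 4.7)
The argument parallels Lemma~\ref{lemma:tg-recur}, replacing the per-element threshold-greedy guarantees with the block-level guarantees of Lemma~\ref{lemma:ptgone}. In both parts I first condition on $G_{i-1}$ and on the output of the $i$-th call to \ptgoneshort, derive a deterministic inequality, then take expectations over the uniform choice $G_i \in \{G_{i-1}\cup A_l' : l\in[\ell]\}$ and the internal randomness of \ptgoneshort, and finally unfix $G_{i-1}$.

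\textbf{Part 1.} By Property~1 of Lemma~\ref{lemma:ptgone}, the sets $\{A_l'\}_{l\in[\ell]}$ are pairwise disjoint subsets of $\uni\setminus G_{i-1}$. Apply Proposition~\ref{prop:sum-marge}(2) to the non-negative submodular function $h(S) = \ff{S\cup O\cup G_{i-1}}$ with the disjoint collection $\{A_l'\}$, obtaining $\sum_l \ff{O\cup G_{i-1}\cup A_l'} \ge (\ell-1)\ff{O\cup G_{i-1}}$. Averaging over the uniform draw of $G_i$ gives $\exc{\ff{O\cup G_i}}{G_{i-1}} \ge (1-1/\ell)\ff{O\cup G_{i-1}}$, and taking an outer expectation closes Part~1.

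\textbf{Part 2.} Let $g(S) = \ff{S\cup G_{i-1}} - \ff{G_{i-1}}$ denote the submodular function actually passed to \ptgoneshort. Construct a partition $\{O_1,\ldots,O_\ell\}$ of $O\setminus G_{i-1}$ in the spirit of Claim~\ref{claim:par-A}, with $|O_l|\le m$ and $A_l\cap O \subseteq O_l$; this is possible because $\{A_l\}$ are disjoint with $|A_l|\le m$ (and, as in Theorem~\ref{thm:tgtwo}, when $k\not\equiv 0 \pmod{\ell}$ one first reduces to the $m\ell$-constrained optimum via Proposition~\ref{prop:dif-opt}). Apply Proposition~\ref{prop:sum-marge}(1) to $g$ and split diagonal vs.\ off-diagonal pairs:
\begin{align*}
\sum_l \marge{O}{A_l} &\le \sum_l \marge{O_l}{A_l} + \sum_{1\le l_1<l_2\le\ell}\bigl(\marge{O_{l_1}}{A_{l_2}} + \marge{O_{l_2}}{A_{l_1}}\bigr),
\end{align*}
where the marginals are with respect to $g$. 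Bound the diagonal terms via Property~2 and the off-diagonal pairs via Property~3 of Lemma~\ref{lemma:ptgone}; the $g(A_l')$ contributions combine to $\tfrac{\ell(1+1/m)}{(1-\epsi')^2}\sum_l g(A_l')$, and the $\tau_{\min}$ slack, appearing $\ell + 2\binom{\ell}{2} = \ell^2$ times, telescopes to at most $\ell^2 m\tau_{\min}/(1-\epsi') \le \ell\epsi'\ff{O}/(1-\epsi')$, using $\tau_{\min}=\epsi' M/k$, $m\ell\le k$, and $M\le\ff{O}$. Combining with $g(A_l)\le g(A_l')$ (Property~1) and the lower bound $\sum_l g(O\cup A_l)\ge(\ell-1)\ff{O\cup G_{i-1}}-\ell\ff{G_{i-1}}$ (Proposition~\ref{prop:sum-marge}(2) applied to $h$) yields
\begin{align*}
\bigl(1+\tfrac{\ell}{(1-\epsi')^2}\bigr)\bigl(1+\tfrac{1}{m}\bigr)\sum_l \marge{A_l'}{G_{i-1}} &\ge (\ell-1)\ff{O\cup G_{i-1}} - \ell\ff{G_{i-1}} - \tfrac{\ell\epsi'}{1-\epsi'}\ff{O}.
\end{align*}
Dividing by $\ell$ and using $\exc{\ff{G_i}-\ff{G_{i-1}}}{G_{i-1}} = \tfrac{1}{\ell}\sum_l \marge{A_l'}{G_{i-1}}$ produces the claimed recurrence after an outer expectation.

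\textbf{Main obstacle.} The chief subtlety is that Lemma~\ref{lemma:ptgone}'s guarantees hold only with probability $1-1/n$ per invocation, so the conditional expectations above must be taken on the high-probability success event, with a union bound across the $\ell$ outer iterations yielding the $\oh{1/(\epsi n)}$ failure probability stated in Theorem~\ref{thm:ptgtwo} (on the failure event one simply bounds by $\ff{O}$). A secondary point is reading Property~3 of Lemma~\ref{lemma:ptgone} consistently with its Lemma~\ref{lemma:tg-par-A} analog so that a disjoint partition $\{O_l\}$ of $O\setminus G_{i-1}$ is an admissible input to the bound.
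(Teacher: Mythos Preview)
Your proposal is correct and follows the paper's own argument essentially verbatim: condition on $G_{i-1}$, invoke disjointness of the $A_l$ with Proposition~\ref{prop:sum-marge} for Part~1, and for Part~2 split $\sum_l \marge{O}{A_l\cup G_{i-1}}$ along the partition $\{O_l\}$, bounding diagonal and off-diagonal pieces by Properties~2 and~3 of Lemma~\ref{lemma:ptgone} before averaging over the uniform choice of $G_i$.

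One correction to your ``main obstacle'': the $1-1/n$ in Lemma~\ref{lemma:ptgone} qualifies only the adaptivity and query-complexity bounds; Properties~1--4 are proved deterministically (via Lemma~\ref{lemma:tgone-iteration}, with no probabilistic step), so no failure-event conditioning or union bound is needed for the present lemma---the paper accordingly does none. Your secondary observation about the clause ``if $O_{l_1}=O_{l_2}$'' in Property~3 is well taken: the paper applies Property~3 with distinct blocks of a partition, and the proof of Property~3 handles that general case, so the stated hypothesis appears to be a typo rather than a genuine restriction.
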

\begin{proof}[Proof of Lemma~\ref{lemma:ptgtwo-recur}]
Fix on $G_{i-1}$ at the beginning of this iteration,
Since $\{A_l: l\in [\ell]\}$ are pairwise disjoint sets,
by Proposition~\ref{prop:sum-marge}, it holds that
\[\exc{\ff{O\cup G_i}}{G_{i-1}} = \frac{1}{\ell}\sum_{l\in [\ell]}\ff{O\cup G_{i-1}\cup A_l} \ge \left(1-\frac{1}{\ell}\right)\ff{O\cup G_{i-1}}.\]
Then, by unfixing $G_{i-1}$, the first inequality holds.

To prove the second inequality, also consider fix on $G_{i-1}$ at the beginning of iteration $i$.
By Lemma~\ref{lemma:ptgone},
$\{A_l: l\in [\ell]\}$ are paiewise disjoint sets,
and the following inequalities hold,
\begin{align}
&A_l'\subseteq A_l, \marge{A_l'}{\emptyset} \ge \marge{A_l}{\emptyset}, \forall 1\le l \le \ell \label{inq:ptgtwo-1}\\
&\marge{O_{l}}{A_{l}}\le \frac{\marge{A_{l}'}{\emptyset}}{(1-\epsi')^2}+\frac{\epsi' M}{(1-\epsi')\ell}, \forall 1\le l \le \ell \label{inq:ptgtwo-2}\\
&\marge{O_{l_2}}{A_{l_1}} + \marge{O_{l_1}}{A_{l_2}} \le \frac{1+\frac{1}{m}}{(1-\epsi')^2}\left(\marge{A_{l_1}'}{\emptyset}+\marge{A_{l_2}'}{\emptyset}\right) + \frac{2\epsi' M}{(1-\epsi')\ell}, \forall 1\le l_1 < l_2 \le \ell \label{inq:ptgtwo-3}
\end{align}
Then,
\begin{align*}
\sum_{l\in [\ell]}\marge{O}{A_l\cup G_{i-1}} &\le \sum_{l_1\in [\ell]}\sum_{l_2\in [\ell]}\marge{O_{l_1}}{A_{l_2}\cup G_{i-1}}\tag{Proposition~\ref{prop:sum-marge}}\\
& = \sum_{l \in [\ell]}\marge{O_{l}}{A_{l}\cup G_{i-1}} + \sum_{1\le l_1< l_2 \le \ell} \left(\marge{O_{l_1}}{A_{l_2}\cup G_{i-1}} +\marge{O_{l_2}}{A_{l_1}\cup G_{i-1}}\right) \tag{Lemma~\ref{lemma:tg-par-A}}\\
& \le \sum_{l \in [\ell]}\left(\frac{\marge{A_{l}'}{G_{i-1}}}{(1-\epsi')^2}+\frac{\epsi' M}{(1-\epsi')\ell}\right)\\
&\hspace*{2em}+\sum_{1\le l_1< l_2 \le \ell} \left(\frac{\left(1+\frac{1}{m}\right)}{(1-\epsi')^2}\left(\marge{A_{l_1}'}{G_{i-1}}
+\marge{A_{l_2}'}{G_{i-1}}\right)
+\frac{2\epsi' M}{(1-\epsi')\ell}\right)\tag{Inequalities~\eqref{inq:ptgtwo-2} and~\eqref{inq:ptgtwo-3}}\\
&\le \frac{\ell}{(1-\epsi')^2}\left(1+\frac{1}{m}\right)\sum_{l \in [\ell]}\marge{A_{l}'}{G_{i-1}} + \frac{\epsi' \ell}{1-\epsi'}\ff{O}\tag{$M \le \ff{O}$}
\end{align*}
\begin{align*}
\Rightarrow \left(1+\frac{\ell}{(1-\epsi')^2}\right)\left(1+\frac{1}{m}\right) \sum_{l\in [\ell]}\marge{A_l'}{G_{i-1}} &\ge \sum_{l\in [\ell]}\ff{O\cup A_l\cup G_{i-1}} -\ell\ff{G_{i-1}} - \frac{\epsi' \ell}{1-\epsi'}\ff{O} \tag{Inequality~\eqref{inq:ptgtwo-1}}\\
&\ge \left(\ell-1\right)\ff{O\cup G_{i-1}}-\ell\ff{G_{i-1}} - \frac{\epsi' \ell}{1-\epsi'}\ff{O}
\end{align*}
Thus,
\begin{align*}
&\exc{\ff{G_i} - \ff{G_{i-1}}}{G_{i-1}}  = \frac{1}{\ell}\sum_{l \in [\ell]}\marge{A_{l}'}{G_{i-1}}\\
&\ge \frac{1}{1+\frac{\ell}{(1-\epsi')^2}} \frac{m}{m+1}\left(\left(1-\frac{1}{\ell}\right)  \ff{O\cup G_{i-1}} - \ff{G_{i-1}} - \frac{\epsi'}{1-\epsi'}\ff{O}\right)\tag{Proposition~\ref{prop:sum-marge}}
\end{align*}
By unfixing $G_{i-1}$, the second inequality holds.
\end{proof}
\thmptgtwo*
\begin{proof}[Proof of Theorem~\ref{thm:ptgtwo}]
Since the algorithm contains a for loop
which runs \ptgone $\ell = \oh{1/\epsi}$ times,
by Lemma~\ref{lemma:ptgone},
the adaptivity, query complexity and success probability holds immediately.

Next, we provide the analysis of approximation ratio.
By solving the recurrence in Lemma~\ref{lemma:ptgtwo-recur},
we calculate the approximation ratio of the algorithm as follows,
\begin{align*}
&\ex{\ff{G_{i}}}  \ge \left(1-\frac{1}{\ell}\right) \ex{\ff{G_{i-1}}}
+ \frac{1}{1+\frac{\ell}{(1-\epsi')^2}}\left(1-\frac{1}{m+1}\right)\left(\left(1-\frac{1}{\ell}\right)^i - \frac{\epsi'}{1-\epsi'}\right)\ff{O}\\
\Rightarrow& \ex{\ff{G_\ell}} \ge \frac{\ell}{1+\frac{\ell}{(1-\epsi')^2}}\left(1-\frac{1}{m+1}\right)\left(\left(1-\frac{1}{\ell}\right)^\ell - \frac{\epsi'}{1-\epsi'}\left(1-\left(1-\frac{1}{\ell}\right)^\ell\right)\right)\ff{O}\\
&\hspace*{4em} \ge \frac{\ell-1}{1+\frac{\ell}{(1-\epsi')^2}}\left(1-\frac{1}{m+1}\right)\left(e^{-1} - \frac{\epsi'}{1-\epsi'}\left(1-e^{-1}\right)\right)\ff{O}\\
&\hspace*{4em} \ge \frac{1}{1-\frac{\ell}{k}}\left((1-\epsi')^2 - \frac{2}{\ell}\right)\left(1-\frac{\ell}{k}\right)^2\left(e^{-1} - \frac{\epsi'}{1-\epsi'}\left(1-e^{-1}\right)\right) \ff{O}\\
&\hspace*{4em} \ge \frac{1}{1-\frac{\ell}{k}}\left((1-\epsi')^2 - \frac{2}{\ell}-\frac{2(1-\epsi')^2 \ell}{k}\right)\left(e^{-1} - \frac{\epsi'}{1-\epsi'}\left(1-e^{-1}\right)\right) \ff{O}\\
&\hspace*{4em} \ge \frac{1}{1-\frac{\ell}{k}} \left(1-(e+1)\epsi'\right)\left(e^{-1} - \frac{\epsi'}{1-\epsi'}\left(1-e^{-1}\right)\right) \ff{O}\tag{$\ell\ge \frac{2}{e\epsi'}, k\ge \frac{2(1-\epsi')^2\ell}{e\epsi'-\frac{2}{\ell}}$}\\
&\hspace*{4em} \ge \frac{1}{1-\frac{\ell}{k}} \left(e^{-1}-\epsi\right)\ff{O}\tag{$\epsi' = \frac{\epsi}{2}$}.
\end{align*}
By Inequality~\ref{inq:tgtwo-dif-opt},
the approximation ratio of Alg.~\ref{alg:tgtwo} is $e^{-1}-\epsi$.
\end{proof}

\section{Experimental Setups and Additional Empirical Results}\label{apx:exp}
In the section, we introduce the settings in Section~\ref{sec:exp} further, and discuss more experimental results on \nmon and \mon.

\subsection{Applications}\label{apx:app}
\textbf{Maxcut.}
In the context of the maxcut application, we start with a graph $G=(V, E)$ 
where each edge $ij \in E$ has a weight $w_{ij}$.
The objective is to find a cut that maximizes the total weight of edges crossing the cut.
The cut function $f: 2^V \to \reals$ is defined as follows,
\[f(S) = \sum_{i \in S} \sum_{j \in V\setminus S}w_{ij}, \forall S\subseteq V.\]
This is a non-monotone submodular function.
In our implementation, for simplicity, all edges have a weight of $1$.

\textbf{Revmax.}
In our revenue maximization application,
we adopt the revenue maximization model introduced in \citep{DBLP:conf/www/HartlineMS08}, which we will briefly outline here.
Consider a social network $G=(V, E)$,
where $V$ denotes the buyers.
Each buyer $i$'s value for a good depends on the set of buyers $S$ that already own it, 
which is formulated by 
\[v_i(S)=f_i\left(\sum_{j \in S} w_{ij}\right),\]
where $f_i: \reals \to \reals$ is a non-negative, monotone, concave function, and $w_{ij}$ is drawn independently from a distribution.
The total revenue generated from selling goods to the buyers $S$ is
\[f(S) = \sum_{i \in V\setminus S} f_i\left(\sum_{j \in S} w_{ij}\right).\]
This is a non-monotone submodular function.
In our implementation, 
we randomly choose each $w_{ij}\in (0,1)$,
and $f_i(x) = x^{\alpha_i}$, where $\alpha_i \in (0,1)$ is chosen uniformly randomly.


\subsection{Datasets}\label{apx:data}
\textbf{er} is a synthetic random graph generated by Erd{\"{o}}s-R{\'{e}}nyi model~\citep{erdds1959random} by setting number of nodes $n=100,000$ and edge probability $p=\frac{5}{n}$.

\textbf{web-Google}~\citep{DBLP:journals/im/LeskovecLDM09}  is a web
graph of $n=875,713$ web pages as nodes and $5,105,039$ hyperlinks
as edges.

\textbf{musae-github}~\citep{rozemberczki2019multiscale} is a social network of GitHub developers with $n=37,700$ developers and $289,003$ edges,
where edges are mutual follower relationships between them.

\textbf{twitch-gamers}~\citep{rozemberczki2021twitch} is a social network of $n=168,114$ Twitch users with $6,797,557$ edges, 
where edges are mutual follower relationships between them.


\subsection{Additional Results}\label{apx:nmon}
Fig.~\ref{fig:apx} provides additional results on musae-github dataset with $n=37,700$
and web-Google dataset with $n=875,713$.
It shows that as $n$ and $k$ increase, our algorithms achieve superior on objective values.
The results of query complexity and adaptivity align closely with those discussed in Section~\ref{sec:exp}.
Notably, the number of adaptive round of \ptgtwoshort exceeds $k$ on musae-github,
which may be attributed to the dataset's relatively small size.
\begin{figure}[ht]
    \centering
    \subfigure[musae-github, solution value]{\label{fig:git-val}
    \includegraphics[width=0.31\linewidth]{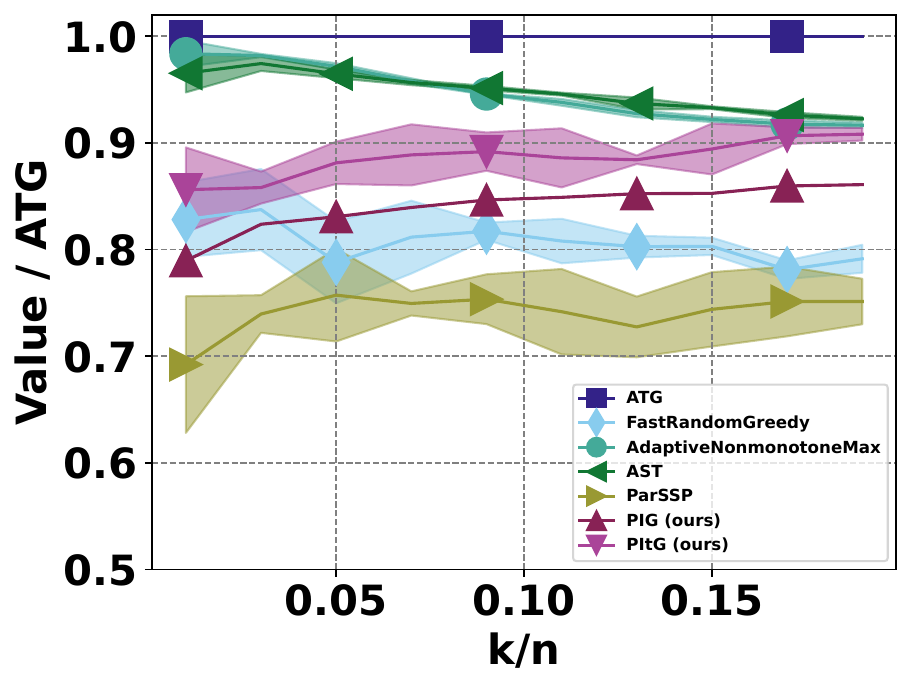}}
    \subfigure[musae-github, query]{\label{fig:git-query}
    \includegraphics[width=0.31\linewidth]{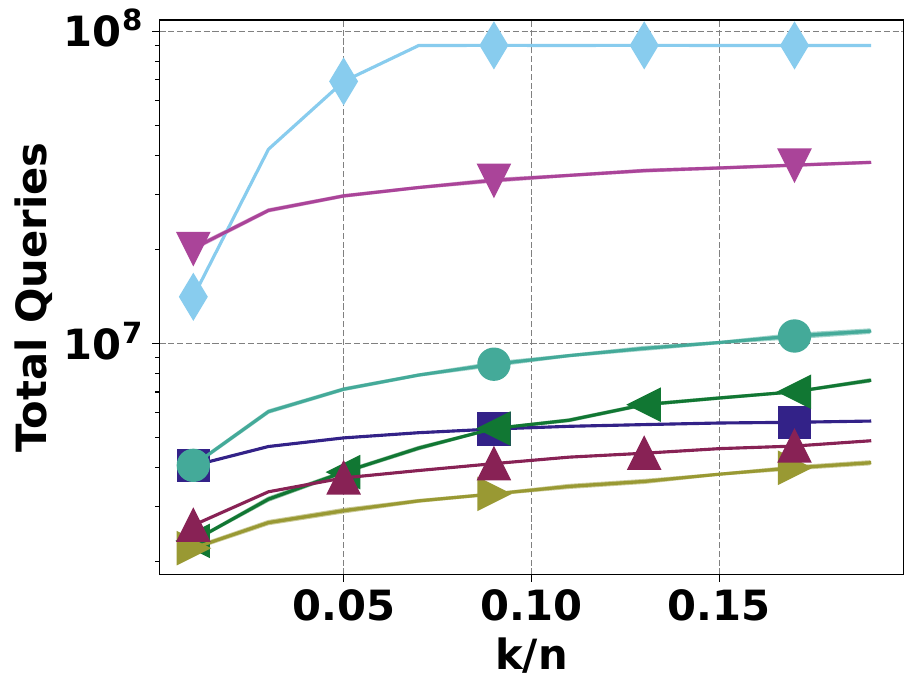}}
    \subfigure[musae-github, round]{\label{fig:git-round}
    \includegraphics[width=0.31\linewidth]{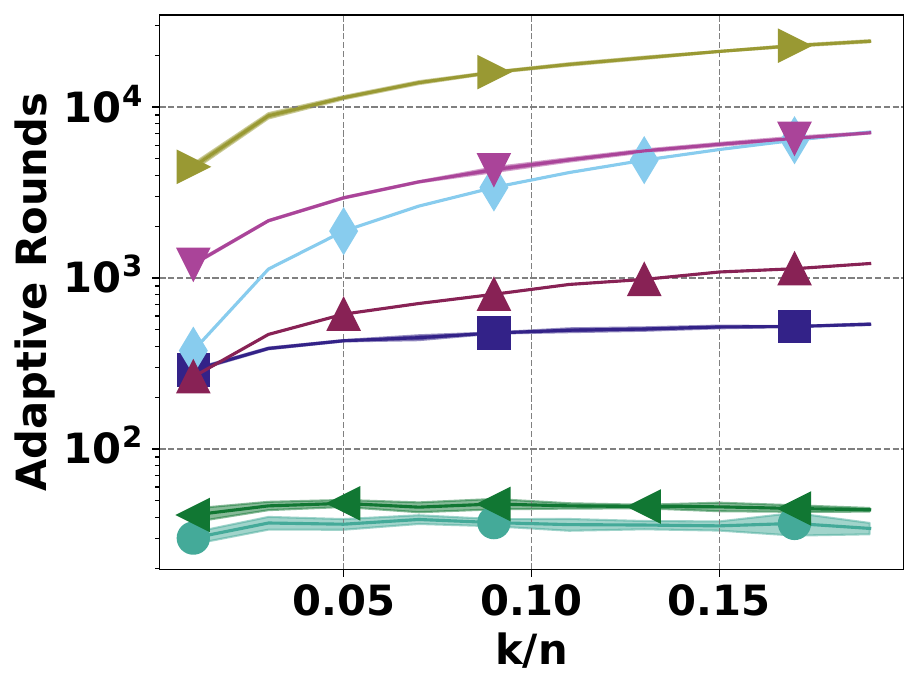}}
    \subfigure[web-Google, solution value]{\label{fig:google-val}
    \includegraphics[width=0.31\linewidth]{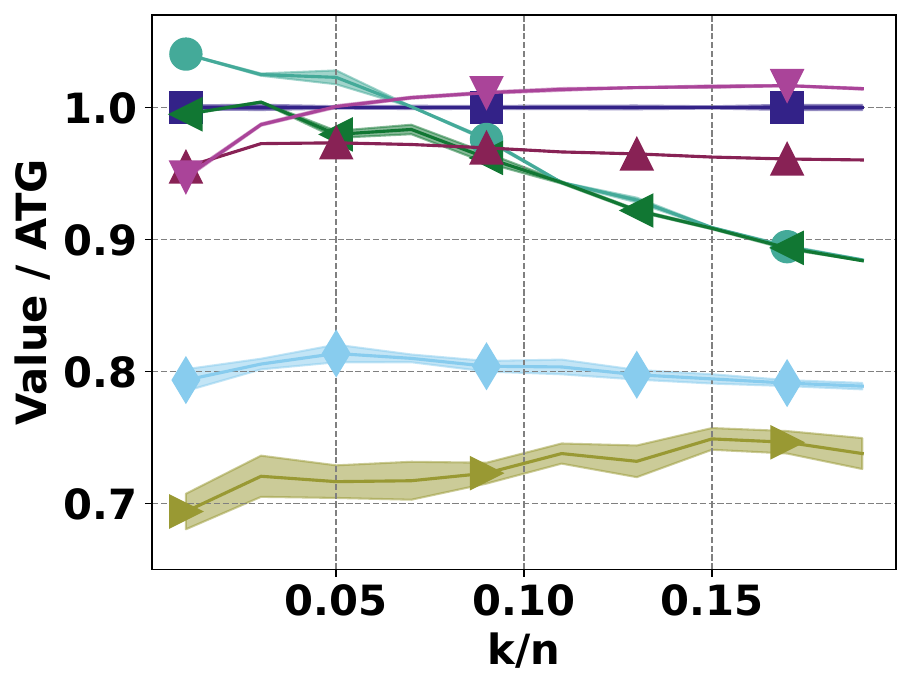}}
    \subfigure[web-Google, query]{\label{fig:google-query}
    \includegraphics[width=0.31\linewidth]{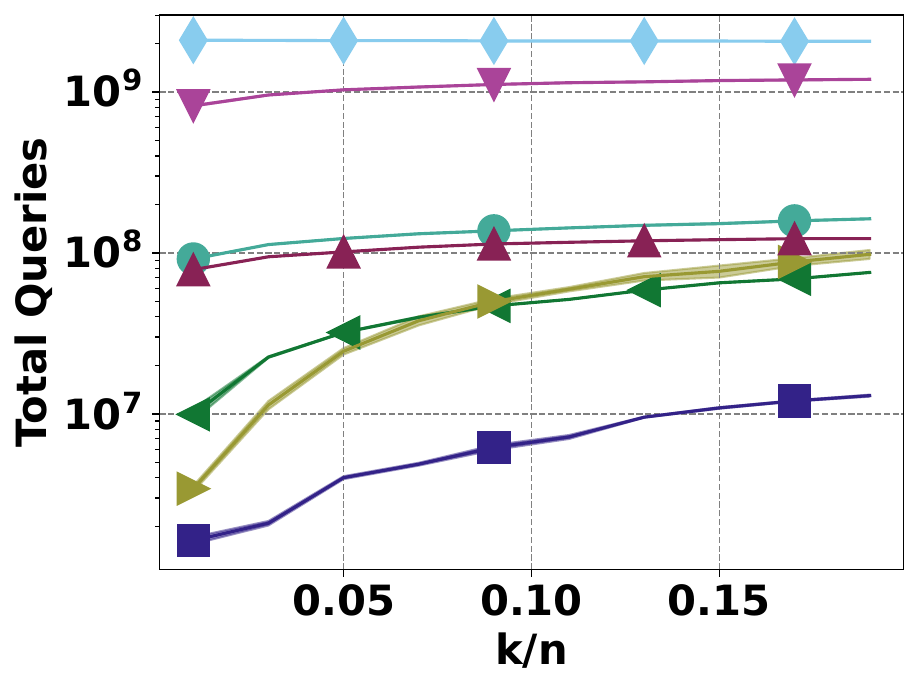}}
    \subfigure[web-Google, round]{\label{fig:google-round}
    \includegraphics[width=0.31\linewidth]{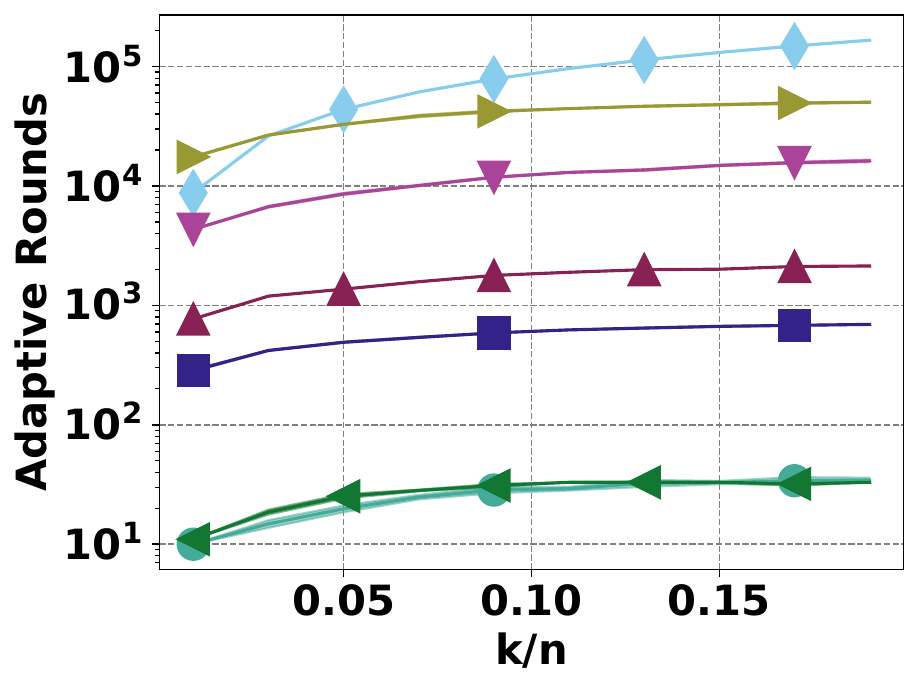}}
    \caption{Results for \revmax on musae-github with $n=37,700$,
    and \maxcut on web-Google with $n=875,713$.}
    \label{fig:apx}
\end{figure}

\end{document}